\newtheorem{thm}{Theorem}[section]
\newtheorem{lemma}{Lemma}[section]
\newtheorem{remark}{Remark}[section]
\newtheorem{cor}{Corollary}[section]
\newtheorem{proposition}{Proposition}[section]
\newenvironment{proof}{\textsc{Proof:}}{\mbox{ } \hfill $\Box$ \vspace{2mm}}
\numberwithin{equation}{section}
\renewcommand{\P}{\mathbb{P}}
\newcommand{\ls}{\lambda^S}
\newcommand{\lo}{\lambda^{O}}
\renewcommand{\L}{\mathcal{L}}
\newcommand{\dd}{\displaystyle}
\begin{document}

\title{{\LARGE \textbf{\textsc{Optimal Reversible Annuities to Minimize the
Probability of Lifetime Ruin}}}}

\author{Ting Wang 
\thanks{Department of Mathematics, University of Michigan, Ann Arbor, MI 48109,  email: wting@umich.edu.}
\and Virginia R. Young \thanks{ Department of Mathematics,
University of Michigan, Ann Arbor, Michigan, 48109, email:vryoung@umich.edu.  V.\ R.\ Young thanks the Nesbitt Professorship of Actuarial Mathematics for financial support.} }

\date{11 June 2009}

\maketitle

\begin{abstract}
We find the minimum probability of lifetime ruin of an investor who can invest in a market with a risky and a riskless asset and who can purchase a reversible life annuity.  The surrender charge of a life annuity is a proportion of its value.  Ruin occurs when the total of the value of the risky and riskless assets and the surrender value of the life annuity reaches zero.  We find the optimal investment strategy and optimal annuity purchase and surrender strategies in two situations:  (i) the value of the risky and riskless assets is allowed to be negative, with the imputed surrender value of the life annuity keeping the total positive; or (ii) the value of the risky and riskless assets is required to be non-negative.  In the first case, although the individual has the flexiblity to buy or sell at any time, we find that the individual will not buy a life annuity unless she can cover all her consumption via the annuity and she will never sell her annuity.  In the second case, the individual surrenders just enough annuity income to keep her total assets positive.  However, in this second case, the individual's annuity purchasing strategy depends on the size of the proportional surrender charge.  When the charge is large enough, the individual will not buy a life annuity unless she can cover all her consumption, the so-called safe level.  When the charge is small enough, the individual will buy a life annuity at a wealth lower than this safe level.

\emph{Key words.} Life annuities, retirement, optimal investment, stochastic control, free-boundary problem
\end{abstract}

\section{Introduction}

The so-called ``annuity puzzle'' is that in financial markets for which annuity purchase is not mandatory, the volume of voluntary purchases by retirees is much smaller than predicted by models, such as those proposed by \cite{Yaari},  \cite{Richard} and  \cite{Davidoff_Brown_Diamond}.  Although life annuities provide income security in retirement, very few retirees choose a life annuity over a lump sum.  According to a recent survey exploring attitudes towards annuitization among individuals approaching retirement in the United Kingdom by \cite{Gardner_Wadsworth}, over half of the individuals in the sample chose not to annuitize given the option.  Whether the option was 100\% annuitization or only partial (50\%) annuitization, the attitude was the same. The dominant reason given for not wanting to annuitize in the survey is the preference for flexibility.  It is well known that annuity income is {\it not} reversible. In other words, annuity holders can neither surrender for a refund nor  short-sell (borrow against) their earlier purchased annuities, even when such a deal is desirable. 

In this paper,  we explore a way to add flexibility to life annuities by proposing a financial innovation, specifically a {\it reversible annuity}, an immediate life annuity with a surrender option.  The option to surrender allows an annuity holder to either borrow against or surrender any portion of her annuities at any time when she is still alive.  The purchase value of this reversible annuity is determined by the expected present value of future payments to the annuity holder, which follows the same principle as regular annuities.  The surrender value is set as a fixed proportion of  its purchase value at the time of surrendering. The surrender value can also be viewed as the purchasing value less a proportional surrender charge, which is a combination of transaction cost, operating charge, and compensation for adverse selection.  To explore how this reversible annuity would work for retirees as a reliable flow of  income, as well as an asset able to be surrendered under certain personal circumstance, we investigate the optimal investment strategy and optimal annuity purchase and surrender strategies for an individual who seeks to minimize the probability that she outlives her wealth, also called the probability of {\it lifetime ruin}. In other words, we assume that the retiree consumes at a exogenous level, and we determine the optimal investment strategy, as well as the optimal time to annuitize or to surrender, in order to minimize the probability that wealth will reach zero before her death. 

As a risk metric, the probability of lifetime ruin is widely used to investigate optimization problems  faced by retirees in a financial market.  This metric was first introduced by  \cite{Milevsky_Robinson} in a static environment  and was extended by \cite{Young} to a stochastic environment without immediate life annuities.  A recent paper by  \cite{Milevsky_Moore_Young} determined the optimal dynamic investment policy for an individual who consumes at a specific rate, who invests in a complete financial market, and who can buy irreversible immediate life annuities.  Milevsky, Moore, and Young show that the individual will not annuitize any of her wealth until she can fully cover her desired consumption with an immediate life annuity.  Additionally, \cite{Bayraktar_Young} investigate the optimal strategy for an retiree in a financial market with deferred (not immediate) life annuities.  Within the topic of minimizing probability of lifetime ruin in a complete financial market without life annuities, \cite{Bayraktar_Moore_Young} consider the case for which the exogenous consumption is random, and in \cite{Bayraktar_Young2}, consumption is ratcheted (that is, it is a non-decreasing function of maximum wealth).  \cite{Bayraktar_Young3} investigate the optimal strategy when consumption level is deterministic but borrowing is constrained. 

In contrast to the literature mentioned above, we allow an individual not only to buy an immediate life annuity, but also to surrender existing immediate life annuities with a proportional surrender charge.  This reversibility of life annuities and the incompleteness of the annuity market (due to the proportional surrender charge) creates a more complex optimization environment and makes the problem mathematically challenging.  Our model can be viewed as a generalization of the model by \cite{Milevsky_Moore_Young} in which annuities are irreversible,  and the limiting case for which the surrender value of existing annuity approaches zero is consistent with their study. 

Our work is the first to investigate the optimal strategies for a retiree in a market with reversible immediate life annuities. We comprehensively analyze the annuitization and investment strategies for such a retire.  We focus on how the proportional surrender charge, which ranges from $0\%$ to $100\%$ of the purchasing value of annuity,  affects an individual's optimal strategies.  We predict that, when the surrender charge is low enough,  the individual has incentive to annuitize partially. This distinguishes our model from the one with irreversible annuities, in which an individual is only willing to fully annuitize. This difference shows that the flexibility offered by reversible annuities might be able to resolve the ``annuity puzzle.'' 

The remainder of the paper is organized as follows:  In Section $2$, we present the financial market in which the individual invests her wealth.  In addition to investing in riskless and risky assets, the individual can purchase reversible immediate life annuities.  In Section $3$, we consider the life annuity as part of her total wealth, thereby allowing her assets to have negative value as long as the imputed surrender value of her annuity makes her total wealth positive.  We prove a verification theorem for the minimal probability of lifetime ruin in this case, and we obtain the minimal probability, along with optimal investment and annuitization strategies. In Section $4$, we consider the case for which individual is forced to keep the value of her riskless and risky assets non-negative (excluding the surrender value of the annuity) by surrendering the annuity when needed.  It turns out that the optimal annuitization strategy depends on the size of the proportional surrender charge.  We consider the case when the charge is large in Section $4.2.1$, and in Section $4.2.2$, we discuss the case when the charge is small.

\section{Minimizing the Probability of Lifetime Ruin}

\bigskip

In this section, we describe the financial market in which the individual can invest her wealth, and we formulate the problem of minimizing the probability of lifetime ruin in this market.  We allow the individual to purchase and surrender her reversible life annuity at any time.

\subsection{Financial model}

We consider an individual with future lifetime described by the random variable $\tau _{d}$.  Suppose $\tau _{d}$ is an exponential random variable with parameter $\lambda ^{S}$, also referred to as the force of mortality or
hazard rate; in particular, $\mathbb{E}[\tau _{d}]=1/\lambda ^{S}$.  The superscript $S$ indicates that the parameter equals the individual's subjective belief as to the value of her hazard rate.

We assume that the individual consumes wealth at a constant rate of $c\geq 0$; this rate might be given in real or nominal units.  One can interpret $c$ as the minimum net consumption level below which the individual cannot (or will not) reduce her consumption further; therefore, the minimum probability of lifetime ruin gives a lower bound for the probability of ruin under any consumption function bounded below by $c$.

The individual can invest in a riskless asset, which earns interest at the constant rate $r\geq 0$. Also, she can invest in a risky asset whose price satisfies
\begin{equation}
dS_{t}=\mu \, S_{t} \, dt + \sigma \, S_{t} \, dB_{t},\quad S_{0}=S>0,
\end{equation}%
\noindent in which $\mu >r$, $\sigma >0$, and $B$ is a standard Brownian motion with respect to a filtration $\mathbb{F}=\{\mathcal{F}_{t}\}$ of a probability space $(\Omega ,\mathcal{F},\P )$. We assume that $B$ is
independent of $\tau _{d}$, the random time of death of the individual. If $c $ is given as a real rate of consumption (that is, inflation adjusted), then we also express $r$ and $\mu $ as real rates.

Moreover, an individual can buy any amount of reversible immediate life annuity or surrender any portion of her existing annuity income and receive some fraction of its value. The purchase price of an immediate life annuity that pays \$1 per year continuously until the insured's death is given by
\begin{equation}
\bar{a}=\int_{0}^{\infty }e^{-rs}e^{-\lo s}\,ds=\frac{1}{r+\lo},  \label{imm-ann}
\end{equation}
in which $\lo>0$ is the constant objective hazard rate that is used to price annuities. In other words, in return for each \$$\overline{a}$ the individual pays for an immediate life annuity, she receives \$$1$ per year of continuous  annuity income until she dies.

Due to the reversibility of the life annuity, she can surrender any amount of the annuity she owns.  The surrender value for \$$1$ of annuity income is $(1-p) \overline{a}$ with $0<p\leq 1$.  The factor $p$ is the proportional surrender charge. In other words, the individual can get \$$(1-p) \overline{a}$ dollars from the issuer by giving up \$$1$ of annuity income.  Notice that the surrender value is less than the purchase price, and the difference is the surrender charge (in dollars).

Let $W_t$ denote the amount of wealth the individual has invested in the risky and riskless assets at time $t$, with $\pi _{t}$ in the risky asset and $W_t - \pi_t$ in the riskless.  Let $A_{t}^{+}$ denote the cumulative amount of annuity income bought on or before time $t$, and let $A_{t}^{-}$ denote the cumulative amount of annuity income
surrendered on or before time $t$.  Then, $A_{t}=A_{t}^{+}-A_{t}^{-}$ represents the cumulative amount of immediate life annuity income at time $t$. The investment and annuitization strategy $\{\pi _{t},A_{t} \}_{t\geq
0} $ is said to be \textit{admissible} if the processes $\{\pi _{t}\}_{t\geq 0}$ and $\{A_{t}^{\pm }\}_{t\geq 0}$ are adapted to the filtration $\mathbb{F}$, if $\int_{0}^{t}\pi_{s}^{2}\,ds<\infty $, almost surely, for all $t\geq 0$, and if $A_t \ge 0$, almost surely, for all $t \ge 0$.  The wealth dynamics of the individual for a given admissible strategy are given by
\begin{equation}
dW_{t}=[rW_{t-}+(\mu -r)\pi _{t-}-c+A_{t-}]dt+\sigma \pi _{t-}dB_{t}- \bar{a}dA_{t}^{+}+\bar{a}(1-p)dA_{t}^{-},\quad W_{0}=w\geq 0.
\label{eq:wealth}
\end{equation}

By \textquotedblleft lifetime ruin,\textquotedblright\  we mean that the individual's wealth reaches the line $w=-(1-p)\bar{a}A$ before she dies.  We denote the time of ruin by $\tau _{0}\triangleq \{t \ge 0:W_{t}^{\pi ,A} + (1-p)
\bar{a}A_{t}^{\pi ,A} \le 0\}$.  In Section $3$, we allow wealth (namely, the value of the riskless and risky assets) to be negative with the individual effectively borrowing against her annuity income.  Then, in Section 4, we require that wealth remain non-negative.  Note that $\tau _{0}$ is independent of $\tau _{d}$.  The minimum probability of lifetime ruin $\psi $ for the individual at time $0$ is defined by

\begin{equation}
\psi (w,A) \triangleq \inf_{\{\pi _{t},A_{t}\}}\P \left[\tau _{0}<\tau _{d}\Big|W_{0}=w,A_{0}=A,\tau _{d}>0,\tau _{0}>0\right].  \label{exp:ruin}
\end{equation}

\begin{remark}
Notice that because we assume that the hazard rates $\lambda ^{S}$ and $\lambda ^{O}$, as well as the financial parameters $r$, $\mu $, and $\sigma $, are constant, $\psi $ only depends on the state variables $w$ and $A$ and not upon time.
\end{remark}

\begin{remark}
We can derive an equivalent form for the minimum probability of ruin due to
the independence of the $\tau _{d}$ from $\tau _{0}:$ 
\begin{eqnarray}
\psi (w,A) &=&\inf_{\{\pi _{t},A_{t}\}}\P \left[\tau _{0}<\tau _{d}\Big| W_{0}=w,A_{0}=A,\tau _{d}>0,\tau _{0}>0\right]  \notag \\
&=&\inf_{\{\pi _{t},A_{t}\}}\mathbb{E}\left[\int_{0}^{\infty }\lambda^{S}e^{- \lambda ^{S}t} \, \mathbf{1}_{\{0\leq \tau _{0}\leq t\}} \, dt \, \Big| \, W_{0}=w,A_{0}=A,\tau _{d}>0,\tau _{0}>0\right]  \notag \\
&=&\inf_{\{\pi _{t},A_{t}\}}\mathbb{E}\left[\int_{\tau _{0}}^{\infty }\lambda^{S}e^{- \lambda ^{S}t} \, dt \,\Big| \, W_{0}=w,A_{0}=A,\tau _{d}>0,\tau _{0}>0\right]  \notag \\
&=&\inf_{\{\pi _{t},A_{t}\}}\mathbb{E}\left[e^{-\lambda ^{S}\tau _{0}} \, \Big| \, W_{0}=w,A_{0}=A,\tau _{d}>0,\tau _{0}>0\right] .  \label{exp:ruin1}
\end{eqnarray}
We will use this expression in our proof of verification theorem in next section.
\end{remark}

\begin{remark}
\cite{Milevsky_Moore_Young} show that if one only allows irreversible life annuities, then the individual will not buy a life annuity until her wealth is large enough to cover all her consumption.  Specifically, if $w\geq (c - A) \overline{a}$, then it is optimal for the individual to spend $(c - A)\overline{a}$ to buy an immediate annuity that will pay at the continuous rate $c - A$ for the rest of her life.  This income, together with the prior income of $A$, will cover her desired consumption rate of $c$.  In this case, the individual will not ruin, under the convention that if her net consumption rate becomes $c$, then she is not considered ruined even if her wealth is 0. (The latter occurs if her wealth is identically $(c-A)\overline{a}$ immediately before buying the annuity.)
\end{remark}

\section{NO BORROWING RESTRICTION}

In this section, we consider the case in which the individual's wealth $w$ is allowed to be negative, as long as $w + (1- p) \bar a A$ is positive.  Effectively, the individual is allowed to borrow against her life annuity income.

\subsection{\label{sec:i-motivation}Motivation for the Hamilton-Jacobi-Bellman Variational Inequality}

Let us first consider the strategies one can choose to minimize the  probability of ruin. Before ruin occurs or the individual dies, she can execute one or more of the following strategies:  (i) purchase additional annuity income, (ii) surrender existing annuity income, or (iii) do neither.

Now, suppose that at point $(w,A)$, it is optimal \textit{not} to purchase or surrender any annuity income.  In this case, we expect $\psi $ will satisfy the equation 
\begin{equation}\label{HJBVi}
\lambda ^{S}\psi =(rw-c+A)\psi _{w}+\min_{\pi }\left[ (\mu -r)\pi \psi _{w}+\frac{1}{2}\sigma ^{2}\pi ^{2}\psi _{ww}\right] . 
\end{equation}
Because the above policy is in general suboptimal, (\ref{HJBVi}) holds as an
inequality; that is, for all $(w,A)$, 
\begin{equation}
\lambda ^{S}\psi \leq (rw-c+A)\psi _{w}+\min_{\pi }\left[ (\mu -r)\pi \psi
_{w}+\frac{1}{2}\sigma ^{2}\pi ^{2}\psi _{ww}\right] .
\end{equation}

As we shall prove later, no continuous purchase of lifetime annuity income is optimal; that is, the problem of purchasing or surrendering annuity is one of singular control. Thus, if at the point $(w,A)$, it is optimal to purchase annuity income instantaneously, then the individual moves instantly from $(w,A)$ to $(w-\bar{a}\Delta A,A+\Delta A)$, for some $\Delta A>0$.  The optimality of this decision implies that 
\begin{equation}
\psi (w,A)=\psi (w-\bar{a}\Delta A,A+\Delta A),
\end{equation}
which in turn yields 
\begin{equation}
\bar{a}\psi _{w}(w,A) = \psi _{A}(w,A).  \label{buyi}
\end{equation}

Similarly, if it is optimal to surrender annuity income at the point $(w,A)$, the
following equation holds:
\begin{equation}
\psi (w,A)=\psi (w+(1-p)\bar{a}\Delta A,A-\Delta A),
\end{equation}
which implies
\begin{equation}  \label{selli}
(1-p)\bar{a}\psi_w(w,A) = \psi_A(w,A).
\end{equation}
Notice that the surrender value is a portion of the value of annuity determined by the proportional surrender charge $p$.

In general, such purchasing or surrendering policies are suboptimal; therefore, (\ref{buyi}) and (\ref{selli}) hold as inequalities and become
\begin{equation}  \label{ineqi1}
\bar{a}\psi_w(w,A) \le \psi_A(w,A),
\end{equation}
and
\begin{equation}
(1-p)\bar{a}\psi _{w}(w,A) \ge \psi _{A}(w,A).  \label{ineqi2}
\end{equation}
Because the individual will either buy additional annuity income, surrender existing annuity income, or neither, we expect that the probability of lifetime ruin is a solution of the following Hamilton-Jacobi-Bellman (HJB) variational inequality 
\begin{equation}
\begin{split}
\max \Bigg\{& \lambda ^{S}\psi - (rw-c+A)\psi _{w} - \min_{\pi }\left[ (\mu -r)\pi \psi _{w}+\frac{1}{2}\sigma ^{2}\pi ^{2}\psi _{ww}\right] , \\
& \bar{a}\psi _{w}(w,A)-\psi _{A}(w,A),\psi _{A}(w,A)-(1-p)\bar{a}\psi_{w}(w,A) \Bigg\}=0.
\end{split}
\label{HJBIV}
\end{equation}

Define $\dd{w_{s}(A)\triangleq \frac{pc}{\frac{1}{\bar{a}}-(1-p)r}-\bar{a}A}$, in which $A$ is the existing annuity income.  At the point $(w_{s}(A),A)$, suppose an individual borrows $\tilde w(A) \triangleq \dd{\frac{(c-A)\bar a - w_s(A)}{1 - r \bar a}}$ at the interest rate $r$.  She, then, has wealth $w_s(A) + \tilde w(A)$, which she spends to buy $\dfrac{1}{\bar{a}}(w_{s}(A)+\tilde w(A))$ additional life annuity income.  Therefore, the total annuity income she has is $A+\dfrac{1}{\bar{a}}
(w_{s}(A)+\tilde w(A)) = r \tilde w(A) + c$, which is just enough to cover the interest for the debt and the consumption and thereby ensure that lifetime ruin is impossible.  Note that $w_{s}(A)$ is the minimum required wealth to execute this strategy, so we call it the {\it safe level} for the case in which we allow wealth $w$ to be negative.  If asset and annuity income initially satisfy $w\geq w_{s}(A)$, then the individual will immediately execute this strategy to guarantee that her probability of lifetime ruin is zero.  It follows that
\begin{equation}
\psi (w,A)=0,
\end{equation}
for $w \ge w_s(A)$.

Recall that ruin occurs when $w+ (1-p)\bar{a} A \leq 0$, from which it follows that
\begin{equation}
\psi (w,A)=1,
\end{equation}
for  $w \le \underline{w}(A)\triangleq -(1-p)\bar{a}A$.

The two boundaries $w_{s}(A)$ and $\underline{w}(A)$ meet at $A=\dd{\frac{c}{1-(1-p) r \bar a}} > 0$ as in Figure \ref{fig:region}. Thus, it remains to solve the minimum probability of ruin in the region \hfill \break $\mathcal{D} \triangleq \left\{(w,A):\underline{w}(A)\leq
w\leq w_{s}(A), \, 0\leq A<\dfrac{c}{1-(1-p)r \bar a}\right\}$.

\subsection{Verification Theorem}

The discussion in Section $3.1$ motivates the following verification theorem:

\begin{thm}
\label{lem:verf-lemma} For any $\pi \in \mathbb{R},$ define the functional operator $\mathcal{L}^{\pi }$ through its action on a test function $f$ by
\begin{equation}
\L ^{\pi }f=[rw+(\mu -r)\pi -c+A]f_{w}+\frac{1}{2}\sigma ^{2}\pi
^{2}f_{ww}-\lambda ^{S}f,
\end{equation}
Let $v=v(w,A)$ be a non-increasing, non-negative, convex function of $w$ that is twice-differentiable with respect to $w,$ except possibly at $w=w_{s}(A)$ where we assume that it has right- and left-derivatives, and that is differentiable with respect to $A$. Suppose $v$ satisfies the following conditions on $\mathcal{D}$:
\begin{enumerate}
\item $\L ^{\pi }v(w,A)\geq 0$ for any $\pi \in \mathbb{R}$.

\item $\overline{a}v_{w}(w,A)-v_{A}(w,A)\leq 0$.

\item $(1-p)\overline{a}v_{w}(w,A)-v_{A}(w,A)\geq 0$.

\item $v(\underline{w}(A),A) = 1$, where $\underline{w}(A)$ is the lower
boundary of wealth for the problem.
\end{enumerate}
Then, 
\begin{equation}
v(w,A)\leq \psi (w,A),
\end{equation}
\noindent on $\mathcal D$.

\end{thm}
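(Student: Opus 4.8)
The plan is the classical martingale verification argument. Fix an admissible strategy $\{\pi_t,A_t\}$ with $W_0=w$, $A_0=A$ and $(w,A)\in\mathcal D$, let $\tau_0$ be the corresponding ruin time, and let $\tau_s$ be the first time the process $(W_t,A_t)$ reaches the safe level $\{w=w_s(A)\}$. The goal is to show that $M_t\triangleq e^{-\lambda^S t}\,v(W_t,A_t)$ is a submartingale on $[0,\tau_0\wedge\tau_s]$; then optional stopping, together with the boundary behaviour of $v$ and the representation $\psi(w,A)=\inf_{\{\pi_t,A_t\}}\E[e^{-\lambda^S\tau_0}]$ from \eqref{exp:ruin1}, will yield $v(w,A)\le\psi(w,A)$. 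Two preliminary observations: since $v\ge 0$, is non-increasing in $w$, and $v(\underline w(A),A)=1$ (condition 4), we have $0\le v\le 1$ on $\mathcal D$, so $v$ is bounded there; and since $\psi\equiv 0$ for $w\ge w_s(A)$, the natural companion of condition 4 is $v(w_s(A),A)=0$, which I will use at time $\tau_s$ (equivalently, one extends $v$ by $0$ on $\{w\ge w_s(A)\}$, the extension staying non-increasing and, by convexity together with $v(w_s(A),A)=0$, convex).

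To obtain the submartingale property, apply the generalized It\^o formula to $M_t$ on $[0,\tau_0\wedge\tau_s]$, where the process stays in $\mathcal D$ and $v$ is twice-differentiable in $w$ (the kink at $w_s(A)$ is reached only at $\tau_s$, where we stop, and in any case is controlled by convexity via a non-negative local-time term if one prefers not to stop there). Using \eqref{eq:wealth} and decomposing $A_t$ into its continuous part and its jumps, $dM_t$ is the sum of: the absolutely continuous drift $e^{-\lambda^S t}(\L^{\pi_t}v)(W_{t-},A_{t-})\,dt$, which is $\ge 0$ by condition 1 applied with $\pi=\pi_t$; the local martingale $e^{-\lambda^S t}\sigma\pi_{t-}v_w(W_{t-},A_{t-})\,dB_t$; the continuous-purchase term $e^{-\lambda^S t}\bigl(v_A-\bar a v_w\bigr)(W_{t-},A_{t-})\,dA_t^{+,c}\ge 0$ by condition 2; the continuous-surrender term $e^{-\lambda^S t}\bigl((1-p)\bar a v_w-v_A\bigr)(W_{t-},A_{t-})\,dA_t^{-,c}\ge 0$ by condition 3; and, for each jump of $A$, either $e^{-\lambda^S t}\bigl[v(W_{t-}-\bar a\Delta A_t^{+},A_{t-}+\Delta A_t^{+})-v(W_{t-},A_{t-})\bigr]=e^{-\lambda^S t}\int_0^{\Delta A_t^{+}}\bigl(v_A-\bar a v_w\bigr)(W_{t-}-\bar a u,A_{t-}+u)\,du\ge 0$ by condition 2, or the analogous surrender expression, which is $\ge 0$ by condition 3. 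One should also check that the purchase and surrender rays stay inside the region where $v$ and conditions 1--3 are available: surrendering moves $(w,A)$ toward a larger value of $w_s$ while keeping $w$ above $\underline w$, and purchasing lowers $w$ and $w_s(A)$ by the same amount, so a purchase can only exit $\mathcal D$ through the ruin boundary $\underline w$, which is harmless. Hence $M_t$ is, up to a local martingale, non-decreasing on $[0,\tau_0\wedge\tau_s]$.

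Now localize the stochastic integral with stopping times $\tau_n\uparrow\tau_0\wedge\tau_s$, so that $v(w,A)=M_0\le\E[M_{t\wedge\tau_n}]$; letting $n\to\infty$ and using $0\le M\le 1$ (bounded convergence) gives $v(w,A)\le\E[M_{t\wedge\tau_0\wedge\tau_s}]$ for every $t$. Let $t\to\infty$. On $\{\tau_0\le\tau_s,\,\tau_0<\infty\}$ ruin occurs on the line $W_{\tau_0}=\underline w(A_{\tau_0})$, so condition 4 gives $M_{t\wedge\tau_0\wedge\tau_s}\to e^{-\lambda^S\tau_0}$; on $\{\tau_s<\tau_0\}$, $M_{t\wedge\tau_0\wedge\tau_s}\to e^{-\lambda^S\tau_s}v(w_s(A_{\tau_s}),A_{\tau_s})=0$; and on $\{\tau_0\wedge\tau_s=\infty\}$, boundedness of $v$ gives $M_t\to 0$. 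Bounded convergence then yields $v(w,A)\le\E[e^{-\lambda^S\tau_0}\mathbf{1}_{\{\tau_0\le\tau_s,\,\tau_0<\infty\}}]\le\E[e^{-\lambda^S\tau_0}]$ (with $e^{-\lambda^S\cdot\infty}\triangleq 0$) for the chosen strategy; taking the infimum over admissible strategies and invoking \eqref{exp:ruin1} gives $v(w,A)\le\psi(w,A)$ on $\mathcal D$.

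I expect the main obstacle to be the bookkeeping in the generalized It\^o formula for the singular and impulsive annuity controls $A^{\pm}$: establishing that their continuous-singular and jump parts contribute non-negatively (by integrating conditions 2 and 3 along the transaction rays and checking that those rays remain where $v$ and the conditions hold), and handling the possible kink of $v$ at $w=w_s(A)$, for which the convexity hypothesis is exactly what is needed --- either one stops at $\tau_s$ and uses $v(w_s(A),A)=0$, or one continues past $\tau_s$ and absorbs the kink into a non-negative local-time term. The localization of the stochastic integral and the passage to the limit are routine given the a priori bound $0\le v\le 1$.
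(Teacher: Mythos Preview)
Your proposal is correct and follows essentially the same martingale verification route as the paper: apply It\^o's formula to $e^{-\lambda^S t}v(W_t,A_t)$, use conditions 1--3 to make the drift, continuous-singular, and jump contributions non-negative, localize the stochastic integral, pass to the limit using $0\le v\le 1$, and invoke condition~4 at ruin together with \eqref{exp:ruin1}.

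The one substantive difference is your introduction of the extra stopping time $\tau_s$ and the side assumption $v(w_s(A),A)=0$, which is \emph{not} among the theorem's hypotheses. The paper dispenses with this entirely: it localizes only with $\tau_n\triangleq\inf\{t:\int_0^t\pi_s^2\,ds\ge n\}$ and works on $[0,\tau_0\wedge\tau_n]$, exploiting the convexity and monotonicity of $v$ to bound $v_w^2(W_t,A_t)\le v_w^2(\underline w(A_t),A_t)$ so that the stochastic integral is a true martingale; it then sends $\tau_n\nearrow\infty$ and reaches $\tau_0$ directly, with the $\{\tau_0=\infty\}$ contribution killed by the exponential factor. Your approach buys a cleaner treatment of the possible kink at $w_s(A)$ (you stop before reaching it), at the cost of smuggling in an assumption the statement does not provide; the paper's approach proves the theorem exactly as stated but is somewhat cavalier about the behaviour of the process at and beyond $w=w_s(A)$. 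Either way the core argument is the same.
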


\begin{proof}
Suppose $\{ \pi_t \}$ is an admissible investment strategy, and define $\tau _{n}\triangleq \{t \geq 0:\int_{0}^{t}\pi _{t}^{2} \, dt\geq n\}$ and $\tau \triangleq \tau _{0}\wedge \tau _{n}$, which is a stopping time with respect to the filtration $\mathbb{F}$.   Then, by using It\^{o}'s formula for semi-martingales, we can write
\begin{equation}
\begin{split}
e^{- \lambda^{S} \tau}v(W_{\tau },A_{\tau })&
=v(w,A)+\int_{0}^{\tau }e^{- \lambda^{S} t}v_{w}(W_{t},A_{t}) \, \sigma \, \pi _{t}dB_{t} +\int_{0}^{\tau }e^{- \lambda ^{S} t} \L^{\pi_{t}}v(W_{t},A_{t})dt \\
& \quad +\int_{0}^{\tau }e^{- \lambda ^{S} t}\left[ v_{A}(W_{t},A_{t})- \bar{a}v_{w}(W_{t},A_{t})\right] d(A_{t}^{+})^{(c)} \\
& \quad +\int_{0}^{\tau }e^{- \lambda ^{S} t}\left[ (1-p)\bar{a}v_{w}(W_{t},A_{t})-v_{A}(W_{t},A_{t})\right] d(A_{t}^{-})^{(c)} \\
& \quad +\sum_{0\leq t\leq \tau }e^{- \lambda ^{S} t}\left[v(W_{t},A_{t})-v(W_{t-},A_{t-})\right] .
\end{split}
\label{eq:Ito}
\end{equation}

\noindent Here, $(A^{\pm })^{(c)}$ is the continuous part of $A^{\pm }$; that is, 
\begin{equation}
(A_{t}^{\pm })^{(c)}\triangleq A_{t}^{\pm }-\sum_{0\leq s\leq t}(A_{s}^{\pm }-A_{s-}^{\pm }).
\end{equation}

Since $v$ is non-increasing and convex in $w$, $v_{w}^{2}(w,A)\leq v_{w}^{2}(\underline{w}(A),A)$ for $w\geq \underline{w}(A)$. Therefore, 
\begin{equation}
\mathbb{E}\left[ \int_{0}^{\tau }e^{-2\lambda^{S}t} \, v_{w}^{2}(W_{t},A_{t}) \, \sigma ^{2} \, \pi _{t}^{2} \,dt \, \bigg| \, W_{0}=w,A_{0}=A\right] <\infty ,
\end{equation}
\noindent which implies that 
\begin{equation}
\mathbb{E}\left[ \int_{0}^{\tau }e^{-\lambda^{S} t} \, v_{w}(W_{t},A_{t}) \, \sigma \, \pi _{t} \, dB_{t} \, \bigg| \, W_{0}=w,A_{0}=A\right] =0.  \label{eq:sint}
\end{equation}

By taking expectations of equation (\ref{eq:Ito}), as well as using (\ref{eq:sint}) and Conditions 1, 2, and 3 in the statement of the theorem, we obtain 
\begin{equation}
\mathbb{E}\left[ e^{-\lambda^{S}\tau }v(W_{\tau },A_{\tau })\Big| W_{t}=w,A_{t}=A\right]\geq v(w,A).  \label{inequ}
\end{equation}
\noindent In deriving (\ref{inequ}), we also use the fact that 
\begin{equation}
\sum_{0\leq t\leq \tau }e^{-\lambda^{S}t}\left[
v(W_{t},A_{t})-v(W_{t-},A_{t-})\right] \geq 0,
\end{equation}

\noindent because Assumptions 2 and 3 imply that $v$ is non-decreasing in the direction of purchase and surrender.

Since $\tau_n \nearrow \infty$ and $v$ is bounded, applying the dominated convergence theorem to (\ref{inequ}) yields

\begin{equation}
\mathbb{E}\left[ e^{-\lambda ^{S}\tau _{0}}v(W_{\tau _{0}},A_{\tau _{0}}) \Big|W_{0}=w,A_{0}=A, \tau_d > 0, \tau_0 > 0 \right] \geq v(w,A).  \label{eq:v-T-v-t}
\end{equation}

\noindent By using Assumption $4$, one can rewrite (\ref{eq:v-T-v-t}) as

\begin{equation}
v(w,A)\leq \mathbb{E}\left[ e^{-\lambda ^{S}\tau _{0}}\Big|W_{0}=w,A_{0}=A, \tau_d > 0, \tau_0 > 0 \right] .  \label{eq:vleqanyprob}
\end{equation}
\noindent From this expression and from (\ref{exp:ruin1}), we infer that 
\begin{equation}
\begin{split}
v(w,A)& \leq \inf_{\{ \pi_t, A_t \} }\mathbb{E}\left[ e^{-\lambda ^{S}\tau_0 }\bigg|W_{0}=w,A_{0}=A, \tau_d > 0, \tau_0 > 0 \right] \\
& =\psi (w,A). 
\end{split}
\label{eq:vleq}
\end{equation}
\noindent

\end{proof}

We will use the following corollary of Theorm \ref{lem:verf-lemma} to determine $\psi $, the minimum probability of ruin, along with an optimal investment and annuitization strategy.

\begin{cor}
\label{cor:verf-lemma}
Suppose $v$ satisfies the conditions in Theorem \ref{lem:verf-lemma} and additionally is the probability of ruin associated with an admissible strategy, then $v=\psi $ on $D$ and the associated strategy is optimal.
\end{cor}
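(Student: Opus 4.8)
The plan is to leverage Theorem \ref{lem:verf-lemma} for the lower bound and then supply a matching upper bound from the fact that $v$ is the ruin probability of some admissible strategy. Concretely, let $\{\pi_t^*, A_t^*\}$ be the admissible strategy whose associated probability of lifetime ruin is exactly $v(w,A)$; the hypothesis of the corollary guarantees that such a strategy exists. By definition of $\psi$ as an infimum over all admissible strategies,
\begin{equation}
\psi(w,A) = \inf_{\{\pi_t,A_t\}} \P\left[\tau_0 < \tau_d \,\Big|\, W_0 = w, A_0 = A, \tau_d > 0, \tau_0 > 0\right] \le \P\left[\tau_0^* < \tau_d \,\Big|\, \cdots\right] = v(w,A),
\end{equation}
where the last probability is taken along the paths generated by $\{\pi_t^*, A_t^*\}$. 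This gives $\psi(w,A) \le v(w,A)$ on $\mathcal{D}$.

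Next I would invoke Theorem \ref{lem:verf-lemma} directly: since $v$ satisfies Conditions 1--4 there, we already have $v(w,A) \le \psi(w,A)$ on $\mathcal{D}$. Combining the two inequalities yields $v(w,A) = \psi(w,A)$ on $\mathcal{D}$, which is the first assertion. The remaining claim is that the strategy $\{\pi_t^*, A_t^*\}$ realizing $v$ is optimal. This is essentially immediate once the equality $v = \psi$ is established: the strategy $\{\pi_t^*, A_t^*\}$ produces a ruin probability equal to $v = \psi$, and $\psi$ is by definition the smallest ruin probability attainable over all admissible strategies, so $\{\pi_t^*, A_t^*\}$ attains the infimum and is therefore optimal.

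The only delicate point I anticipate is making sure the domains match up and that the strategy used to \emph{define} $v$ as a ruin probability is genuinely admissible in the sense of Section~2 (adaptedness, the $\int_0^t \pi_s^2\,ds < \infty$ condition, and $A_t \ge 0$); this is exactly what the phrase ``is the probability of ruin associated with an admissible strategy'' in the hypothesis is meant to supply, so there is no real obstacle — the proof is a two-line sandwiching argument. One should also note, in passing, that off the region $\mathcal D$ the identification $v = \psi$ is already known by construction ($\psi \equiv 0$ for $w \ge w_s(A)$ and $\psi \equiv 1$ for $w \le \underline{w}(A)$), so the corollary really only adds content on $\mathcal{D}$. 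I would write the proof as a short paragraph: state the existence of the optimal candidate strategy, display the upper-bound inequality, cite Theorem \ref{lem:verf-lemma} for the lower bound, conclude equality, and then observe optimality follows from the definition of the infimum.
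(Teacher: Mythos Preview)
Your proof is correct and is precisely the standard sandwiching argument one expects here: Theorem~\ref{lem:verf-lemma} gives $v\le\psi$, admissibility of the candidate strategy gives $\psi\le v$, and optimality then follows from the definition of the infimum. The paper in fact states this corollary without proof, treating it as an immediate consequence of Theorem~\ref{lem:verf-lemma}, so your short paragraph is exactly what is needed.
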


\subsection{\label{sec:linearization}Linearizing The Equation for $\protect \psi $ via Duality Arguments}

We hypothesize that in the region $\mathcal{D}\backslash \{w=w_{s}(A)$ or $w=\underline{w}(A)\}$ as defined in Section \ref{sec:i-motivation}, the optimal strategy for minimizing the probability of ruin is neither to purchase nor to surrender any life annuity income.  In other words, the individual does not buy any additional annuity income until her wealth reaches the safe level $w_{s}(A)$, which is consistent with the results of  \cite{Milevsky_Moore_Young}.  Additionally, the individual never surrenders her annuity income.  Intuitively, this makes sense because we count the annuity income's wealth equivalence in the ruin level $\underline w(A)$ and thereby allow the individual to borrow against future annuity income without actually forcing her to surrender the annuity.

Under this hypothesis, the first inequality in the HJB variational inequality (\ref {HJBIV}) holds with equality in the region $\mathcal{D}\backslash \{w=w_{s}(A)$ or $w=\underline{w}(A)\}$, and the minimum probability of ruin $\psi $ is the solution
to the following boundary-value problem (BVP) 
\begin{equation}
\lambda ^{S}\psi =(rw-c+A)\psi _{w}+\min_{\pi }\left[ (\mu -r)\pi \psi _{w} + \frac{1}{2}\sigma ^{2}\pi ^{2}\psi _{ww}\right],
\label{HJBi}
\end{equation}
with the boundary conditions
\begin{equation}
\psi (\underline{w}(A),A)=1,  \label{BCi1}
\end{equation}
and
\begin{equation}
\psi (w_s(A),A)=0.  \label{BCi2}
\end{equation}

\noindent After solving this BVP, we will show that its solution satisfies the conditions of the Verification Theorem
\ref{lem:verf-lemma} to verify our hypothesis.

To solve the BVP, we transform the nonlinear boundary value problem above into a linear free-boundary problem (FBP) via the Legendre transform.  Assume  $\psi (w,A)$ is convex with respect to $w$, which we verify later; therefore, we can define the concave
dual $\hat \psi$ of $\psi $ by
\begin{equation}
\hat{\psi}(y,A)=\min_{w\geq \underline{w}(A)}[\psi (w,A)+wy].  \label{Leg}
\end{equation}

\noindent The critical value $w^*(A)$ solves the equation $\psi_w(w, A) + y = 0$;
thus, $w^* = I(-y, A)$, in which $I$ is the inverse of $\psi_w$ with respect
to $w$. It follows that

\begin{equation}
\hat{\psi}_y (y, A)= I(-y, A),
\label{eq:1}
\end{equation}

\begin{equation}
\hat{\psi}_{yy}(y,A)= - \frac{1}{\psi_{ww}(w,A)}\Bigg|_{w=\psi_w^{-1}(-y,A)},
\end{equation}

\begin{equation}
\hat{\psi}_A(y,A)=\psi_A(w,A)\left|_{w=\psi_w^{-1}(-y,A)},\right.
\end{equation}
and
\begin{equation}
\hat{\psi}_{Ay}(y,A)=\psi_{Aw}(w,A)   \hat{\psi}_{yy}(y,A)\left|_{w=\psi_w^{-1}(-y,A)}.\right.
\label{eq:4}
\end{equation}

Rewrite the differential equation (\ref{HJBi}) in terms of $\hat{\psi}$ to get

\begin{equation}
-\lambda ^{S}\hat{\psi}-(r-\lambda ^{S})y\hat{\psi}_{y}+my^{2}\hat{\psi}_{yy}+y(c-A)=0,  \label{eq:FBP}
\end{equation}
in which $m=\dfrac{1}{2}\left( \dfrac{\mu -r}{\sigma }\right) ^{2}$ . The general solution of (\ref{eq:FBP}) is 
\begin{equation}
\hat{\psi}(y,A)=D_{1}(A)y^{B_{1}}+D_{2}(A)y^{B_{2}}+\frac{c-A}{r}y,
\label{eq:dFBP}
\end{equation}
in which 
\begin{equation}
B_{1,2}=\frac{1}{2m}\left( (r-\lambda ^{S}+m) \pm \sqrt{(r-\lambda^{S}+m)^{2}+4m\lambda ^{S}}\right),
\label{B}
\end{equation}
with $B_{1}>1$ and $B_{2}<0$.  It remains for us to determine the coefficients $D_{1}(A)$ and $D_{2}(A)$ via the two boundary conditions.

To that end, consider the boundary conditions (\ref{BCi1}) and (\ref{BCi2}).  Define 
\begin{equation}
\underline{y}(A)=-\psi _{w}(\underline{w}(A),A),  \label{y0i}
\end{equation}
and 
\begin{equation}
y_{s}(A)=-\psi _{w}(w_{s}(A),A).  \label{ybi}
\end{equation}
We will show later that $y_{s}(A) \leq \underline{y}(A)$, which is obvious if $\psi$ is decreasing and convex with respect to $w$. Then,  for the free boundaries $\underline y(A)$ and $y_s(A)$,  we obtain from (\ref{BCi1})  and (\ref{y0i}) 
\begin{equation}
\begin{cases}
\hat{\psi}(\underline{y}(A),A)=\psi (\underline{w}(A),A)+\underline{w}(A)\underline{y}(A)=1-(1-p)\bar{a}A\underline{y}(A), \\ 
\hat{\psi}_{y}(\underline{y}(A),A)=\underline{w}(A)=-(1-p)\bar{a}A;
\end{cases}
\label{eq:sellbdry}
\end{equation}
and from (\ref{BCi2}) and (\ref{ybi})
\begin{equation}
\begin{cases}
\hat{\psi}(y_{s}(A),A)=\psi (w_{s}(A),A)+w_{s}(A)y_{s}(A)= \left(\dfrac{pc}{\frac{1}{\bar{a}}-(1-p)r}-\bar{a}A\right)y_s(A), \\ 
\hat{\psi}_{y}(y_{s}(A),A)=w_s(A)=\dfrac{pc}{\frac{1}{\bar{a}}-(1-p)r}-\bar{a}A.
\end{cases}
\label{eq:buybdry}
\end{equation}

Next, we find $D_{1}(A)$ and $D_{2}(A)$ along with $\underline{y}(A)$ and $y_{s}(A)$. To do so, we use the four equations in (\ref{eq:sellbdry}) and (\ref{eq:buybdry}) to find these four unknowns in terms of $A$.  Substitute (\ref{eq:dFBP}) into (\ref{eq:sellbdry}) and (\ref{eq:buybdry}) to get 
\begin{equation}
D_{1}(A)\underline{y}(A)^{B_1}+D_{2}(A)\underline{y}(A)^{B_2}+{\frac{c-A}{r}}\underline{y}(A) =1-\frac{1-p}{r+\lambda ^{O}}A\underline{y}(A),
\label{eq:i1}
\end{equation}
\begin{equation}
D_{1}(A)B_{1}\underline{y}(A)^{B_{1}-1}+D_{2}(A)B_{2}\underline{y}(A)^{B_{2}-1}+{\frac{c-A}{r}} =-\frac{1-p}{r+\lambda ^{O}}A,  \label{eq:i2}
\end{equation}
\begin{equation}
D_{1}(A)y_{s}(A)^{B_1}+D_{2}(A)y_{s}(A)^{B_2}+{\frac{c-A}{r}}y_{s}(A) =\left( \frac{pc}{pr+\lambda ^{O}}-\frac{A}{r+\lambda ^{O}}\right) y_{s}(A),  \label{eq:i3}
\end{equation}
and
\begin{equation}
D_{1}(A)B_{1}y_{s}(A)^{B_{1}-1}+D_{2}(A)B_{2}y_{s}(A)^{B_{2}-1}+{\frac{c-A}{r}} =\frac{pc}{pr+\lambda ^{O}}-\frac{A}{r+\lambda ^{O}}.  \label{eq:i4}
\end{equation}

\noindent From (\ref{eq:i3}) and (\ref{eq:i4}), solve for $D_{1}(A)$ and $D_{2}(A)$ to obtain 
\begin{eqnarray}
D_{1}(A) &=&\frac{1-B_{2}}{B_{1}-B_{2}}\frac{1}{{y_{s}(A)}^{B_{1}-1}}\left( \frac{pc}{pr+\lambda ^{O}}-\frac{A}{r+\lambda ^{O}}-{\frac{c-A}{r}}\right) < 0,
\label{iD_1} \\
D_{2}(A) &=&\frac{B_{1}-1}{B_{1}-B_{2}}\frac{1}{{y_{s}(A)}^{B_{2}-1}}\left( \frac{pc}{pr+\lambda ^{O}}-\frac{A}{r+\lambda ^{O}}-{\frac{c-A}{r}}\right) < 0.
\label{iD_2}
\end{eqnarray}
Substituting $D_{1}(A)$ and $D_{2}(A)$ into (\ref{eq:i2}) gives 
\begin{equation}
\begin{split}
-\frac{1-p}{r+\lambda ^{O}}A& =\frac{B_{1}(1-B_{2})}{B_{1}-B_{2}}x(A)^{B_{1}-1}\left( \frac{pc}{pr+\lambda ^{O}}-\frac{A}{r+\lambda ^{O}}-{\frac{c-A}{r}}\right) \\
& \quad +\frac{B_{2}(B_{1}-1)}{B_{1}-B_{2}}x(A)^{B_{2}-1}\left( \frac{pc}{pr+\lambda ^{O}}-\frac{A}{r+\lambda ^{O}}-{\frac{c-A}{r}}\right) +\frac{c-A}{r}
\end{split}
\label{eq:**}
\end{equation}
with $x(A)\triangleq \underline{y}(A)/y_{s}(A)$ as a function of $A$.

With $A $ fixed, (i) if $x(A) = 1$, the right-hand side of (\ref{eq:**}) equals $\dfrac{pc}{pr+\lambda ^{O}}-\dfrac{A}{r+\lambda ^{O}}=w_{s}(A)>-\dfrac{1-p}{r+\lambda ^{O}}A=\underline{w}(A)$; (ii) if $x(A)\rightarrow +\infty $, then the right-hand side of (\ref{eq:**}) approaches $ -\infty $; and (iii) one can show that the right-hand side is strictly decreasing with respect to $x(A)$.  Therefore, there exists a unique $x(A) > 1$ that satisfies equation (\ref{eq:**}).

Substitute for $D_{1}(A)$ and $D_{2}(A)$ into (\ref{eq:i1}) to get 
\begin{equation}
\frac{1}{\underline{y}(A)}=\frac{c-A}{r}+\frac{1-p}{r+\lambda ^{O}}A+\left( \frac{pc}{pr+\lambda ^{O}}-\frac{A}{r+\lambda ^{O}}-\frac{c-A}{r}\right) \left( \frac{1-B_{2}}{B_{1}-B_{2}}x(A)^{B_{1}-1}+\frac{B_{1}-1}{B_{1}-B_{2}} x(A)^{B_{2}-1}\right) .  \label{eq:***}
\end{equation}
Then, by the definition of $x(A)$, the solution for $y_{s}(A)$ is simply 
\begin{equation}
y_{s}(A)=\frac{\underline{y}(A)}{x(A)} .
\label{eq:ys}
\end{equation}
Thus, we have solved the FBP given in (\ref{eq:FBP}), \eqref{eq:sellbdry}, and \eqref{eq:buybdry}, and we state this formally in the following proposition.

\begin{proposition}
\label{prop:psihat}The solution of the FBP (\ref{eq:FBP}) with conditions (\ref{eq:sellbdry}) and (\ref{eq:buybdry}) is given by (\ref{eq:dFBP}), with $D_{1}(A)$, $D_{2}(A)$, $\underline{y}(A)$, $y_{s}(A)$, and $x(A)$ defined in (\ref{iD_1}), (\ref{iD_2}), (\ref{eq:***}), (\ref{eq:ys}), and (\ref{eq:**}), respectively.
\end{proposition}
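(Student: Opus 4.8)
The plan is to show that the functions exhibited in Section~\ref{sec:linearization} genuinely solve the free-boundary problem, by organizing the computations sketched there and supplying the one real existence argument. I would proceed in four steps. \emph{Step 1 (the linear ODE).} For each fixed $A$, equation (\ref{eq:FBP}) is a second-order linear Euler equation in $y$; substituting $\hat{\psi}=y^{B}$ into its homogeneous part yields the indicial equation $mB^{2}-(r-\ls+m)B-\ls=0$, whose roots are precisely $B_{1,2}$ of (\ref{B}). Since $m>0$, the product of the roots equals $-\ls/m<0$ and the quadratic evaluated at $B=1$ equals $-r<0$, so $B_{2}<0<1<B_{1}$; and a direct substitution shows $\tfrac{c-A}{r}\,y$ is a particular solution, so (\ref{eq:dFBP}) is the general solution with free coefficients $D_{1}(A)$, $D_{2}(A)$.

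\emph{Step 2 (the boundary data).} Using (\ref{Leg}), the relation (\ref{eq:1}), and the definitions (\ref{y0i})--(\ref{ybi}), the point $w=\underline{w}(A)$ attains the minimum in (\ref{Leg}) at $y=\underline{y}(A)$, so that $\hat{\psi}(\underline{y}(A),A)=\psi(\underline{w}(A),A)+\underline{w}(A)\underline{y}(A)$ and $\hat{\psi}_{y}(\underline{y}(A),A)=\underline{w}(A)$; inserting $\psi(\underline{w}(A),A)=1$ from (\ref{BCi1}) together with $\underline{w}(A)=-(1-p)\bar{a}A$ gives (\ref{eq:sellbdry}). The same reasoning at $w=w_{s}(A)$, using $\psi(w_{s}(A),A)=0$ from (\ref{BCi2}) and the identity $\tfrac{1}{\bar{a}}-(1-p)r=pr+\lo$ (so that $w_{s}(A)=\tfrac{pc}{pr+\lo}-\tfrac{A}{r+\lo}$), gives (\ref{eq:buybdry}).

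\emph{Step 3 (the algebraic system).} Substituting (\ref{eq:dFBP}) into the four scalar equations (\ref{eq:sellbdry})--(\ref{eq:buybdry}) produces (\ref{eq:i1})--(\ref{eq:i4}). Dividing (\ref{eq:i3}) by $y_{s}(A)$ and comparing with (\ref{eq:i4}) gives a $2\times2$ linear system for $D_{1}(A)y_{s}(A)^{B_{1}-1}$ and $D_{2}(A)y_{s}(A)^{B_{2}-1}$ whose two right-hand sides both equal $K\triangleq\tfrac{pc}{pr+\lo}-\tfrac{A}{r+\lo}-\tfrac{c-A}{r}$; solving it yields (\ref{iD_1})--(\ref{iD_2}). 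Since $A\mapsto K=w_{s}(A)-\tfrac{c-A}{r}$ has derivative $\tfrac{1}{r}-\bar{a}>0$ on $\mathcal{D}$ and vanishes at the right endpoint $A=\tfrac{c}{1-(1-p)r\bar{a}}$, we get $K<0$ on the interior of $\mathcal{D}$, and together with $\tfrac{1-B_{2}}{B_{1}-B_{2}}>0$ and $\tfrac{B_{1}-1}{B_{1}-B_{2}}>0$ this gives $D_{1}(A)<0$ and $D_{2}(A)<0$. Feeding (\ref{iD_1})--(\ref{iD_2}) into (\ref{eq:i2}) and writing $x(A)=\underline{y}(A)/y_{s}(A)$ collapses that equation to (\ref{eq:**}); feeding them into (\ref{eq:i1}) under the same substitution collapses it to (\ref{eq:***}); and (\ref{eq:ys}) is merely the definition of $x(A)$.

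\emph{Step 4 (the crux): solving (\ref{eq:**}) for $x(A)$.} Write $\Phi(x)$ for the right-hand side of (\ref{eq:**}). Telescoping the coefficients gives $\Phi(1)=K+\tfrac{c-A}{r}=w_{s}(A)$, which strictly exceeds $\underline{w}(A)$ --- the left-hand side of (\ref{eq:**}) --- on the interior of $\mathcal{D}$; since $B_{1}-1>0$, $B_{2}-1<0$ and $K<0$, the $x^{B_{1}-1}$ term dominates and $\Phi(x)\to-\infty$ as $x\to+\infty$; and
\[
\Phi'(x)=\frac{(B_{1}-1)K}{B_{1}-B_{2}}\Big[\,B_{1}(1-B_{2})\,x^{B_{1}-2}+B_{2}(B_{2}-1)\,x^{B_{2}-2}\,\Big],
\]
in which the bracket is a sum of two strictly positive terms (using $B_{1}>0$, $1-B_{2}>0$, $B_{2}<0$, $B_{2}-1<0$) and the prefactor is strictly negative, so $\Phi$ is strictly decreasing on $[1,\infty)$. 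The intermediate value theorem then produces a unique $x(A)>1$ solving (\ref{eq:**}); with it in hand, $\underline{y}(A)$ is read off from (\ref{eq:***}) and $y_{s}(A)$ from (\ref{eq:ys}), which completes the proof. I expect this last step to be the main obstacle: one must track the signs of $B_{1}-1$, $B_{2}$, $K$, and the powers of $x$ carefully to conclude the monotonicity of $\Phi$ (observation (iii) in the discussion above), whereas Steps~1--3 amount to routine substitution and linear algebra.
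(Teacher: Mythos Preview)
Your proposal is correct and follows essentially the same route as the paper: the derivation preceding the proposition \emph{is} its proof, and your Steps~1--4 recapitulate that derivation in the same order (general Euler solution, dual boundary data, linear algebra for $D_{1},D_{2}$, then existence/uniqueness of $x(A)>1$). The only difference is that you supply the details the paper suppresses --- the explicit computation of $\Phi'(x)$ and the sign argument for $K=w_{s}(A)-\tfrac{c-A}{r}<0$ --- where the paper merely asserts ``one can show that the right-hand side is strictly decreasing.''
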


Next, we determine some properties of $\hat{\psi}(y,A)$; in particular, we show that it is concave.  Also, notice that we can rewrite the inequalities (\ref{ineqi1}) and (\ref {ineqi2}) in terms of $\hat{\psi}$ as
\begin{eqnarray}
\hat{\psi}_{A}(y,A) &\geq &-\frac{1}{r+\lambda ^{O}}y,  \label{dineqi1} \\
\hat{\psi}_{A}(y,A) &\leq &-\frac{1-p}{r+\lambda ^{O}}y,  \label{dineqi2}
\end{eqnarray}
for $y_{s}(A)\leq y \leq \underline{y}(A)$, and we show below that these inequalities hold for our solution $\hat \psi$.  

For notational simplicity, we drop the argument $A$ in $\underline{w}(A)$, $w_s(A)$, $\underline{y}(A)$, and $y_s(A)$ in much of the remainder of this subsection.  
By taking the derivative of (\ref{eq:**}) with respect to $A$, we get 
\begin{equation}
\begin{split}
& \left( w_{s}-\frac{c-A}{r} \right) \frac{(B_{1}-1)(1-B_{2})}{B_{1}-B_{2}}\left\{B_{1} x(A)^{B_{1}-1}-B_{2} x(A)^{B_{2}-1}\right\} \frac{dx(A)/dA}{x(A)} \\
& \quad = \left( \frac{1}{r} -\frac{1-p}{r+\lambda ^{O}} \right) - \left( \frac{1}{r} - \frac{1}{r+\lambda ^{O}} \right) \frac{\underline{w}-\dfrac{c-A}{r}}{w_{s}-\dfrac{c-A}{r}}.
\end{split}
\label{eq:dx}
\end{equation}
It is easy to check that the right-hand side of the equation above is $0$, which implies that 
\begin{equation}
\frac{dx(A)}{dA}=0.  \label{eq:x1}
\end{equation}
In other words, $x(A) = x$ is a constant, independent of $A$, and the equation (\ref{eq:**}) holds for all $A$ with the same value $x > 1$.

By taking the derivative of (\ref{eq:***}) and \eqref{eq:ys} with respect to $A$, we get 
\begin{equation}
\frac{dy_s(A)}{dA}=-y_s(A)\frac{\lambda ^{O}}{r(r+\lambda ^{O})}\frac{1}{w_s(A)-\dfrac{c-A}{r}}.
\label{eq:dy}
\end{equation}
Also, after substituting for $D_1(A)$ and $D_2(A)$ in \eqref{eq:dFBP}, we differentiate $\hat{\psi}(y, A)$ with respect to $A$ to get 
\begin{equation}
\begin{split}
\hat{\psi}_{A}(y,A)& =-y\left\{ \left( \frac{1}{r+\lambda ^{O}}-\frac{1}{r}\right) \left[ \frac{1-B_{2}}{B_{1}-B_{2}}\left( \frac{y}{y_{s}}\right) ^{B_{1}-1}+ \frac{B_{1}-1}{B_{1}-B_{2}}\left( \frac{y}{y_{s}}\right) ^{B_{2}-1}\right] + \frac{1}{r}\right\} \\
& \quad -\frac{dy_{s}(A)}{dA}\frac{(B_{1}-1)(1-B_{2})}{B_{1}-B_{2}}\left( w_s(A) -\frac{c-A}{r}\right) \left[
\left( \frac{y}{y_{s}}\right) ^{B_{1}}-\left( \frac{y}{y_{s}}\right)^{B_{2}}\right] .
\end{split}
\label{psi_A}
\end{equation}

\begin{proposition}
\label{prop:psihat_concave}$\hat{\psi}(y,A)$ given by Proposition \ref{prop:psihat} is concave with respect to $y$ and satisfies inequalities (\ref{dineqi1}) and (\ref{dineqi2}).
\end{proposition}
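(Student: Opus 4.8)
My plan is to prove both assertions directly from the closed form in Proposition \ref{prop:psihat}, namely $\hat{\psi}(y,A) = D_1(A)y^{B_1} + D_2(A)y^{B_2} + \frac{c-A}{r}\,y$, using only facts already established: $B_1 > 1 > 0 > B_2$; $D_1(A) < 0$ and $D_2(A) < 0$ by \eqref{iD_1}--\eqref{iD_2}; $x(A) \equiv x > 1$ is a constant by \eqref{eq:x1}; and the dual variable satisfies $y_s(A) \le y \le \underline{y}(A) = x\,y_s(A)$. Concavity is then immediate: differentiating twice in $y$ gives $\hat{\psi}_{yy}(y,A) = D_1(A)B_1(B_1-1)y^{B_1-2} + D_2(A)B_2(B_2-1)y^{B_2-2}$, and since $B_1 > 1$ makes $B_1(B_1-1) > 0$ while $B_2 < 0$ makes $B_2(B_2-1) > 0$, both summands are strictly negative for $y > 0$, so $\hat{\psi}_{yy} < 0$ on $\mathcal{D}$.

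For the differential inequalities \eqref{dineqi1}--\eqref{dineqi2}, the first step is to compress \eqref{psi_A}. Substituting the expression \eqref{eq:dy} for $dy_s(A)/dA$ and writing $z := y/y_s(A) \in [1,x]$, the two groups of terms in \eqref{psi_A} combine after collecting like powers of $z$ (using $\frac{1-B_2}{B_1-B_2} + \frac{(B_1-1)(1-B_2)}{B_1-B_2} = \frac{B_1(1-B_2)}{B_1-B_2}$ and the analogous cancellation on the $B_2$ side), and \eqref{psi_A} collapses to
\[
\hat{\psi}_A(y,A) \;=\; \frac{y}{r}\left(\frac{\lambda^{O}}{r+\lambda^{O}}\,g(z) - 1\right), \qquad g(z) \;:=\; \frac{B_1(1-B_2)}{B_1-B_2}\,z^{B_1-1} + \frac{B_2(B_1-1)}{B_1-B_2}\,z^{B_2-1}.
\]
Dividing by $y > 0$ and clearing denominators, \eqref{dineqi1} and \eqref{dineqi2} are then equivalent to the single chain $1 \le g(z) \le 1 + \frac{pr}{\lambda^{O}}$ for all $z \in [1,x]$.

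To establish that chain I would show $g$ is strictly increasing: a short computation gives $g'(z) = \frac{(B_1-1)(1-B_2)}{B_1-B_2}\left(B_1 z^{B_1-2} - B_2 z^{B_2-2}\right)$, which is positive since the prefactor is positive and $B_1 > 0 > B_2$. Hence it suffices to check the endpoints. At $z = 1$ the numerator telescopes: $g(1) = \frac{B_1(1-B_2) + B_2(B_1-1)}{B_1-B_2} = 1$, the value in \eqref{dineqi1}. At $z = x$, equation \eqref{eq:**}, rewritten with $\underline{w}(A) = -(1-p)\bar{a}A$ and $w_s(A) = \frac{pc}{pr+\lambda^{O}} - \frac{A}{r+\lambda^{O}}$, gives $g(x) = \left(\underline{w}(A) - \tfrac{c-A}{r}\right)\big/\left(w_s(A) - \tfrac{c-A}{r}\right)$, and the identity used to derive $dx/dA = 0$ (the right side of \eqref{eq:dx} vanishing) shows this ratio equals $\left(\tfrac{1}{r} - \tfrac{1-p}{r+\lambda^{O}}\right)\big/\left(\tfrac{1}{r} - \tfrac{1}{r+\lambda^{O}}\right) = 1 + \frac{pr}{\lambda^{O}}$, the bound in \eqref{dineqi2}. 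Monotonicity then traps $g$ between $g(1) = 1$ and $g(x) = 1 + \frac{pr}{\lambda^{O}}$ on $[1,x]$, which is exactly the needed chain, and the proposition follows.

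I expect the only real difficulty to be algebraic bookkeeping: reducing \eqref{psi_A} to the one-line formula for $\hat{\psi}_A$ requires care when substituting \eqref{eq:dy} and recombining the power terms, and one must track the sign of $w_s(A) - \tfrac{c-A}{r}$ (which is negative on $\mathcal{D}$, as already implicit in $D_1,D_2 < 0$) when manipulating \eqref{eq:**}. Once the compact formula is in hand, the remainder is the elementary observation that a strictly increasing function lies between its values at the endpoints of an interval.
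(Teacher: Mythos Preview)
Your proof is correct and follows essentially the same route as the paper's: reduce $\hat{\psi}_A$ to the compact form $y\frac{\lambda^O}{r(r+\lambda^O)}g(z)-\frac{y}{r}$ (the paper's \eqref{psi_A2}), note $g$ is increasing on $[1,x]$, and pin down the endpoints $g(1)=1$ and $g(x)=1+pr/\lambda^O$. Your explicit verification of $g(x)$ via \eqref{eq:**} and the vanishing right-hand side of \eqref{eq:dx} is a clean way to justify what the paper merely asserts; the only small item you omit that the paper includes is the one-line remark that $\underline{y}(A)>0$ (hence $y_s(A)>0$) from \eqref{eq:***}, which underpins the ``for $y>0$'' clause in your concavity step.
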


\begin{proof}
First, it is straightforward to show the positivity of $\underline{y}(A)$ from (\ref{eq:***}). This confirms that $\underline{y}(A) = y_s(A) x > y_{s}(A) > 0$ because $x > 1$.  It follows that $\hat{\psi}(y,A)$ is concave with respect to $y$ since both $D_{1}(A)<0$ and $D_{2}(A) < 0$, and both $B_1(B_1 - 1) > 0$ and $B_2 (B_2 - 1) > 0$.

To prove the inequalities, we first substitute (\ref{eq:dy}) into (\ref{psi_A}) to get 
\begin{equation}
\hat{\psi}_{A}(y,A)=y\frac{\lambda ^{O}}{r(r+\lambda ^{O})} \left[ \frac{B_{1}(1-B_{2})}{B_{1}-B_{2}}\left( \frac{y}{y_{s}}\right) ^{B_{1}-1}+\frac{(B_{1}-1)B_{2}}{B_{1}-B_{2}}\left( \frac{y}{y_{s}}\right) ^{B_{2}-1} \right] -\frac{y}{r},  \label{psi_A2}
\end{equation}
Substitute the expression for $\hat \psi_A(w, A)$ from \eqref{psi_A2} into inequalities (\ref{dineqi1}) and (\ref{dineqi2}) to obtain the equivalent inequalities
\begin{equation}
1\geq \frac{r+\lambda ^{O}}{r}-\frac{\lambda ^{O}}{r} \left[ \frac{B_{1}(1-B_{2})}{B_{1}-B_{2}}\left( \frac{y}{y_{s}}\right) ^{B_{1}-1}+\frac{(B_{1}-1)B_{2}}{B_{1}-B_{2}}\left( \frac{y}{y_{s}}\right) ^{B_{2}-1} \right] \geq 1-p.  \label{ieq}
\end{equation}

Notice that the first inequality holds with equality if $y=y_{s}(A)$ and the second inequality holds with equality if $y=\underline{y}(A)$.  Define the auxiliary
function 
\begin{equation}
f(z)=\frac{B_{1}(1-B_{2})}{B_{1}-B_{2}}z^{B_{1}-1}+\frac{(B_{1}-1)B_{2}}{B_{1}-B_{2}}z^{B_{2}-1},
\end{equation}
which is increasing for $\dd{1=\frac{y_{s}(A)}{y_{s}(A)}\leq z\leq \frac{\underline{y}(A)}{y_{s}(A)}=x}$.  Indeed, in this interval, 
\begin{equation}
f^{\prime }(z)= \frac{(B_1 - 1)(1 - B_2)}{B_1 - B_2} \left[ B_{1} z^{B_{1}-2} - B_{2} z^{B_{2}-2} \right] > 0.
\end{equation}
It follows that, for $y_{s}(A)\leq y\leq \underline{y}(A)$, the inequality (\ref{ieq}), and equivalently (\ref{dineqi1}) and (\ref{dineqi2}) hold.
\end{proof}

In the next section, we rely on the work in this section to show that the convex dual of $\hat{\psi }(y,A)$ equals the minimum probability of ruin $\psi(w,A) $.

\subsection{\label{sec:dual-dual}Relation Between the FBP and the Minimum Probability of Ruin}

\label{subsectioni} In this section, we show that the Legendre transform of the solution to the FBP given in (\ref{eq:FBP}), \eqref{eq:sellbdry}, and \eqref{eq:buybdry} is in fact the minimum probability of ruin $\psi $.  Since $\hat{\psi}$ is concave from Propostion \ref{prop:psihat_concave}, we can define its convex dual via the Legendre transform for $w\geq \underline{w}(A)$ as 
\begin{equation}
\Psi (w,A)=\max_{y\geq 0}[\hat{\psi}(y,A)-wy].  \label{3.57}
\end{equation}
Given $A$, the critical value $y^{\ast }$ solves the equation $\hat{\psi}_{y}(y,A)-w=0$. Thus $y^{\ast }(A)=I(w,A)$, in which $I$ is the inverse of $\hat{\psi}_{y}$.  In this case, we also have expressions similar to those in \eqref{eq:1}-\eqref{eq:4}.

Given $\hat{\psi}$, we proceed to find the boundary-value problem that $\Psi $ solves. In the partial differential equation for $\hat{\psi}$ in (\ref{eq:FBP}),
let $y=I(w,A)=-\Psi _{w}(w,A)$ to obtain 
\begin{equation}
\lambda ^{S}\Psi (w,A)=(rw-c)\Psi _{w}(w,A)-m\frac{\Psi _{w}^{2}(w,A)}{\Psi
_{ww}(w,A)}.  \label{3.63}
\end{equation}
Notice that we can rewrite (\ref{3.63}) as

\begin{equation}
\min_{\pi }\mathcal{L}^{\pi }\Psi =0,
\end{equation}
with the minimizing strategy $\pi^*$ given in feedback form by
\begin{equation}
\pi^* (w,A)=-\frac{\mu -r}{\sigma ^{2}}\frac{\Psi _{w}(w,A)}{\Psi _{ww}(w,A)}.
\end{equation}
Therefore, $\Psi $ satisfies Condition $1$ in Verification Theorem \ref{lem:verf-lemma}.

Next, consider the boundary conditions for (\ref{eq:FBP}).  First, the boundary conditions at $y_{s}(A)$, namely $\hat{\psi}(y_{s}(A),A)=w_{s}(A)y_{s}(A)$ and $\hat{\psi}_{y}(y_{s}(A),A)=w_{s}(A)$, imply that the corresponding dual value of $w$ is $w_{s}(A)$ and that 
\begin{equation}
\Psi (w_{s}(A),A)=0.  \label{3.64}
\end{equation}
Similarly, the boundary conditions at $\underline{y}(A)$, namely $\hat{\psi}(\underline{y}(A),A) = 1 + \underline w(A) \underline y(A)$ and $\hat{\psi}_{y}(\underline{y}(A),A)=\underline{w} (A)$, imply that the corresponding dual value of $w$ is $\underline{w}(A)$ and that 
\begin{equation}
\Psi (\underline{w}(A), A) = 1.  \label{3.65}
\end{equation}

Finally, Propostion \ref{prop:psihat_concave} implies that 
\begin{equation}
\bar{a}\Psi _{w}(w,A)-\Psi _{A}(w,A)\leq 0,  \label{3.66}
\end{equation}
and 
\begin{equation}
(1-p)\bar{a}\Psi _{w}(w,A)-\Psi _{A}(w,A)\geq 0.  \label{3.67}
\end{equation}
Therefore, $\Psi (w,A)$ satisfies Conditions $2$ and $3$ in Theorem \ref{lem:verf-lemma}.

From $\Psi _{w}(w,A) = -y^*(A)$ and the fact that $y \geq y_ s(A)>0$,  $\Psi(w,A)$ is decreasing with respect to $w$, and consequently $0 \leq \Psi(w,A) \leq 1$ for $(w,A) \in \mathcal{D}$ due to (\ref{3.64}) and (\ref{3.65}). Thus, $\Psi$ is the minimum probability of ruin by Corollary \ref{cor:verf-lemma}, and we state this formally in the next theorem.

\begin{thm} \label{thm:i}
The minimum probability of ruin $\psi(w, A)$ for $(w,A)\in \mathcal{D}$, in which $\mathcal D$ is defined by $\mathcal{D} = \left\{(w,A): \underline{w}(A)\leq w\leq w_{s}(A), 0\leq A<\dfrac{c}{1-(1-p)\frac{r}{r+\lambda ^{O}}} \right\}$ equals $\Psi (w,A)$ in \eqref{3.57}.  The associated optimal annuitization and investment strategies are given by

\begin{enumerate}
\item never to surrender existing annuity income;

\item to purchase additional annuity income only when wealth reaches $w_{s}(A)$, the safe level;

\item for $w\in \mathcal{D}\backslash \{w=w_{s}(A)$ or $w=\underline{w}(A)\}$, to invest the following amount of wealth in the risky asset:
\begin{equation*}
\pi ^{\ast }(w,A)=-\frac{\mu -r}{\sigma ^{2}}\frac{\psi _{w}(w,A)}{\psi_{ww}(w,A)}.
\end{equation*}
\end{enumerate}
\end{thm}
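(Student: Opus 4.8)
The plan is to show that $\Psi$ defined in \eqref{3.57} satisfies the hypotheses of Corollary \ref{cor:verf-lemma}: Conditions 1--4 of the Verification Theorem \ref{lem:verf-lemma} together with being the lifetime-ruin probability of an admissible strategy. Most of the first part is already assembled in Section \ref{sec:dual-dual}. From the Legendre duality and the strict concavity of $\hat\psi$ proved in Proposition \ref{prop:psihat_concave} (the explicit $\hat\psi$ is a combination of powers $y^{B_1},y^{B_2}$ with $D_1,D_2<0$), $\Psi$ is convex and twice continuously differentiable in $w$ on the interior of $\mathcal D$, with $\Psi_{ww}=-1/\hat\psi_{yy}>0$ finite there; it is non-increasing because $\Psi_w=-y^*(A)\le-y_s(A)<0$; it is differentiable in $A$; and, since beyond the safe level $\psi\equiv0$, it has matching value $0$ and left-derivative $-y_s(A)$ (right-derivative $0$) at $w=w_s(A)$, so the regularity demanded by Theorem \ref{lem:verf-lemma} holds. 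Substituting $y=-\Psi_w$ into \eqref{eq:FBP} yields $\min_\pi\mathcal L^\pi\Psi=0$, which is Condition 1 with minimizer $\pi^*(w,A)=-\frac{\mu-r}{\sigma^2}\,\Psi_w/\Psi_{ww}$; inequalities \eqref{3.66}--\eqref{3.67}, the $\Psi$-versions of \eqref{dineqi1}--\eqref{dineqi2} from Proposition \ref{prop:psihat_concave}, give Conditions 2 and 3; and \eqref{3.65} gives Condition 4. Hence $\Psi\le\psi$ on $\mathcal D$.

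For the reverse inequality I would exhibit the candidate optimal strategy and check that $\Psi$ is its ruin probability. Starting from $(w,A)\in\mathcal D$, the strategy is: never surrender ($A^-\equiv0$); make no annuity purchase, so $A_t\equiv A$, while investing $\pi^*_t=\pi^*(W_t,A)$ in the risky asset, until the first time $W_t$ reaches the safe level $w_s(A)$; and at that instant execute the borrow-and-annuitize transaction of Section \ref{sec:i-motivation}, after which consumption is fully covered and ruin is impossible, so $\tau_0=+\infty$ on that event. Because $\hat\psi_{yy}$ is continuous and strictly negative on $[y_s(A),\underline y(A)]$, the coefficients of the controlled equation are continuous and $\Psi_w$ is bounded on the strip $\underline w(A)\le w\le w_s(A)$, so \eqref{eq:wealth} has a unique strong solution up to $\tau_0\wedge\{W\text{ hits }w_s(A)\}$, and with $A$ frozen the state stays in $\mathcal D$ until it exits this strip.

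I would then repeat the It\^{o}-and-martingale computation from the proof of Theorem \ref{lem:verf-lemma}, now with equalities: along this strategy $\mathcal L^{\pi^*}\Psi\equiv0$, the $d(A^-)^{(c)}$ term is absent, the $d(A^+)^{(c)}$ term is absent before the safe level is reached, and the only jump of the state is the terminal annuitization at $w=w_s(A)$, where $\Psi$ equals $0$ on both sides and so contributes nothing; the convexity bound $\Psi_w^2(W_t,A_t)\le\Psi_w^2(\underline w(A),A)$ makes the stochastic integral a true martingale, exactly as in the verification theorem. Therefore $\Psi(w,A)=\mathbb E\big[e^{-\lambda^S(\tau_0\wedge\tau_n)}\Psi(W_{\tau_0\wedge\tau_n},A_{\tau_0\wedge\tau_n})\big]$ for each $n$; letting $n\to\infty$, using $0\le\Psi\le1$, $e^{-\lambda^S t}\to0$, and dominated convergence, and noting that on $\{\tau_0<\infty\}$ ruin occurs at the lower boundary where $\Psi=1$ (Condition 4) while on $\{\tau_0=\infty\}$ the factor $e^{-\lambda^S\tau_0}$ is zero, I obtain $\Psi(w,A)=\mathbb E\big[e^{-\lambda^S\tau_0}\mathbf 1_{\{\tau_0<\infty\}}\big]=\mathbb P[\tau_0<\tau_d\mid\cdots]$ for this strategy, the last equality being \eqref{exp:ruin1}. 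Thus $\Psi$ is an admissible ruin probability satisfying the conditions of the verification theorem, so Corollary \ref{cor:verf-lemma} gives $\Psi=\psi$ on $\mathcal D$ and the constructed strategy --- never surrender, purchase only at $w_s(A)$, invest $\pi^*$ --- is optimal, which are exactly claims 1--3.

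The main obstacle I anticipate is not a single computation but the care in (a) confirming that $\pi^*$ does not blow up as $(W_t,A_t)$ approaches either boundary of the strip, which rests on reading off from the explicit $\hat\psi$ that $\hat\psi_{yy}$ stays bounded away from $0$ and $\infty$ on $[y_s(A),\underline y(A)]$ (equivalently $0<\Psi_{ww}<\infty$), hence that the controlled SDE is well posed up to the exit time; (b) upgrading the It\^{o}-formula local martingale to a true martingale and justifying the $n\to\infty$ passage, for which the convexity bound of Theorem \ref{lem:verf-lemma} is precisely the needed input; and (c) bookkeeping the event $\{\tau_0<\infty\}$ against the event that $W$ reaches the safe level, so the terminal value is identified correctly. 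Assembling Conditions 1--4 and reducing to \eqref{exp:ruin1} are routine given the groundwork of Sections \ref{sec:linearization} and \ref{sec:dual-dual}.
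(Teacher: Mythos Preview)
Your proposal is correct and follows essentially the same approach as the paper: verify Conditions 1--4 of Theorem \ref{lem:verf-lemma} for the Legendre dual $\Psi$ (exactly as done in Section \ref{sec:dual-dual}), then invoke Corollary \ref{cor:verf-lemma}. The paper is quite terse on the second hypothesis of the corollary---that $\Psi$ is the ruin probability of an admissible strategy---simply asserting it, whereas you spell out the candidate strategy and rerun the It\^{o} argument with equalities to identify $\Psi$ with $\mathbb{E}[e^{-\lambda^S\tau_0}]$; this is a reasonable and more complete fleshing-out of the same route rather than a different one.
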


\subsection{\label{sec:num1}Numerical Examples}

In this section, we present numerical examples to demonstrate the results of Section \ref{sec:dual-dual}. We calculate the probabilities of lifetime ruin $\psi (w,A)$, as well as the associated investment strategies $\pi ^{\ast}(w,A)$ for different values of the existing annuity income $A$ and the surrender charge $p$.  We use the following  values of the parameters for our calculation:

\begin{itemize}
\item $\lambda ^{S}=\lambda ^{O}=0.04$; the hazard rate is such that the expected future lifetime is $25$ years.

\item $r=0.02$; the riskless rate of return is $2$\% over inflation.

\item $\mu =0.06$; the drift of the risky asset is $6$\% over inflation.

\item $\sigma =0.20$; the volatility of the risky asset is $20$\%.

\item $c=1$; the individual consumes one unit of wealth per year.
\end{itemize}

We focus on how the surrender penalty affects the probability of ruin and the optimal investment strategy. 

Figures \ref{fig:1.1}-\ref{fig:1.4} show the ruin probability $\psi (w,A)$ and the associated optimal investment  $\ \pi ^{\ast }(w,A)$ in the risky asset  with the parameters described above, as well as with values for $A$ and $p$ as indicated in the figures.  Each curve gives values from $w=\underline{w}(A)$ to $w=w_s(A)$,  in which $\underline{w}(A)$ and $w_{s}(A)$ vary with respect to $A$ and $p$. This is the reason why each curve lies in a distinct domain. From the figures, we can see that the proved properties are verified in these examples:   the probability of ruin is decreasing and convex with respect to $w$. We also observe that investment in the risky asset increases as wealth increases for each case.

\section{BORROWING RESTRICTION}

In this section, we consider the case in which the individual is forced to keep her wealth non-negative by surrendering the life annuity when needed.  With this restriction, the situation is different from the one we studied in the previous section because in this section, the individual cannot borrow against future life annuity income.  It is reasonable to apply this restriction because if the individual were to die, then the annuity income ceases.  Therefore, if the individual were to borrow against future annuity income and die, there might be insufficient assets available to pay the debt.

Therefore, ruin occurs when both an individual's annuity income $A$ and wealth $w$ are $0$ since she has no more annuity income to surrender to raise her wealth.  It follows that $\tau _{0}$ in this case reduces to the hitting time of $(w,A)=(0,0)$ because on the line $w=-(1-p)\bar{a}A$, $(0,0)$ is the only point at which wealth $w$ is non-negative.  Notice that the probability of lifetime ruin is not $1$ when wealth reaches $0$ if an individual still has existing annuity income, which differs from the case of irreversible annuities.

\subsection{HJB Variational Inequality and Verification Theorem}

As the preceding case without a borrowing restriction, we have the same HJB variational inequality because the individual still has only three options to minimize the probability of ruin: purchasing additional annuity income, surrendering existing annuity income, and doing neither. Suboptimality of each strategy, in general, is represented by an inequality, while the optimality of one's executed strategy at all time requires that at least one of the three inequalities holds as an equality.

We need only consider when $A<c$; otherwise the individual already has enough annuity income to cover her consumption and lifetime ruin is impossible.  In this case, the safe level is given by $w_{s}(A)\triangleq (c-A) \bar{a}$. When the individual's wealth reaches the safe level, she is able to purchase $(c-A)$ of additional annuity income and, thereby, ensure that lifetime ruin is impossible. Therefore,  we have the condition
\begin{equation}
\psi (w_{s}(A),A)=0.  \label{BCii1}
\end{equation}
Notice that, for a given existing annuity income $A$, more wealth is needed to reach the safe level if borrowing against the annuity is not allowed; that is, $w_s(A)$ in this section is greater than $w_s(A)$ in the previous section.

When the individual's wealth reaches $0$, she is forced to surrender her life annuity to keep her wealth non-negative.  In this case, an annuitization strategy $\{A_{t}\}$ is admissible if the associated wealth process $W_{t}\geq 0$. almost surely, for all $t \ge 0$.  Inspired by the optimal annuitization strategy obtained in Theorem \ref{thm:i} for the case in which borrowing is not restricted, we hypothesize that the individual will only surrender enough annuity income to keep wealth non-negative.  This means that on the boundary $w=0$, she executes instantaneous control, so we expect the following Neumann condition:
\begin{equation}
(1-p)\bar{a}\psi _{w}(0,A)=\psi _{A}(0,A).  \label{BCii2}
\end{equation}
Moreover, if both her wealth and annuity income are $0$, ruin occurs; that is, 
\begin{equation}
\psi (0,0)=1.  \label{BCii3}
\end{equation}

Therefore, we need to solve for $\psi (w,A)$ in the region $ \mathcal{D} \triangleq \{(w,A):0\leq w\leq w_{s}(A),0\leq A<c\}$. Notice that the safe level $w_{s}(A) = (c - A) \bar a$ is different from the previous case.   With $\mathcal{D}$ thus redefined, we obtain the same verification theorem and corollary as Theorem \ref{lem:verf-lemma} and Corollary \ref{cor:verf-lemma}, respectively.  Please refer to the previous section for details.

\subsection{Solving for $\psi $ via Duality Arguments}

Through the course of our study, we determined that the optimal annuitization strategy for the individual to minimize her probability of lifetime ruin depends on the value of $p$.  We will show that when the penalty for surrendering is greater than $p^{\ast }$, a critical value to be determined later, the individual will not purchase any annuity until her wealth reaches the safe level $w_s(A)$, at which point she buys annuity income to cover the shortfall $c-A$.  On the other hand, if the penalty is low enough, namely $p<p^{\ast }$, the individual has incentive to annuitize partially; that is, the individual purchases additional annuity to cover part of the shortfall $c-A$ when her wealth is strictly below the safe level.  In this case, the individual will keep some wealth to invest in the risky financial market and spend the surplus to purchase annuity income.  We solve for the minimum probability of lifetime ruin $\psi$ for the first case $p\geq p^*$ in Section \ref{sec:p-big} and for the second case $p<p^*$ in Section \ref{sec:p-small}.  We also obtain the corresponding optimal annuitization and investment strategies.

\subsubsection{\label{sec:p-big} $p\geq p^*$}

When $p\geq p^{\ast }$, we hypothesize that in the domain $\mathcal{D} \backslash \{w=w_{s}(A)$ or $w=0\}$, the
optimal strategy for minimizing the probability of ruin is neither to purchase nor to surrender any annuity income.  Under this hypothesis, the first inequality in the HJB variational inequality (\ref{HJBIV}) holds with equality, and the minimum probability of ruin $\psi $ is the solution to the following BVP
\begin{equation}
\lambda ^{S}\psi =(rw-c+A)\psi _{w}+\min_{\pi }\left[ (\mu -r)\pi \psi _{w}+ \frac{1}{2}\sigma ^{2}\pi ^{2}\psi _{ww}\right],
\label{HJBii}
\end{equation}
with boundary conditions
\begin{equation}
\psi (w_{s}(A),A)=0,
\label{eq:4.5}
\end{equation}
\begin{equation}
(1-p)\bar{a}\psi _{w}(0,A)=\psi _{A}(0,A),
\label{eq:4.6}
\end{equation}
and
\begin{equation}
\psi (0,0)=1.
\label{eq:4.7}
\end{equation}
After solving the BVP, we will show that its solution satisfies the conditions of the Verification Theorem \ref{lem:verf-lemma} to verify our hypothesis.

As in Section \ref{sec:linearization}, we can define a related linear free-boundary problem via the Legendre transform.  Specifically, for $(w,A)\in \mathcal{D}$, define 
\begin{equation}
\hat{\psi}(y,A)=\min_{w\geq 0}[\psi (w,A)+wy].
\end{equation}
We can rewrite (\ref{HJBii}) as 
\begin{equation}
-\lambda ^{S}\hat{\psi}-(r-\lambda ^{S})y\hat{\psi}_{y}+my^{2}\hat{\psi}_{yy}+y(c-A)=0.  \label{FBPii}
\end{equation}
Its general solution is
\begin{equation}
\hat{\psi}(y,A)=D_{1}(A)y^{B_{1}}+D_{2}(A)y^{B_{2}}+\frac{c-A}{r}y,
\label{dFBPii}
\end{equation}
with $B_1 > 1$ and $B_2 < 0$ defined in (\ref{B}).

Define 
\begin{equation}
y_{0}(A)=-\psi _{w}(0,A),  \label{y0ii}
\end{equation}
and 
\begin{equation}
y_{s}(A)=-\psi _{w}(w_{s}(A),A).  \label{ybii}
\end{equation}
We get from \eqref{eq:4.6} and (\ref{y0ii}) that 
\begin{equation}
\begin{cases}
\hat{\psi}_{A}(y_{0}(A),A)=-(1-p)\bar{a}y_{0}(A), \\ 
\hat{\psi}_{y}(y_{0}(A),A)=0;
\end{cases}
\label{sellbdryii}
\end{equation}
from \eqref{eq:4.5} and (\ref{ybii}) that 
\begin{equation}
\begin{cases}
\hat{\psi}(y_{s}(A),A) = (c - A) \bar a y_{s}(A), \\ 
\hat{\psi}_{y}(y_{s}(A),A) = (c - A) \bar a;
\end{cases}
\label{buybdryii}
\end{equation}
and from \eqref{eq:4.7} and (\ref{y0ii}) that
\begin{equation}
\hat{\psi}(y_{0}(0),0)=1.
\label{ruinbdryii}
\end{equation}

Next, we determine $D_{1}(A)$ and $D_{2}(A)$ along with $y_0(A)$ and $y_{s}(A)$.  Rewrite (\ref{sellbdryii}), (\ref{buybdryii}), and \eqref{ruinbdryii} using (\ref{dFBPii}) to get 
\begin{eqnarray}
D_{1}(A)B_{1}y_{0}(A)^{B_{1}-1}+D_{2}(A)B_{2}y_{0}(A)^{B_{2}-1}+{\frac{c-A}{r}} &=&0,  \label{eq:ii1} \\
D_{1}^{\prime }(A)y_{0}(A)^{B_{1}-1}+D_{2}^{\prime }(A)y_{0}(A)^{B_{2}-1} &=& {\frac{1}{r}}-{\frac{1-p}{{r+\lambda ^{O}}}},  \label{eq:ii2} \\
D_{1}(A)B_{1}y_{s}(A)^{B_{1}-1}+D_{2}(A)B_{2}y_{s}(A)^{B_{2}-1} &=& {\frac{c-A}{r+\lambda ^{O}} - {\frac{c-A}{r}}},  \label{eq:ii3} \\
D_{1}(A)y_{s}(A)^{B_{1}-1}+D_{2}(A)y_{s}(A)^{B_{2}-1} &=&{\frac{c-A}{r+\lambda ^{O}}}-{\frac{c-A}{r}},  \label{eq:ii4} \\
D_{1}(0)y_{0}(0)^{B_{1}}+D_{2}(0)y_{0}(0)^{B_{2}}+{\frac{c}{r}}y_0(0) &=&1.
\label{eq:ii5}
\end{eqnarray}
From (\ref{eq:ii3}) and (\ref{eq:ii4}), we get 
\begin{eqnarray}
D_{1}(A) &=&-\frac{1-B_{2}}{B_{1}-B_{2}}\frac{\lambda ^{O}}{r(r+\lambda ^{O})}(c-A)\frac{1}{y_{s}(A)^{B_{1}-1}} < 0,  \label{iid1} \\
D_{2}(A) &=&-\frac{B_{1}-1}{B_{1}-B_{2}}\frac{\lambda ^{O}}{r(r+\lambda ^{O})}(c-A)\frac{1}{y_{s}(A)^{B_{2}-1}} < 0.  \label{iid2}
\end{eqnarray}
Then, substitute $D_{1}(A)$ and $D_{2}(A)$ into (\ref{eq:ii1}) to get 
\begin{equation}
\frac{\lambda ^{O}}{r+\lambda ^{O}}\left[ \frac{B_{1}(1-B_{2})}{B_{1}-B_{2}} \left( \frac{y_{0}(A)}{y_{s}(A)}\right) ^{B_{1}-1}+\frac{B_{2}(B_{1}-1)}{ B_{1}-B_{2}}\left( \frac{y_{0}(A)}{y_{s}(A)}\right) ^{B_{2}-1}\right] =1.
\label{iix}
\end{equation}
It is clear that $\dfrac{y_{0}(A)}{y_{s}(A)}$ is independent of $A$, and one can show that it is greater than $1$ through an  argument similar to the one following (\ref{eq:**}).  So, we define the constant 
\begin{equation}
x\triangleq \frac{y_{0}(A)\text{ }}{y_{s}(A)} . \label{iix1}
\end{equation}
Now, differentiate (\ref{iid1}) and (\ref{iid2}) with respect to $A$ and substitute into (\ref{eq:ii2}) to get 
\begin{equation}
\begin{split}
& \frac{dy_{s}(A)}{dA}(c-A)\frac{\lambda ^{O}}{r(r+\lambda ^{O})}\frac{(B_{1}-1)(1-B_{2})}{B_{1}-B_{2}}\left( x^{B_{1}}-x^{B_{2}}\right) \\
& \quad = -\frac{1-p}{r+\lambda ^{O}}xy_{s}(A)-y_{s}(A)\left\{ \frac{\lambda ^{O}}{r(r+\lambda ^{O})}\left[ \frac{1-B_{2}}{B_{1}-B_{2}}x^{B_{1}}+\frac{B_{1}-1}{B_{1}-B_{2}}x^{B_{2}}\right] -\frac{x}{r}\right\} .
\label{eq:4.25}
\end{split}
\end{equation}
Solve (\ref{iix}) for $x^{B_{2}-1}$ to simplify \eqref{eq:4.25} and obtain 
\begin{equation}
\frac{1}{y_{s}(A)}\frac{dy_{s}(A)}{dA} = \frac{K}{c-A},  \label{iidy}
\end{equation}
in which
\begin{equation}
K=\frac{-\dfrac{B_{2}}{1-B_{2}}\dfrac{1-p}{r+\lambda ^{O}}+\dfrac{\lambda ^{O}}{r(r+\lambda ^{O})}x^{B_{1}-1}-\dfrac{1}{r}}{\dfrac{\lambda ^{O}}{r(r+\lambda^{O})}x^{B_{1}-1}-\dfrac{1}{r}}
\label{iiK}
\end{equation}

Define the critical value $p^{\ast }$ as follows:
\begin{equation}
p^{\ast }\triangleq \frac{1}{B_{2}}-\frac{1-B_{2}}{B_{2}}\frac{\lambda ^{O}}{r} \left(x^{B_{1}-1}-1 \right).
\label{eq:pstar}
\end{equation}
It is straightforward to show that $K\geq 0$ iff $p\geq p^{\ast }$.  As we mentioned, we only consider the case $p\geq p^{\ast }$ here and leave the discussion for $p<p^*$ in Section \ref{sec:p-small}.  The expressions in \eqref{iix1} and
(\ref{iidy}) imply that
\begin{equation}
y_{0}(A) = \left(\frac{c}{c-A} \right)^{K}y_{0}(0),  \label{iiy0}
\end{equation}
and
\begin{equation}
y_{s}(A)=\frac{y_{0}(A)\text{ }}{x}.  \label{iiys}
\end{equation}

We determine the value of $y_{0}(0)$ by substituting (\ref{iid1}) and (\ref{iid2}) into (\ref{eq:ii5}):
\begin{equation}
\frac{1}{y_{0}(0)}=  \frac{c}{r} \left[ 1-\frac{\lambda ^{O}}{r+\lambda ^{O}}\frac{1-B_{2}}{B_{1}-B_{2}}x^{B_{1}-1}-\frac{\lambda ^{O}}{r+\lambda ^{O}}\frac{B_{1}-1}{B_{1}-B_{2}}x^{B_{2}-1}\right] .
\label{iiy00}
\end{equation}
By solving for $x^{B_2-1}$ from (\ref{iix}) and substituting it into (\ref{iiy00}), we get
\begin{equation}
\frac{1}{y_0(0)} = \frac{c}{r} \left( - \frac{1 - B_2}{B_2} \right) \left( 1-\frac{\lambda^O}{r+\lambda^O} \, x^{B_1-1} \right) >0.
\label{ineq:y00}
\end{equation}
The inequality in \eqref{ineq:y00} holds because $x^{B_1 - 1} < \left(r + \lo \right)/\lo$, which is straightforward to show from equation \eqref{iix} and the fact that the left-hand of that equation is increasing with respect to $x$.  From this inequality, we conclude that both $y_s(A)$ and $y_0(A)$ are positive for $(w,A)\in \mathcal{D}$.

\begin{proposition}
\label{prop:psihatii}The solution $\hat{\psi}(y,A)$ for the FBP \eqref{FBPii} with conditions \eqref{sellbdryii}, \eqref{buybdryii}, and \eqref{ruinbdryii} is given by \eqref{dFBPii}, with $D_{1}(A)$,  $D_{2}(A)$, $y_{0}(0)$, $y_{0}(A)$, $y_{s}(A)$, $x$, and $K$ defined in \eqref{iid1}, \eqref{iid2}, \eqref{iiy00}, \eqref{iiy0}, \eqref{iiys}, \eqref{iix}, and \eqref{iiK}, respectively.
\end{proposition}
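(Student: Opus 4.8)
The proposition records the outcome of the computation carried out in Section~\ref{sec:p-big}, so the proof is a verification that the objects assembled there solve the free-boundary problem \eqref{FBPii} with conditions \eqref{sellbdryii}, \eqref{buybdryii}, \eqref{ruinbdryii}, together with a check that the free boundaries are well defined and positive; the plan is to retrace the derivation, making sure each step is reversible. First, substituting $\hat\psi = y^{B}$ into the homogeneous part of the linear second-order ODE \eqref{FBPii}, in which $A$ enters only as a parameter, gives the characteristic equation $mB^{2} - (m + r - \ls)B - \ls = 0$, whose roots are $B_{1,2}$ of \eqref{B}; since this upward-opening quadratic is negative both at $B = 0$ and at $B = 1$, the roots satisfy $B_{2} < 0 < 1 < B_{1}$. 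A direct substitution shows $\tfrac{c-A}{r}\,y$ is a particular solution, so \eqref{dFBPii} is the general solution of \eqref{FBPii}, with $D_{1}, D_{2}$ arbitrary functions of $A$.

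Next, using $\hat\psi_{y} = D_{1} B_{1} y^{B_{1}-1} + D_{2} B_{2} y^{B_{2}-1} + \tfrac{c-A}{r}$ and $\hat\psi_{A} = D_{1}'(A)y^{B_{1}} + D_{2}'(A)y^{B_{2}} - \tfrac{y}{r}$, I would translate the three boundary conditions into the system \eqref{eq:ii1}--\eqref{eq:ii5} in the unknowns $D_{1}(\cdot)$, $D_{2}(\cdot)$, $y_{0}(\cdot)$, $y_{s}(\cdot)$, and the scalar $y_{0}(0)$. The two equations \eqref{eq:ii3}--\eqref{eq:ii4} coming from the safe-level condition \eqref{buybdryii} form a $2\times 2$ linear system for $D_{1}(A)y_{s}^{B_{1}-1}$ and $D_{2}(A)y_{s}^{B_{2}-1}$ with a common right-hand side; solving it yields \eqref{iid1}--\eqref{iid2}, and the sign conditions $D_{1}(A) < 0$, $D_{2}(A) < 0$ on $\mathcal{D}$ then follow from $B_{2} < 0 < 1 < B_{1}$, $\lo, r > 0$, $c - A > 0$, and the positivity of $y_{s}(A)$ obtained below.

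It remains to pin down the free boundaries. Substituting \eqref{iid1}--\eqref{iid2} into \eqref{eq:ii1} produces \eqref{iix}, in which $y_{0}(A)$ and $y_{s}(A)$ appear only through their ratio; since all other quantities there are constant, the ratio is a constant $x$, and the monotonicity argument used after \eqref{eq:**} (the left-hand side of \eqref{iix} is increasing, equals $\lo/(r+\lo) < 1$ at argument $1$, and tends to $+\infty$) shows that $x$ is the unique admissible root and that $x > 1$; the same monotonicity gives the bound $x^{B_{1}-1} < (r+\lo)/\lo$. Differentiating \eqref{iid1}--\eqref{iid2} in $A$, substituting into \eqref{eq:ii2}, and using \eqref{iix} to eliminate $x^{B_{2}-1}$ then collapses the system to the separable linear ODE \eqref{iidy} with the $A$-independent constant $K$ of \eqref{iiK} (which is nonnegative precisely when $p \ge p^{*}$, $p^{*}$ as in \eqref{eq:pstar}, the regime under consideration); integrating gives $y_{s}(A) \propto (c-A)^{-K}$, hence \eqref{iiys}, and $y_{0} = x\,y_{s}$ gives \eqref{iiy0}. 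Finally \eqref{eq:ii5}, i.e.\ the ruin condition \eqref{ruinbdryii}, determines $y_{0}(0)$ through \eqref{iiy00}, which simplifies via \eqref{iix} to \eqref{ineq:y00}; the inequality there, a consequence of $x^{B_{1}-1} < (r+\lo)/\lo$, shows $y_{0}(0) > 0$, whence $y_{0}(A) > 0$ and $y_{s}(A) = y_{0}(A)/x > 0$ throughout $\mathcal{D}$, and the verification is complete.

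The main obstacle is the reduction in the last paragraph: one must verify that differentiating \eqref{iid1}--\eqref{iid2}, inserting the result into \eqref{eq:ii2}, and then exploiting both the constancy of $x$ and the relation \eqref{iix} genuinely yields a \emph{first-order linear} ODE for $y_{s}$ with a \emph{constant} coefficient --- this cancellation is precisely what renders the free-boundary problem explicitly solvable and is not apparent until the algebra is carried out. A secondary point requiring care is the constancy of $x$: although \eqref{iix} is manifestly $A$-free, one must confirm from the monotonicity argument that it has a unique admissible root exceeding $1$, so that $y_{0}(A)/y_{s}(A)$ cannot jump between branches as $A$ varies.
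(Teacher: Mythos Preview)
Your proposal is correct and follows essentially the same approach as the paper: the proposition is simply a summary of the derivation \eqref{eq:ii1}--\eqref{ineq:y00} carried out in Section~\ref{sec:p-big}, and you have faithfully retraced that computation while adding explicit checks of reversibility, uniqueness of $x$, and positivity of the free boundaries. The paper does not provide a separate formal proof, treating the proposition as a record of the preceding calculation.
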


Notice that we can rewrite the inequalities (\ref{ineqi1})
and (\ref{ineqi2}) in terms of $\hat{\psi}$ as 
\begin{eqnarray}
\hat{\psi}_{A}(y,A) &\geq &-\frac{1}{r+\lambda ^{O}}y,  \label{dineqii1} \\
\hat{\psi}_{A}(y,A) &\leq &-\frac{1-p}{r+\lambda ^{O}}y.  \label{dineqii2}
\end{eqnarray}

\begin{proposition}
\label{prop:psihat_concaveii}
$\hat{\psi}(y,A)$ given by Proposition \ref{prop:psihatii} is concave and satisfies inequalities (\ref{dineqii1}) and (\ref{dineqii2}).
\end{proposition}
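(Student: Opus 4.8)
The plan is to mirror the proof of Proposition \ref{prop:psihat_concave}. Concavity of $\hat{\psi}(\cdot,A)$ is immediate from the closed form \eqref{dFBPii}: differentiating twice in $y$ gives $\hat{\psi}_{yy}(y,A)=D_{1}(A)B_{1}(B_{1}-1)y^{B_{1}-2}+D_{2}(A)B_{2}(B_{2}-1)y^{B_{2}-2}$, and since $D_{1}(A)<0$ and $D_{2}(A)<0$ by \eqref{iid1}--\eqref{iid2}, while $B_{1}(B_{1}-1)>0$ (as $B_{1}>1$) and $B_{2}(B_{2}-1)>0$ (as $B_{2}<0$), we get $\hat{\psi}_{yy}(y,A)<0$ for every $y>0$. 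This covers the relevant range $y_{s}(A)\le y\le y_{0}(A)$, which lies in $(0,\infty)$ because $y_{0}(A)=x\,y_{s}(A)>y_{s}(A)>0$ with $x>1$, as noted after \eqref{ineq:y00}.

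For the two inequalities, I would first produce a clean closed form for $\hat{\psi}_{A}$. From \eqref{dFBPii}, $\hat{\psi}_{A}(y,A)=D_{1}'(A)y^{B_{1}}+D_{2}'(A)y^{B_{2}}-\tfrac{y}{r}$; differentiating \eqref{iid1}--\eqref{iid2}, which depend on $A$ both through the prefactor $c-A$ and through $y_{s}(A)$, and then using \eqref{iidy} in the form $(c-A)\,y_{s}'(A)/y_{s}(A)=K$ to eliminate $y_{s}'(A)$, one collapses everything to
\begin{equation*}
\hat{\psi}_{A}(y,A)=y\left[\frac{\lambda^{O}}{r(r+\lambda^{O})}\,g\!\left(\frac{y}{y_{s}(A)}\right)-\frac1r\right],\qquad g(z)=\frac{(1-B_{2})\bigl(1+(B_{1}-1)K\bigr)}{B_{1}-B_{2}}\,z^{B_{1}-1}+\frac{(B_{1}-1)\bigl(1+(B_{2}-1)K\bigr)}{B_{1}-B_{2}}\,z^{B_{2}-1}.
\end{equation*}
Substituting this into \eqref{dineqii1}--\eqref{dineqii2} and dividing by $y>0$ reduces both to the single two-sided estimate $-\tfrac{1}{r+\lambda^{O}}\le \tfrac{\lambda^{O}}{r(r+\lambda^{O})}g(z)-\tfrac1r\le -\tfrac{1-p}{r+\lambda^{O}}$ for $z=y/y_{s}(A)\in[1,x]$, i.e., to showing $g$ stays between $g(1)$ and $g(x)$ on $[1,x]$.

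The endpoint values come for free from the construction. A direct evaluation gives $g(1)=1$ (the $K$-terms cancel in the numerator), whence $\hat{\psi}_{A}(y_{s}(A),A)=-y_{s}(A)/(r+\lambda^{O})$ --- consistent with differentiating the first relation in \eqref{buybdryii} in $A$ and using the second --- so \eqref{dineqii1} is an equality at $z=1$; and the first relation in \eqref{sellbdryii}, which was imposed as \eqref{eq:ii2}, gives $\hat{\psi}_{A}(y_{0}(A),A)=-(1-p)\bar{a}\,y_{0}(A)$, so \eqref{dineqii2} is an equality at $z=x$. It therefore remains only to show $g$ is nondecreasing on $[1,x]$. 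Computing $g'(z)=\dfrac{(B_{1}-1)(1-B_{2})}{B_{1}-B_{2}}\Bigl[z^{B_{1}-2}-z^{B_{2}-2}+K\bigl((B_{1}-1)z^{B_{1}-2}+(1-B_{2})z^{B_{2}-2}\bigr)\Bigr]$, one checks that for $z\ge1$ the bracket is a sum of nonnegative terms: $z^{B_{1}-2}\ge z^{B_{2}-2}$ because $B_{1}-2>B_{2}-2$, and $K\ge0$ because $p\ge p^{\ast}$ (the equivalence stated after \eqref{eq:pstar}), with $B_{1}-1>0$ and $1-B_{2}>0$. Hence $g'(z)\ge0$ on $[1,x]$, and combining this with the endpoint identities yields \eqref{dineqii1} and \eqref{dineqii2}.

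The main obstacle is the bookkeeping in the second paragraph: carrying the full $A$-dependence of $D_{1},D_{2}$ through the differentiation and invoking \eqref{iidy} at the right moment to obtain the clean form of $\hat{\psi}_{A}$ with the constant coefficient $K$. Once that form is in hand, the concavity, the endpoint checks, and the monotonicity of $g$ are routine, and the sole role of the hypothesis $p\ge p^{\ast}$ is to guarantee $K\ge0$ in the sign analysis of $g'$ --- precisely the point where the case $p<p^{\ast}$ (treated in Section \ref{sec:p-small}) must behave differently.
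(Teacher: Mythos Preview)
Your proof is correct and follows essentially the same approach as the paper: concavity from the signs of $D_i(A)$ and $B_i(B_i-1)$, then a closed form for $\hat{\psi}_A$ obtained by differentiating \eqref{iid1}--\eqref{iid2} and invoking \eqref{iidy}, reduction of \eqref{dineqii1}--\eqref{dineqii2} to a two-sided bound on an auxiliary function of $z=y/y_s(A)$ on $[1,x]$, endpoint equalities from the boundary conditions, and monotonicity via $K\ge 0$. The only cosmetic difference is that the paper defines its auxiliary function with the opposite sign (so it shows $g$ is \emph{decreasing} rather than increasing), but the content of the derivative computation and the role of $K\ge 0$ are identical.
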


\begin{proof}
The proof of the concavity of $\hat \psi$ with respect to $y$ follows from the observations that both $D_1(A) < 0$ and $D_2(A) < 0$ and that both $B_1(B_1 - 1) > 0$ and $B_2(B_2 - 1) > 0$.

To prove the inequalities, differentiate (\ref{iid1}) and (\ref{iid2}) with respect to $A$, substitute those expressions into $\hat{\psi}_{A}(y,A)$, use (\ref{iidy}) to simplify, and obtain 
\begin{equation}
\begin{split}
\hat{\psi}_{A}(y,A)& =yK\frac{\lambda ^{O}}{r(r+\lambda ^{O})}\frac{(B_{1}-1)(1-B_{2})}{B_{1}-B_{2}}\left[ \left( \frac{y}{y_{s}(A)}\right)^{B_{1}-1}-\left( \frac{y}{y_{s}(A)}\right) ^{B_{2}-1}\right] \\
& \quad +y\left\{ \frac{\lambda ^{O}}{r(r+\lambda ^{O})}\left[ \frac{1-B_{2}}{B_{1}-B_{2}}\left( \frac{y}{y_{s}(A)}\right) ^{B_{1}-1}+\frac{B_{1}-1}{B_{1}-B_{2}}\left( \frac{y}{y_{s}(A)}\right) ^{B_{2}-1}\right] -\frac{1}{r} \right\} .
\end{split}
\end{equation}
Then, rewrite inequalities (\ref{dineqii1}) and (\ref{dineqii2}) in the equivalent form as 
\begin{equation}
\begin{split}
1& \geq -K\frac{\lambda ^{O}}{r}\frac{(B_{1}-1)(1-B_{2})}{B_{1}-B_{2}}\left[
\left( \frac{y}{y_{s}(A)}\right) ^{B_{1}-1}-\left( \frac{y}{y_{s}(A)}\right)
^{B_{2}-1}\right] \\
& \quad -\frac{\lambda ^{O}}{r}\left[ \frac{1-B_{2}}{B_{1}-B_{2}}\left( \frac{y}{y_{s}(A)}\right) ^{B_{1}-1}+\frac{B_{1}-1}{B_{1}-B_{2}}\left( \frac{y}{y_{s}(A)}\right) ^{B_{2}-1}\right] +\frac{r+\lambda ^{O}}{r}\geq 1-p.
\end{split}
\label{iieq}
\end{equation}
To prove (\ref{iieq}), define the function $g$ by 
\begin{equation*}
g(z)=-K\frac{(B_{1}-1)(1-B_{2})}{B_{1}-B_{2}}\left[ z^{B_{1}-1}-z^{B_{2}-1}\right] -\left[ \frac{1-B_{2}}{B_{1}-B_{2}}z^{B_{1}-1}+\frac{B_{1}-1}{B_{1}-B_{2}}z^{B_{2}-1}\right] .
\end{equation*}
For $z\geq 1$, $g$ is decreasing because 
\begin{equation} \label{eq:gdecr}
\begin{split}
g^{\prime }(z)& =-K\frac{(B_{1}-1)(1-B_{2})}{B_{1}-B_{2}}\left[(B_{1}-1)z^{B_{1}-2}+(1-B_{2})z^{B_{2}-2}\right] \\
& \quad -\frac{(B_{1}-1)(1-B_{2})}{B_{1}-B_{2}}\left[ z^{B_{1}-2}-z^{B_{2}-2}
\right] \leq 0.
\end{split}
\end{equation}
Also, the first inequality in \eqref{iieq} holds with equality when $y = y_s(A)$, and the second inequality holds with equality when $y = y_0(A)$.  Therefore, (\ref{iieq}) holds for $y_{s}(A)\leq y\leq y_{0}(A)$.
\end{proof}

Since $\hat{\psi}$ is concave, we can define its convex dual via the Legendre transform:
\begin{equation}
\Psi (w,A)=\max_{y\geq y_s(A)} \left[\hat{\psi}(y,A)-wy \right].  \label{cpsi}
\end{equation}
As in Section \ref{sec:dual-dual}, we can prove that $\Psi$ is the minimum probability of ruin $\psi$, and we have the
following theorem.

\begin{thm} \label{thm:4.1}
When $p\geq p^{\ast }$ and the borrowing restriction is enforced, the minimum probability of ruin $\psi(w, A)$ for $(w,A)\in \mathcal{D}$, in which $\mathcal D$ is defined by $ \mathcal{D} = \{(w,A):0\leq w\leq w_{s}(A),0\leq A<c\}$, is given by $\Psi (w,A)$ in \eqref{cpsi}. The associated optimal annuitization and investment strategies are given by

\begin{enumerate}
\item to surrender existing annuity income instantaneously to keep wealth non-negative as needed;

\item to purchase additional annuity income only when wealth reaches $w_{s}(A)$, the safe level;

\item for $w\in \mathcal{D}\backslash \{w=w_{s}(A)\}$, to invest the following amount of wealth in the risky asset:
\begin{equation}
\pi ^*(w,A)=-\frac{\mu -r}{\sigma ^{2}}\frac{\psi _{w}(w,A)}{\psi _{ww}(w,A)}.  \label{iipi}
\end{equation}
\end{enumerate}
\end{thm}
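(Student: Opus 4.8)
The plan is to mirror Section~\ref{sec:dual-dual}, changing only the boundary data to reflect the borrowing restriction. Starting from the concave function $\hat{\psi}$ of Proposition~\ref{prop:psihatii} (concavity being Proposition~\ref{prop:psihat_concaveii}), I define $\Psi$ by the Legendre transform \eqref{cpsi} and verify that it satisfies every hypothesis of the Verification Theorem~\ref{lem:verf-lemma} on the redefined region $\mathcal{D}=\{(w,A):0\le w\le w_{s}(A),\ 0\le A<c\}$ and, in addition, is the probability of ruin of the admissible strategy described in items~1--3. Corollary~\ref{cor:verf-lemma} then yields simultaneously $\Psi=\psi$ and the optimality of that strategy.

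For Condition~1, substituting $y=-\Psi_{w}(w,A)$ into the linear equation \eqref{FBPii} turns it into $\lambda^{S}\Psi=(rw-c+A)\Psi_{w}-m\,\Psi_{w}^{2}/\Psi_{ww}$, which is \eqref{HJBii} with the minimization over $\pi$ carried out, i.e.\ $\min_{\pi}\mathcal{L}^{\pi}\Psi=0$ with feedback minimizer $\pi^{*}(w,A)=-\frac{\mu-r}{\sigma^{2}}\frac{\Psi_{w}(w,A)}{\Psi_{ww}(w,A)}$; since $\Psi$ is the Legendre transform of the strictly concave $\hat{\psi}$ (strict because $\hat{\psi}_{yy}<0$, as $D_{1},D_{2}<0$ and $B_{1}(B_{1}-1)>0$, $B_{2}(B_{2}-1)>0$), $\Psi$ is convex in $w$ with $\Psi_{ww}>0$, so the minimum is genuine and $\mathcal{L}^{\pi}\Psi\ge0$ for every $\pi$. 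Conditions~2 and~3 are exactly the inequalities \eqref{dineqii1}--\eqref{dineqii2} of Proposition~\ref{prop:psihat_concaveii} read through the Legendre transform, just as \eqref{3.66}--\eqref{3.67} arise in Section~\ref{sec:dual-dual}.

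Next I would translate the free-boundary data. The pair \eqref{buybdryii} shows that the $w$-value dual to $y_{s}(A)$ is $w_{s}(A)=(c-A)\bar{a}$ and that $\Psi(w_{s}(A),A)=\hat{\psi}(y_{s}(A),A)-w_{s}(A)y_{s}(A)=0$; the relation $\hat{\psi}_{y}(y_{0}(A),A)=0$ in \eqref{sellbdryii} shows the $w$-value dual to $y_{0}(A)$ is $0$, so that, using the envelope identity $\Psi_{A}(w,A)=\hat{\psi}_{A}(y^{*},A)$ with $\hat{\psi}_{A}(y_{0}(A),A)=-(1-p)\bar{a}\,y_{0}(A)$ and $\Psi_{w}(0,A)=-y_{0}(A)$, one gets the Neumann condition $(1-p)\bar{a}\,\Psi_{w}(0,A)=\Psi_{A}(0,A)$ of \eqref{eq:4.6}; and \eqref{ruinbdryii} gives $\Psi(0,0)=1$, which is Condition~4 here because in this section $\tau_{0}$ has collapsed to the hitting time of the single point $(0,0)$. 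Monotonicity and boundedness then follow: $\Psi_{w}(w,A)=-y^{*}(A)\le-y_{s}(A)<0$ (positivity of $y_{s}$ and $y_{0}$ was noted just after \eqref{ineq:y00}), so $\Psi$ is decreasing in $w$ and $\Psi\ge\Psi(w_{s}(A),A)=0$, while $\Psi_{A}(0,A)=-(1-p)\bar{a}\,y_{0}(A)<0$ gives $\Psi(0,A)\le\Psi(0,0)=1$; hence $0\le\Psi\le1$ on $\mathcal{D}$. Since the Legendre transform of the explicit $\hat{\psi}$ is $C^{2}$ in $w$ on the interior of $\mathcal{D}$ (with one-sided $w$-derivatives at $w=w_{s}(A)$) and $C^{1}$ in $A$, Theorem~\ref{lem:verf-lemma} applies and gives $\Psi\le\psi$ on $\mathcal{D}$.

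For the reverse inequality I would exhibit the strategy of items~1--3 and compute its ruin probability by It\^{o}'s formula. Keeping $A$ fixed, invest $\pi^{*}$ while $0<w<w_{s}(A)$; when $w$ reaches $w_{s}(A)$, purchase $(c-A)$ units of annuity income, which moves the state to $(0,c)$ and makes ruin impossible; and on $\{w=0,\ A>0\}$ surrender annuity income at exactly the local-time rate needed to keep $W_{t}\ge0$ --- it is here that the Neumann condition \eqref{eq:4.6} is used to construct the reflected process. Applying It\^{o} to $e^{-\lambda^{S}t}\Psi(W_{t},A_{t})$ along this strategy up to $\tau_{0}\wedge\tau_{n}$, every inequality used in the proof of Theorem~\ref{lem:verf-lemma} becomes an equality: $\mathcal{L}^{\pi^{*}}\Psi=0$ off the boundaries, the $d(A^{+})^{(c)}$ term is absent (the only purchase is a jump that terminates the game at $w_{s}(A)$), and the $d(A^{-})^{(c)}$ term integrates $(1-p)\bar{a}\Psi_{w}-\Psi_{A}$, which vanishes on $\{w=0\}$; after taking expectations, using the vanishing of the stochastic integral, and letting $n\to\infty$, one obtains $\Psi(w,A)=\E\!\left[e^{-\lambda^{S}\tau_{0}}\Psi(W_{\tau_{0}},A_{\tau_{0}})\right]=\E\!\left[e^{-\lambda^{S}\tau_{0}}\right]$ since $\Psi(W_{\tau_{0}},A_{\tau_{0}})=\Psi(0,0)=1$, and by \eqref{exp:ruin1} this is the ruin probability of the constructed strategy, whence $\Psi\ge\psi$; Corollary~\ref{cor:verf-lemma} then closes the argument. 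I expect the main obstacle to be precisely this last step --- rigorously constructing the ``surrender just enough'' control (showing the reflected wealth process exists, that the associated cumulative surrender $A^{-}$ is adapted, non-decreasing, and keeps $A_{t}\ge0$, and that its continuous local-time part is integrated against $(1-p)\bar{a}\Psi_{w}-\Psi_{A}=0$ so no value leaks at the reflecting boundary) --- together with the accompanying boundary bookkeeping now that ruin has shrunk to the corner $(0,0)$ while the rest of $\{w=0\}$ is governed by the Neumann condition instead of a Dirichlet value of $1$.
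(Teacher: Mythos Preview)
Your proposal is correct and follows essentially the same approach as the paper, which simply says ``As in Section~\ref{sec:dual-dual}, we can prove that $\Psi$ is the minimum probability of ruin $\psi$'' and invokes Corollary~\ref{cor:verf-lemma}. In fact you supply more detail than the paper does---particularly on the reflected-wealth construction at $w=0$ and the It\^{o} computation showing that every inequality in the verification argument becomes an equality along the candidate strategy---so your caveat about rigorously constructing the surrender-at-the-boundary control identifies a point the paper itself leaves implicit.
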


It is clear from Theorem \ref{thm:4.1} that the optimal annuitization strategy is independent of the surrender charge $p$ as long as $p \ge p^*$.  However, it is not clear how the optimal investment strategy and the minimum probability of ruin vary with $p$.  We investigate this in the next proposition.

\begin{proposition} \label{prop:iipi}
$\pi ^{\ast }(w,A)$ given in \eqref{iipi} is independent of the surrender charge $p$, and the probability of ruin $\psi(w, A)$  increases with respect to $p$.
\end{proposition}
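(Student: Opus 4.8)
The plan is to exploit the explicit dual representation from Proposition \ref{prop:psihatii}: the entire solution is pinned down by the single constant $x = y_0(A)/y_s(A) > 1$, which is determined by equation \eqref{iix}, and crucially \emph{$x$ does not depend on $p$}. Indeed, the right-hand side of \eqref{iix} is $\lambda^O/(r+\lambda^O)$ times $f(x)$ for an increasing function $f$, set equal to $1$; no $p$ appears. So the first task is to observe that $x$, and hence $B_1, B_2, m$, are all $p$-free. The $p$-dependence enters the solution only through $K$ (via \eqref{iiK}) and through the overall scale $y_0(0)$ --- but \eqref{ineq:y00} shows $y_0(0)$ depends only on $x$ and the structural constants, so $y_0(0)$ is also $p$-free. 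Therefore the only place $p$ enters is through $K$, and $K$ appears only in the formula \eqref{iiy0} for $y_0(A)$ (and hence $y_s(A) = y_0(A)/x$), i.e. $p$ only rescales the free boundaries $y_s(A), y_0(A)$ in the $A$-direction.

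For the first claim (invariance of $\pi^*$), I would work in the dual. We have $\pi^*(w,A) = -\frac{\mu-r}{\sigma^2}\,\psi_w/\psi_{ww}$, and using the Legendre-transform identities $\psi_w(w,A) = -y$ at the conjugate point and $\psi_{ww} = -1/\hat\psi_{yy}$, this becomes $\pi^*(w,A) = \frac{\mu-r}{\sigma^2}\, y\, \hat\psi_{yy}(y,A)$ evaluated at $y = y^*(A) = -\psi_w(w,A)$. Now $\hat\psi_{yy}(y,A) = D_1(A)B_1(B_1-1)y^{B_1-2} + D_2(A)B_2(B_2-1)y^{B_2-2}$. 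Substituting the formulas \eqref{iid1}--\eqref{iid2} for $D_1(A), D_2(A)$, every occurrence of $y_s(A)$ combines with $y$ into the ratio $y/y_s(A)$; so $y\,\hat\psi_{yy}(y,A)$ depends on $(w,A)$ only through $A$ (via $c-A$) and through the scale-free quantity $y/y_s(A)$. It remains to check that, \emph{as a function of $w$ and $A$}, the ratio $y^*(A)/y_s(A)$ is itself $p$-free. For fixed $(w,A)\in\mathcal D$, $y^*(A)$ is determined by $\hat\psi_y(y^*,A) = w$, i.e. $w = D_1(A)B_1 y^{B_1-1} + D_2(A)B_2 y^{B_2-1} + \frac{c-A}{r}$; again substituting \eqref{iid1}--\eqref{iid2} turns the right side into a function of $(c-A)$ and $y/y_s(A)$ alone. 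Hence for fixed $(w,A)$ the equation defining $y^*/y_s(A)$ has no $p$ in it, so $y^*/y_s(A)$ is $p$-independent, and therefore so is $\pi^*(w,A)$.

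For the monotonicity of $\psi$ in $p$, I would use that $\psi(w,A) = \Psi(w,A) = \hat\psi(y^*,A) - w y^*$ with $y^* = y^*(w,A)$, and by the envelope property $\partial_p \psi = \partial_p\hat\psi$ evaluated at the conjugate point. Alternatively, and more cleanly, I'd argue directly: a smaller surrender charge enlarges the set of admissible strategies (any strategy feasible under charge $p_1$ remains feasible under $p_2 < p_1$, since surrendering yields weakly more cash and the ruin set $\{w + (1-p)\bar a A \le 0\}$ shrinks), so the infimum defining $\psi$ in \eqref{exp:ruin} is weakly smaller; hence $\psi$ is non-decreasing in $p$. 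To get the result as stated from the explicit solution, I would instead differentiate: since only $K$ carries $p$, compute $\partial\psi/\partial p$ through $\partial K/\partial p$. From \eqref{iiK}, $K$ is affine in $p$ with $\partial K/\partial p = \frac{-B_2/(1-B_2)}{\frac{\lambda^O}{r(r+\lambda^O)}x^{B_1-1} - \frac1r} \cdot \frac{-1}{r+\lambda^O}$; the denominator is negative by the bound $x^{B_1-1} < (r+\lambda^O)/\lambda^O$ established after \eqref{ineq:y00}, and $-B_2/(1-B_2) > 0$ since $B_2 < 0$, so $\partial K/\partial p > 0$. Then from \eqref{iiy0}, $y_0(A) = (c/(c-A))^K y_0(0)$, so for $0 < A < c$ we have $\log(c/(c-A)) > 0$ and hence $\partial y_0(A)/\partial p > 0$, and likewise $\partial y_s(A)/\partial p > 0$. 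Finally, for fixed $(w,A)$, larger $y_s(A)$ pushes the conjugate variable and, tracing through $\Psi(w,A) = \hat\psi(y^*,A) - w y^*$, increases $\psi$; this last monotonicity step is the one place a short computation (or the envelope/comparison argument above) is needed.

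The main obstacle is the final step: converting ``$y_s(A)$ increases in $p$'' into ``$\psi(w,A)$ increases in $p$.'' The slickest route is the admissibility/comparison argument (smaller $p$ = more options = smaller infimum), which avoids differentiating the implicitly defined $y^*(w,A)$ altogether; I would present that as the proof of monotonicity and use the explicit $\partial K/\partial p > 0$ computation only as the consistency check that the formulas reflect it. Everything else --- the $p$-independence of $x$, $B_{1,2}$, $y_0(0)$, and ultimately of $\pi^*$ --- is bookkeeping on the closed-form expressions in Proposition \ref{prop:psihatii}.
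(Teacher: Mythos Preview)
Your argument for the $p$-independence of $\pi^*$ is essentially the paper's proof, just parametrized differently: the paper packages the dual variable as $z=\bigl(\frac{c-A}{c}\bigr)^{K}y$, you use $y/y_s(A)$, and these differ by the $p$-free factor $x/y_0(0)$. In both cases the equation $\hat\psi_y(y,A)=w$ pins down this ratio by an equation containing no $p$, and $y\hat\psi_{yy}$ is a function of $(c-A)$ and this ratio alone. (Minor sign slip: $\pi^*=-\frac{\mu-r}{\sigma^2}\,y\,\hat\psi_{yy}$, not $+$; irrelevant to the conclusion.)

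For the monotonicity in $p$ you sketch two routes and leave the analytic one unfinished. The paper takes exactly that analytic route and closes it as follows: write $\psi(w,A)=\hat\psi(y,A)-wy$ with $y$ determined by $\hat\psi_y=w$; the combination $D_1(A)y^{B_1}+D_2(A)y^{B_2}$ can be rewritten purely in terms of $z$ and $p$-free coefficients, so $\partial\psi/\partial p=\bigl(\tfrac{c-A}{r}-w\bigr)\,\partial y/\partial p$. Since $z$ is $p$-free one gets $\partial y/\partial p=-\ln\!\bigl(\tfrac{c-A}{c}\bigr)\,\tfrac{\partial K}{\partial p}\,y>0$ for $0<A<c$, and $\tfrac{c-A}{r}-w>0$ on $\mathcal D$, giving the strict increase. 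Your computation of $\partial K/\partial p>0$ is correct and is the key ingredient.

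Your alternative comparison argument is valid but needs a small correction: in this section (borrowing restricted) the ruin set is $\{(0,0)\}$ and does \emph{not} depend on $p$; the monotonicity comes solely from the wealth dynamics, since the same controls under a smaller $p$ produce a weakly larger wealth process (via the $(1-p)\bar a\,dA^-$ term), hence a weakly later hitting time of $(0,0)$. This yields only non-decrease in $p$, whereas the paper's explicit differentiation gives strict increase for $A>0$.
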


\begin{proof}
Fix $w$ and $A$. Given $w$, the corresponding $y$ is defined by (\ref{cpsi}) as
\begin{equation}
w=\hat{\psi}_{y}(y,A),  \label{temp1}
\end{equation}
which implies that $\psi _{w}(w,A)=-y$ and $\psi_{ww}(w,A) = -1/\hat{\psi}_{yy}(y,A)$.
Thus, we can write the optimal investment amount as
\begin{equation}
\pi ^{\ast }(w,A)= - \frac{\mu -r}{\sigma ^{2}}y\hat{\psi}_{yy}(y,A).
\label{temp2}
\end{equation}
By substituting (\ref{dFBPii}), (\ref{iid1}), (\ref{iid2}), (\ref{iiy0}), and (\ref{iiys}) into (\ref{temp1}) and (\ref{temp2}), we get the following two expressions, respectively:
\begin{equation}
w= \widetilde{D}_{1}(A)\left[ \left( \frac{c-A}{c}\right) ^{K}y\right]^{B_{1}-1} + \widetilde{D}_{2}(A)\left[ \left( \frac{c-A}{c}\right) ^{K}y\right] ^{B_{2}-1}+\frac{c-A}{r},
\label{iiw}
\end{equation}
and 
\begin{equation}
\pi ^*(w,A) = - \frac{\mu -r}{\sigma ^{2}} \left\{ (B_{1}-1)\widetilde{D}_{1}(A)\left[ \left( \frac{c-A}{c}\right) ^{K}y\right] ^{B_{1}-1}+  (B_{2}-1)\widetilde{D}_{2}(A)\left[ \left( \frac{c-A}{c}\right) ^{K}y\right] ^{B_{2}-1} \right\},  \label{iipi1}
\end{equation}
in which 
\begin{eqnarray}
\widetilde{D}_{1}(A) &=&-\frac{B_1(1-B_{2})}{B_{1}-B_{2}}\frac{\lambda ^{O}}{r(r+\lambda ^{O})}(c-A)\left( \frac{x}{y_{0}(0)}\right) ^{B_{1} - 1}, \\
\widetilde{D}_{2}(A) &=&-\frac{B_2(B_{1}-1)}{B_{1}-B_{2}}\frac{\lambda ^{O}}{r(r+\lambda ^{O})}(c-A)\left( \frac{x}{y_{0}(0)}\right) ^{B_{2} - 1}.
\end{eqnarray}
The numbers $x$ and $y_{0}(0)$ are independent of $p$ by (\ref{iix}) and (\ref{iiy00}), respectively.  Thus, $\widetilde{D}_{1}(A)$ and $\widetilde{D}_{2}(A)$ are also independent of $p$.  From (\ref{iiw}), we deduce that  $z = \left( \dfrac{c-A}{c}\right) ^{K}y$, which determines $\pi^*(w,A)$ via (\ref{iipi1}), does not depend on $p$.  Indeed, differentiate \eqref{iiw} with respect to $p$ to obtain
\begin{equation}
\begin{split}
0 &= \left[ \widetilde D_1(A) (B_1 - 1) z^{B_1 - 2} + \widetilde D_2(A) (B_2 - 1) z^{B_2 - 2} \right] \frac{\partial z}{\partial p} \\
&= \hat \psi_{yy}(y, A) \, \frac{x}{y_0(0)} \, \frac{\partial z}{\partial p}.
\end{split}
\end{equation}
Because $\hat \psi$ is strictly concave with respect to $y$ for $y_s(A) \le y \le y_0(A)$, it follows that $ \frac{\partial z}{\partial p} = 0$, from which we deduce that $z = \left( \dfrac{c-A}{c}\right) ^{K}y$ is independent of $p$.  Therefore, the optimal investment strategy $\pi^*(w,A)$ does not depend on $p$.

Next, we show that $\partial \psi(w, A)/ \partial p > 0$.  To this end, recall from \eqref{cpsi} that
\begin{equation}
\begin{split}
\psi(w, A) &= \hat \psi(y, A) - w y \\
&= \frac{x}{y_0(0)} \left[ \widetilde D_1(A) z^{B_1} + \widetilde D_2(A) z^{B_2} \right] + \left( \frac{c - A}{r} - w \right)y,
\end{split}
\end{equation}
in which $y$ is given by \eqref{temp1}.  Differentiate this expression with respect to $p$ to obtain
\begin{equation}
\begin{split}
\frac{\partial \psi(w, A)}{ \partial p} &=  \left( \frac{c - A}{r} - w \right) \frac{\partial y}{ \partial p} \propto \frac{\partial y}{ \partial p}  \\
& = - \ln \left( \frac{c - A}{c} \right) \, \frac{\partial K}{\partial p} \, y > 0,
\end{split}
\end{equation}
in which we use the fact that $z  = \left( \dfrac{c-A}{c}\right) ^{K}y$ is independent of $p$ in order to compute $\partial y/\partial p$, and we use the definition of $K$ in \eqref{iiK} to deduce that $\partial K/\partial p$ is positive.  Thus, the probability of ruin $\psi(w, A)$ increases as $p$ increases.
\end{proof}

\begin{remark}
Proposition \ref{prop:iipi} indicates that, when borrowing is restricted and $p\geq p^{\ast} $, an individual follows exactly the same investment and annuitization strategies regardless of the value of $p \ge p^*$.  The individual makes her decision based on her wealth and existing annuity income only.  It is not surprising that for given values of $w$ and $A$, the probability of ruin is smaller for a smaller $p$ because with a smaller surrender charge $p$, one receives more wealth when surrendering a given amount of annuity income.
\end{remark}

In this section, we determined the optimal annuitization and investment strategies and the corresponding minimum probability of ruin under the condition $p\geq p^{\ast }$.  The latter is equivalent to the condition $K\geq 0$, which plays a critical role in the proof of Proposition \ref{prop:psihat_concaveii}.  If $K$ were negative, then inequality (\ref{dineqii1}) would not hold for $y$ just above $y_{s}(A)$.  Consequently, $\Psi (w,A)$ would not satisfy Condition $2$ in the Verification Theorem \ref{lem:verf-lemma}.   From this, we infer that buying additional annuity income before reaching the safe level $w=w_{s}(A)$ might be optimal when $p<p^*$.  With this in mind, we proceed to the next section.

\subsubsection{\label{sec:p-small}$p<p^*$}

In this section, we consider the case for which $p<p^{\ast }$.  Define $\mathcal{D}_{1}\triangleq \{(w,A):0\leq w\leq w_{b}(A),0\leq A<c\}$ with $ w_{b}(A) \in [0, w_s(A)]$ to be specified later.  Also, define $\mathcal{D}_{2}\triangleq \{(w,A):w_{b}(A)< w \le w_{s}(A),0\leq A<c\}$, and note that $\mathcal{D} = \mathcal{D}_{1}\cup \mathcal{D}_{2}$.  As in the case for which $p \ge p^*$ in Section \ref{sec:p-big}, we only need to determine the minimum probability $\psi (w,A)$ for $(w,A)\in \mathcal{D}$.

We hypothesize that the following annuitization strategy is optimal:  If $(w,A)\in \mathcal{D}_{1}\backslash \{w=0$ or $w=w_{b}(A)\}$, the individual neither purchases or surrenders any life annuity income.  If $(w,A)\in \mathcal{D}_{2}$, the individual purchases just enough annuity income to reach the region $\mathcal{D}_{1}$.  That is to say, if she starts
with $(w,A)\in D_{2}$, the optimal strategy is to purchase $\Delta A$ of annuity income such that $w-{\Delta A}/(r+\lambda ^{O}) = w_{b}(A+\Delta A)$.  Thereafter, whenever wealth reaches the barrier $w_{b}(A)$, she keeps her portfolio of wealth and annuity income $(w,A)$ in the region $\mathcal{D}_{1}$ by instantaneously purchasing enough annuity income.  On the other hand, when wealth reaches $0$, the individual instantaneously surrenders enough annuity income to keep her wealth non-negative as we hypothesized in Section \ref{sec:p-big}. 

Ruin occurs only when $(w,A)=(0,0)$, at which point one has no existing annuity income to surrender to keep wealth non-negative.  Under the hypothesis for the optimal annuitization strategy, we anticipate that the associated minimum probability of ruin $\psi $ satisfies the following boundary-value problem.  After we solve this BVP, we will verify our hypothesis via Verification Theorem \ref{lem:verf-lemma}.

\begin{enumerate}
\item For $(w,A)\in \mathcal{D}_{1}$, $\psi(w,A)$ solves the following BVP:
\begin{equation}
\lambda ^{S}\psi =(rw-c+A)\psi _{w}+\min_{\pi }\left[ (\mu -r)\pi \psi _{w} + \frac{1}{2}\sigma ^{2}\pi ^{2}\psi _{ww}\right],\label{HJBiii}
\end{equation}
with boundary conditions
\begin{equation}
\bar{a}\psi _{w}(w_{b}(A),A)=\psi _{A}(w_{b}(A),A),  \label{BCiii1}
\end{equation}
\begin{equation}
(1-p)\bar{a}\psi _{w}(0,A)=\psi _{A}(0,A),  \label{BCiii2}
\end{equation}
and
\begin{equation}
\psi (0,0)=1.  \label{BCiii3}
\end{equation}
\item For $(w,A)\in \mathcal{D}_{2}~$, we have 
\begin{equation}
\psi (w,A)=\psi \left(w-\frac{\Delta A}{r+\lambda ^{O}},A+\Delta A \right),
\label{2ndregion}
\end{equation}
in which $w - {\Delta A}/(r+\lambda ^{O}) = w_{b}(A + \Delta A)$. Notice that $(w -{\Delta A}/(r+\lambda ^{O}), A+\Delta A)\in \mathcal{D}_{1}$, and thus $\psi (w - {\Delta A}/(r+\lambda ^{O}), A+\Delta A)$ is determined by the BVP (\ref{HJBiii})-(\ref{BCiii3}).
\item To solve for $\psi $ in the entire region $\mathcal{D}$, as well as to determine the purchase boundary $w_b(A)$, we also rely on a smooth fit condition across the boundary $w_{b}(A)$, namely,
\begin{equation}
\bar{a}\psi _{ww}(w_{b}(A),A)=\psi _{wA}(w_{b}(A),A).  \label{smfi}
\end{equation}
\end{enumerate}

\bigskip We first consider $\psi (w,A)$ in the region $\mathcal{D}_{1}$ by solving the related BVP \eqref{HJBiii}-\eqref{BCiii3}.  Hypothesize that $\psi$ is convex with respect to $w$, and define its concave dual via the Legendre transform by
\begin{equation}
\hat{\psi}(y,A) = \min_{w\geq 0} \left[\psi (w,A)+wy \right].
\end{equation}
As before, rewrite (\ref{HJBiii}) as 
\begin{equation}
-\lambda ^{S}\hat{\psi}-(r-\lambda ^{S})y\hat{\psi}_{y}+my^{2}\hat{\psi}_{yy} + y(c-A)=0.  \label{FBPiii}
\end{equation}
Its general solution is 
\begin{equation}
\hat{\psi}(y,A)=D_{1}(A)y^{B_{1}}+D_{2}(A)y^{B_{2}}+\frac{c-A}{r}y,
\label{dFBPiii}
\end{equation}
in which $B_1 > 1$ and $B_2 < 0$ are defined in (\ref{B}). Define 
\begin{equation}
y_{0}(A)=-\psi _{w}(0,A),  \label{y0iii}
\end{equation}
and 
\begin{equation}
y_{b}(A)=-\psi _{w}(w_{b}(A),A).  \label{ybiii}
\end{equation}
We get the following free-boundary conditions from (\ref{BCiii1}), (\ref{BCiii2}), (\ref{BCiii3}), (\ref{y0iii}), and (\ref{ybiii}):
\begin{equation}
\begin{cases}
\hat{\psi}_{A}(y_{0}(A),A)=-(1-p)\bar{a}y_{0}(A), \\ 
\hat{\psi}_{y}(y_{0}(A),A)=0;
\end{cases}
\label{sellbdryiii}
\end{equation}
\begin{equation}
\begin{cases}
\hat{\psi}_A(y_{b}(A),A) = -\bar{a}y_{b}(A), \\ 
\hat{\psi}_{y}(y_{b}(A),A) = w_{b}(A);
\end{cases}
\label{buybdryiii}
\end{equation}
and
\begin{equation}
\hat{\psi}(y_{0}(0),0)=1.  \label{ruinbdryiii}
\end{equation}
The smooth fit condition on the boundary $w=w_{b}(A)$ implies
\begin{equation}
\hat{\psi}_{Ay}(y_{b}(A),A)=-\bar{a}.  \label{dsmft}
\end{equation}

Use \eqref{dFBPiii} to rewrite (\ref{sellbdryiii}), (\ref{buybdryiii}), (\ref{ruinbdryiii}), and (\ref{dsmft}) as follows:
\begin{eqnarray}
D_{1}(A)B_{1}y_{0}(A)^{B_{1}-1}+D_{2}(A)B_{2}y_{0}(A)^{B_{2}-1}+{\frac{c-A}{r}} &=&0,
\label{iii1} \\
D_{1}^{\prime }(A)y_{0}(A)^{B_{1}-1}+D_{2}^{\prime }(A)y_{0}(A)^{B_{2}-1} &=& {\frac{1}{r}}-{\frac{1-p}{{r+\lambda ^{O}}}},  \label{iii2} \\
D_{1}(A)B_{1}y_{b}(A)^{B_{1}-1}+D_{2}(A)B_{2}y_{b}(A)^{B_{2}-1}+{\frac{c-A}{r}} &=&w_{b}(A),
\label{iii3} \\
D_{1}^{\prime }(A)y_{b}(A)^{B_{1}-1}+D_{2}^{\prime }(A)y_{b}(A)^{B_{2}-1} &=&{\frac{1}{r}}-{\frac{1}{{r+\lambda ^{O}}}},  \label{iii4} \\
D_{1}(0)y_{0}(0)^{B_{1}}+D_{2}(0)y_{0}(0)^{B_{2}}+{\frac{c}{r}}y_0(0) &=&1,
\label{iii6} \\
D_{1}^{\prime }(A) B_{1} y_{b}(A)^{B_{1}-1} + D_{2}^{\prime}(A) B_{2} y_{b}(A)^{B_{2}-1} &=&{\frac{1}{r}}-{\frac{1}{{r+\lambda ^{O}}}}.
\label{iii5}
\end{eqnarray}
Solve (\ref{iii1}) and (\ref{iii3}) for $D_{1}(A)$ and $D_{2}(A)$: 
\begin{eqnarray}
D_{1}(A) &=&\frac{1}{B_{1}}y_{b}(A)^{1-B_{1}}\frac{1}{x^{B_{1}-B_{2}}-1}\left[ -w_{b}(A)+\frac{c-A}{r} \left(1-x^{1-B_{2}} \right)\right] ,  \label{iiid1} \\
D_{2}(A) &=&\frac{1}{B_{2}}y_{b}(A)^{1-B_{2}}\frac{1}{x^{B_{2}-B_{1}}-1}\left[ -w_{b}(A)+\frac{c-A}{r} \left(1-x^{1-B_{1}} \right)\right] ,  \label{iiid2}
\end{eqnarray}
in which 
\begin{equation}
x\triangleq \frac{y_{0}(A)}{y_{b}(A)}.  \label{iiix1}
\end{equation}
Recall that $w_{b}(A)$ is to be determined. We solve for $D_{1}^{\prime }(A)$ and $D_{2}^{\prime }(A)$ from (\ref{iii4}) and (\ref{iii5}) to get: 
\begin{eqnarray}
D_{1}^{\prime }(A) &=&\frac{\lambda ^{O}}{r(r+\lambda ^{O})} \, \frac{1-B_{2}}{B_{1}-B_{2}} \, y_{b}(A)^{1-B_{1}},
\label{D1'} \\
D_{2}^{\prime }(A) &=&\frac{\lambda ^{O}}{r(r+\lambda ^{O})} \, \frac{B_{1}-1}{B_{1}-B_{2}} \, y_{b}(A)^{1-B_{2}}.
\label{D2'}
\end{eqnarray}
By substituting (\ref{D1'}) and (\ref{D2'}) into (\ref{iii2}), we get 
\begin{equation}
\frac{1-B_{2}}{B_{1}-B_{2}} \, \frac{\lambda ^{O}}{r(r+\lambda ^{O})} \, x^{B_{1}-1} + \frac{B_{1}-1}{B_{1}-B_{2}} \, \frac{\lambda ^{O}}{r(r+\lambda ^{O})} \, x^{B_{2}-1} = \frac{1}{r}-\frac{1-p}{r+\lambda ^{O}},
\label{iiix}
\end{equation}
which has a unique solution for $x>1$; the argument is similar to the corresponding one in Section \ref{sec:linearization} for the solution of \eqref{eq:**}.  It is clear from \eqref{iiix} that $x$ is independent of $A$.

Differentiate $D_{1}(A)$ and $D_{2}(A)$ in (\ref{iiid1}) and (\ref{iiid2}) with respect to $A$ to get a second expression for $D_{1}^{\prime }(A)$ and $D_{2}^{\prime }(A)$; set equal the two expressions for each of $D_{1}^{\prime }(A)$ and $D_{2}^{\prime }(A)$ to get 
\begin{eqnarray}
\frac{dy_{b}(A)/dA}{y_{b}(A)} &=& \frac{\dfrac{\lambda ^{O}}{r(r+\lambda ^{O})} \, \dfrac{B_{1}(1-B_{2})}{B_{1}-B_{2}} \left(x^{B_{1}-B_{2}}-1 \right) + w_{b}^{\prime }(A) + \dfrac{1}{r} \left(1-x^{1-B_{2}} \right)}{(1-B_{1}) \left[-w_{b}(A)+\dfrac{c-A}{r} \left(1-x^{1-B_{2}}\right) \right]},
\label{dyiii} \\
\frac{dy_{b}(A)/dA}{y_{b}(A)} &=& \frac{\dfrac{\lambda ^{O}}{r(r+\lambda ^{O})} \, \dfrac{B_{2}(B_{1}-1)}{B_{1}-B_{2}} \left(x^{B_{2}-B_{1}}-1 \right) + w_{b}^{\prime }(A) + \dfrac{1}{r} \left(1-x^{1-B_{1}} \right)}{(1-B_{2}) \left[-w_{b}(A)+\dfrac{c-A}{r} \left(1-x^{1-B_{1}} \right) \right]}.
\end{eqnarray}
Set equal the right-hand sides of the two equations above to get a non-linear ODE for $w_{b}(A)$: 
\begin{equation}
\alpha _{1}(c-A)w_{b}^{\prime }(A)+\alpha _{2}w_{b}(A) +\alpha _{3}w_{b}^{\prime}(A)w_{b}(A)+\alpha _{4}(c-A)=0,
\label{eq:w_b}
\end{equation}
in which 
\begin{equation}
\begin{cases}
\alpha_{1} = - \dfrac{1}{r}\left[(B_{1} - 1)\left(1-x^{1-B_{2}}\right) + (1-B_{2})\left(1-x^{1-B_{1}}\right)\right] , \\ 
\alpha_{2} =  (B_{1} - 1) \left[ \dfrac{\lambda ^{O}}{r(r+\lambda ^{O})}\dfrac{(B_{1}-1)B_{2}}{B_{1}-B_{2}}\left(x^{B_{2}-B_{1}}-1\right)+\dfrac{1}{r}\left(1-x^{1-B_{1}}\right) \right] \\
\qquad \; \, + (1-B_{2})\left[ \dfrac{\lambda ^{O}}{r(r+\lambda ^{O})}\dfrac{B_1 (1-B_{2})}{B_{1}-B_{2}}\left(x^{B_{1}-B_{2}}-1\right)+\dfrac{1}{r}\left(1-x^{1-B_{2}}\right)
\right], \\ 
\alpha_{3} = B_{1}-B_{2} > 0, \\ 
\alpha_{4} = - \dfrac{1}{r}\left\{(B_{1} - 1)\left[ \dfrac{\lambda ^{O}}{r(r+\lambda^{O})}\dfrac{(B_{1}-1) B_2}{B_{1}-B_{2}}\left(x^{B_{2}-B_{1}}-1\right)+\dfrac{1}{r} \left(1-x^{1-B_{1}}\right)\right] \left(1-x^{1-B_{2}}\right)\right. \\ 
\left. \qquad \qquad + (1-B_{2})\left[ \dfrac{\lambda ^{O}}{r(r+\lambda ^{O})}\dfrac{B_1 (1-B_{2})}{B_{1}-B_{2}}\left(x^{B_{1}-B_{2}}-1\right)+\dfrac{1}{r}\left(1-x^{1-B_{2}}\right) \right] \left(1-x^{1-B_{1}}\right)\right \}.
\end{cases}
\label{eq:alpha}
\end{equation}
Also, we have the boundary condition $w_b(c-) = 0$ because $0 \le w_b(A) \le w_s(A)$ for all $0 \le A < c$ and $w_s(c-) = 0$.  A solution of the ODE, together with the boundary condition at $A = c$, is given by
\begin{equation}
w_{b}(A) = b \cdot (c-A),  \label{wb}
\end{equation}
in which 
\begin{equation}
b={\frac{(\alpha _{2}-\alpha _{1})+\sqrt{(\alpha _{2}-\alpha_{1})^{2}+4\alpha _{3}\alpha _{4}}}{{2\alpha _{3}}}}.  \label{eq:b}
\end{equation}
Note that this solution for the purchase boundary $w_{b}(A)$ is linear with respect to $A$.

From the expression on the right-hand side of \eqref{dyiii} and from \eqref{wb}, define
\begin{equation}
K\triangleq \dfrac{\dfrac{\lambda ^{O}}{r(r+\lambda ^{O})} \, \dfrac{B_{1}(1-B_{2})}{B_{1}-B_{2}} \left(x^{B_{1}-B_{2}}-1 \right) - b + \dfrac{1}{r} \left(1-x^{1-B_{2}} \right)}{(1-B_{1}) \left[-b + \dfrac{1}{r} \left(1-x^{1-B_{2}} \right) \right]}.  \label{iiiK}
\end{equation}
Solve (\ref{dyiii}) and (\ref{iiix1}) to obtain
\begin{equation}
y_{0}(A) = \left(\frac{c}{c-A} \right)^{K}y_{0}(0),  \label{iiiy0}
\end{equation}
and 
\begin{equation}
y_{b}(A)=\frac{y_{0}(A)\text{ }}{x}.  \label{iiiys}
\end{equation}

To finish solving the FBP, we substitue (\ref{iiid1}), (\ref{iiid2}), and (\ref{wb}) into (\ref{iii6}) to get
\begin{equation}
\frac{1}{y_{0}(0)}=\frac{c}{B_{1}}\frac{x^{B_{1}-1}}{x^{B_{1}-B_{2}}-1}\left[-b + \frac{1}{r} \left(1-x^{1-B_{2}} \right)\right] +\frac{c}{B_{2}}\frac{x^{B_{2}-1}}{x^{B_2-B_{1}}-1}\left[ -b + \frac{1}{r} \left(1-x^{1-B_{1}} \right)\right] + \frac{c}{r}.
\label{iiiy00}
\end{equation}

\begin{proposition}
\label{prop:psihatiii}The solution of the FBP \eqref{FBPiii} with conditions \eqref{sellbdryiii}, \eqref{buybdryiii}, and \eqref{ruinbdryiii} is given by \eqref{dFBPiii}, with $D_{1}(A)$, $D_{2}(A)$, $y_{0}(0)$, $y_{0}(A)$, $y_{b}(A)$, $x$, and $K$ defined in \eqref{iiid1}, \eqref{iiid2}, \eqref{iiiy00}, \eqref{iiiy0}, \eqref{iiiys}, \eqref{iiix}, and \eqref{iiiK}, respectively.
\end{proposition}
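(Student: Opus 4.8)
The plan is to assemble and verify the construction already carried out in the discussion preceding the statement, showing that the functions and constants displayed in \eqref{iiid1}--\eqref{iiiy00} genuinely yield a solution of the free-boundary problem \eqref{FBPiii}, \eqref{sellbdryiii}, \eqref{buybdryiii}, \eqref{ruinbdryiii}, together with the smooth-fit condition \eqref{dsmft}. First I would confirm that \eqref{dFBPiii} is the general solution of the linear ordinary differential equation \eqref{FBPiii} in $y$, with $A$ entering only as a parameter: substituting $\hat\psi = y^{B}$ reduces the homogeneous part to the indicial equation $m B^{2} - (r - \lambda^{S} + m) B - \lambda^{S} = 0$, whose roots are exactly $B_1 > 1$ and $B_2 < 0$ of \eqref{B} (the quantity at $B = 1$ is $-r < 0$, so $1$ lies strictly between the roots and their product $-\lambda^{S}/m$ is negative), while $\hat\psi = \frac{c-A}{r}\,y$ is a particular solution found by inspection. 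Hence every solution has the form \eqref{dFBPiii}, and what remains is to pin down $D_1(A)$, $D_2(A)$, the free boundaries $y_0(A)$, $y_b(A)$, and the purchase barrier $w_b(A)$.

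Next I would impose the boundary data. Rewriting \eqref{sellbdryiii}, \eqref{buybdryiii}, \eqref{ruinbdryiii}, and \eqref{dsmft} in terms of \eqref{dFBPiii} produces the system \eqref{iii1}--\eqref{iii5}. Solving the pair \eqref{iii1}, \eqref{iii3} gives the closed forms \eqref{iiid1}, \eqref{iiid2} for $D_1(A)$, $D_2(A)$ in terms of $y_b(A)$, $w_b(A)$, and the ratio $x = y_0(A)/y_b(A)$, while solving \eqref{iii4}, \eqref{iii5} gives \eqref{D1'}, \eqref{D2'}. Feeding \eqref{D1'}, \eqref{D2'} into \eqref{iii2} yields the scalar equation \eqref{iiix} for $x$; its left-hand side is strictly increasing in $x$ (since $B_1 - 1 > 0$ and $B_2 - 1 < 0$), exceeds the right-hand side as $x \to \infty$, and falls below it at $x = 1$, so there is a unique root $x > 1$, and the equation shows $x$ is independent of $A$. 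Differentiating \eqref{iiid1}, \eqref{iiid2} with respect to $A$ and matching against \eqref{D1'}, \eqref{D2'} then forces the two expressions \eqref{dyiii} for $y_b'(A)/y_b(A)$, and equating these eliminates $y_b$ and leaves the nonlinear ODE \eqref{eq:w_b} for $w_b(A)$ with coefficients \eqref{eq:alpha}.

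The heart of the argument, and the step requiring the most care, is solving \eqref{eq:w_b} subject to the boundary requirement $w_b(c-) = 0$ forced by $0 \le w_b(A) \le w_s(A) = (c-A)\bar a$. Inserting the ansatz $w_b(A) = b\,(c-A)$, so $w_b'(A) = -b$, collapses \eqref{eq:w_b}, after dividing out the common factor $c - A$, to the quadratic $\alpha_3 b^{2} - (\alpha_2 - \alpha_1) b - \alpha_4 = 0$, whose relevant root is \eqref{eq:b}; here one must check that the discriminant is nonnegative and that this root lies in $[0, \bar a]$, so $w_b$ stays admissible, with $w_b(c-) = 0$ holding automatically. With $w_b$ in hand, \eqref{dyiii} becomes $y_0'(A)/y_0(A) = K/(c-A)$ with $K$ as in \eqref{iiiK} (using $y_0 = x\,y_b$ and $x$ constant), which integrates to \eqref{iiiy0}, and $y_b(A) = y_0(A)/x$ is \eqref{iiiys}. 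Finally, substituting \eqref{iiid1}, \eqref{iiid2}, and \eqref{wb} into the ruin condition \eqref{iii6} determines $y_0(0)$ as \eqref{iiiy00}; I would also record at this point the positivity of $y_0(0)$ and the signs $D_1(A), D_2(A) < 0$, which are needed for the concavity argument in the next proposition. The remaining verifications amount to routine substitutions confirming that the assembled $\hat\psi$ meets every condition of the FBP.
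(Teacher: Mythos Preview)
Your proposal is correct and follows the paper's own derivation essentially step by step; the paper treats this proposition as a summary statement with no separate proof, the construction being the sequence of computations \eqref{iii1}--\eqref{iiiy00} that precedes it, and your outline recapitulates exactly that sequence.

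One small caution: your closing remark that you would ``record at this point \ldots the signs $D_1(A), D_2(A) < 0$'' is not as immediate here as in the two earlier cases. From \eqref{iiid2} the bracket $-w_b(A) + \frac{c-A}{r}(1 - x^{1-B_1})$ has competing signs (since $1 - x^{1-B_1} > 0$), so negativity of $D_2(A)$ is not obvious by inspection. This is precisely why the paper defers the concavity argument for $\hat\psi$ to the Appendix, where it is obtained not from sign information on $D_1, D_2$ directly but from the inequality $b \ge p/(\lambda^O + pr)$, established via an algebraic reduction using \eqref{iiix}. So your plan is sound, but that particular verification belongs with the next proposition rather than here.
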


Notice that we can rewrite the inequalities (\ref{ineqi1}) and (\ref{ineqi2}) in terms of $\hat{\psi}$ as 
\begin{eqnarray}
\hat{\psi}_{A}(y,A) &\geq &-\frac{1}{r+\lambda ^{O}}y,  \label{dineqiii1} \\
\hat{\psi}_{A}(y,A) &\leq &-\frac{1-p}{r+\lambda ^{O}}y.  \label{dineqiii2}
\end{eqnarray}
Next, we prove that $\hat \psi$ is concave with respect to $y$ and satisfies inequalities (\ref{dineqiii1}) and (\ref{dineqiii2}).

\begin{proposition}
\label{prop:psihat_concaveiii}
$\hat{\psi }(y,A)$ given by Proposition \ref{prop:psihatiii} is concave with respect to $y$ and satisfies inequalities \eqref{dineqiii1} and \eqref{dineqiii2}.
\end{proposition}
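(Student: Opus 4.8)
The plan is to follow the template of the proofs of Propositions \ref{prop:psihat_concave} and \ref{prop:psihat_concaveii}, with the extra bookkeeping that here the purchase boundary $w_b(A)$ is itself an unknown (solved through the ODE \eqref{eq:w_b}, with explicit solution \eqref{wb}--\eqref{eq:b}) and the smooth-fit condition \eqref{dsmft} is part of the data. For concavity in $y$, note from \eqref{dFBPiii} that
\begin{equation*}
\hat\psi_{yy}(y,A)=B_1(B_1-1)D_1(A)y^{B_1-2}+B_2(B_2-1)D_2(A)y^{B_2-2},
\end{equation*}
and since $B_1(B_1-1)>0$ and $B_2(B_2-1)>0$ it suffices to show $D_1(A)<0$ and $D_2(A)<0$ on $\mathcal{D}_1$. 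First I would read off from \eqref{iiid1} that $B_1>0$, $y_b(A)>0$, $x^{B_1-B_2}-1>0$ (since $x>1$ and $B_1-B_2>0$), and $1-x^{1-B_2}<0$ (since $x>1$ and $1-B_2>0$); together with $w_b(A)=b(c-A)\ge 0$ these force the bracket in \eqref{iiid1} to be negative, whence $D_1(A)<0$. The sign of $D_2(A)$ is the more delicate one: by \eqref{iiid2}, after using $x^{B_2-B_1}-1<0$ and $B_2<0$, it reduces to showing that $-b+\tfrac1r\bigl(1-x^{1-B_1}\bigr)<0$, and I would get this either from the explicit $b$ in \eqref{eq:b} together with the signs of the $\alpha_i$ in \eqref{eq:alpha}, or from the constraint $0\le w_b(A)\le w_s(A)=(c-A)\bar a$ combined with \eqref{iiix}. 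Given $D_1(A),D_2(A)<0$, $\hat\psi$ is concave in $y$ (indeed on all of $(0,\infty)$).

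For the two inequalities, I would differentiate \eqref{dFBPiii} with respect to $A$, substitute the closed forms \eqref{iiid1}--\eqref{iiid2} for $D_1(A),D_2(A)$ and \eqref{D1'}--\eqref{D2'} for $D_1'(A),D_2'(A)$, and collapse the result using the logarithmic-derivative identity \eqref{dyiii} (equivalently, the definition of $K$ in \eqref{iiiK}) to obtain a single expression for $\hat\psi_A(y,A)$ that depends on $y$ only through $z:=y/y_b(A)$, which ranges over $[1,x]$ as $y$ ranges over $[y_b(A),y_0(A)]$ by \eqref{iiix1}. Plugging that expression into \eqref{dineqiii1}--\eqref{dineqiii2} turns the claim into a two-sided bound $1-p\le h(z)\le 1$, where $h$ is an affine combination of $z^{B_1-1}$ and $z^{B_2-1}$ with coefficients built from $K$, $B_1$, $B_2$ and $\lo/r$ --- precisely the role played by the functions $f$ and $g$ in the earlier proofs.

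The endpoints are handled by the free-boundary data: \eqref{buybdryiii} gives $\hat\psi_A(y_b(A),A)=-\bar a\,y_b(A)=-\tfrac{1}{r+\lo}y_b(A)$, so \eqref{dineqiii1} holds with equality at $z=1$, and \eqref{sellbdryiii} gives $\hat\psi_A(y_0(A),A)=-(1-p)\bar a\,y_0(A)=-\tfrac{1-p}{r+\lo}y_0(A)$, so \eqref{dineqiii2} holds with equality at $z=x$. It then remains to show $h$ is monotone on $[1,x]$; differentiating $h$ reduces this, as in \eqref{eq:gdecr}, to the sign of a term of the form $K\,\tfrac{(B_1-1)(1-B_2)}{B_1-B_2}\bigl[B_1z^{B_1-2}-B_2z^{B_2-2}\bigr]$ plus a term of fixed sign.

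The main obstacle is precisely the two sign questions flagged above: that $-b+\tfrac1r\bigl(1-x^{1-B_1}\bigr)<0$ (needed for $D_2(A)<0$) and, for the monotonicity of $h$, the sign of the constant $K$ in \eqref{iiiK}. In the $p\ge p^*$ case the analogue of $K$ was nonnegative exactly because $p\ge p^*$; here $p<p^*$, so one cannot borrow that fact and must argue --- using the ODE \eqref{eq:w_b} and the closed form \eqref{eq:b} for $b$, or a convexity-in-$z$ argument that sidesteps the explicit sign of $K$ --- that $h$ nonetheless stays between $1-p$ and $1$ on the whole interval $[1,x]$. Everything else (the algebra producing $\hat\psi_A$, the rewriting of the inequalities, and the endpoint matching) is routine manipulation of the formulas already assembled in Proposition \ref{prop:psihatiii}.
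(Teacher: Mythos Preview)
Your overall template (concavity plus two one-sided inequalities, endpoint matching via the free-boundary data, monotonicity of an auxiliary function of $z=y/y_b(A)$) is right, but on both halves the paper proceeds differently, and on the concavity half your proposed route is genuinely problematic.

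\textbf{Inequalities \eqref{dineqiii1}--\eqref{dineqiii2}.} There is no need to bring $K$ or \eqref{dyiii} into the computation of $\hat\psi_A$. Because of the smooth-fit pair \eqref{iii4}--\eqref{iii5} you already have clean closed forms for $D_1'(A)$ and $D_2'(A)$ in \eqref{D1'}--\eqref{D2'}; substituting these into $\hat\psi_A(y,A)=D_1'(A)y^{B_1}+D_2'(A)y^{B_2}-y/r$ gives, after dividing by $y$ and setting $z=y/y_b(A)$, the two-sided bound
\[
-1\;\le\;\frac{\lambda^O}{r}\,h(z)-\frac{r+\lambda^O}{r}\;\le\;-(1-p),\qquad
h(z)=\frac{1-B_2}{B_1-B_2}z^{B_1-1}+\frac{B_1-1}{B_1-B_2}z^{B_2-1}.
\]
Here $h'(z)=\tfrac{(1-B_2)(B_1-1)}{B_1-B_2}\bigl(z^{B_1-2}-z^{B_2-2}\bigr)\ge 0$ for $z\ge 1$, with equality on the left at $z=1$ (by $h(1)=1$) and on the right at $z=x$ (by \eqref{iiix}). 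So the inequalities fall out in one line; the sign of $K$ never enters. Your route through $K$ mirrors the $p\ge p^*$ argument but discards the extra information that the smooth-fit condition buys you.

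\textbf{Concavity.} The paper does \emph{not} try to show $D_2(A)<0$; it notes explicitly that the concavity proof ``is not obvious (unlike the previous two cases)'' and handles it in the Appendix. There one substitutes \eqref{iiid1}--\eqref{iiid2} into $\hat\psi_{yy}$, observes that the coefficient carried by $D_1$ makes the resulting expression monotone in $z\in[1,x]$ so that the worst case is $z=1$, and then---using \eqref{iiix} to eliminate $x^{B_2-1}$---reduces $\hat\psi_{yy}(y_b(A),A)\le 0$ to the single scalar inequality
\[
b\;\ge\;\frac{p}{\lambda^O+pr}.
\]
This is then verified from the quadratic formula \eqref{eq:b} and the $\alpha_i$ in \eqref{eq:alpha} (after squaring, the inequality collapses to $p\le 1$). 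Your proposed reduction to $D_2(A)<0$, i.e.\ to $b>\tfrac{1}{r}(1-x^{1-B_1})$, is a \emph{stronger} claim than what is needed (it would force concavity on all of $(0,\infty)$, not just on $[y_b,y_0]$), and it is not clear it even holds for small $p$: as $p\downarrow 0$ one has $x-1\sim\sqrt{p}$ from \eqref{iiix}, so $\tfrac{1}{r}(1-x^{1-B_1})\sim\sqrt{p}$, whereas the lower bound above gives only $b\gtrsim p$. So the $D_2<0$ route is the wrong reduction here; work on the interval and aim for $b\ge p/(\lambda^O+pr)$ instead.
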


\begin{proof}
The proof that $\hat{\psi}$ is concave with respect to $y$ is not obvious (unlike the previous two cases), so we relegate that (long) proof to the Appendix.

Substitute $D'_{1}(A)y^{ B_{1}}+D'_{2}(A)y^{B_{2}}-\dfrac{y}{r}$ for $%
\hat{\psi}_{A}(y,A)$ to rewrite the inequalities (\ref{dineqiii1}) and (\ref{dineqiii2}) in the equivalent form as 
\begin{equation}
-1\leq \frac{\lambda ^{O}}{r}\left[ \frac{1-B_{2}}{B_{1}-B_{2}}\left( \frac{y%
}{y_{b}(A)}\right) ^{B_{1}-1}+\frac{B_{1}-1}{B_{1}-B_{2}}\left( \frac{y}{%
y_{b}(A)}\right) ^{B_{2}-1}\right] -\frac{r+\lambda ^{O}}{r}\leq -(1-p).
\label{iiiverf}
\end{equation}%
To prove the inequality above, define 
\begin{equation}
h(z)=\frac{1-B_{2}}{B_{1}-B_{2}}z^{B_{1}-1}+\frac{B_{1}-1}{B_{1}-B_{2}}z^{B_{2}-1},
\end{equation}%
and note that 
\begin{equation}
h^{\prime }(z)=\frac{(1-B_{2})(B_{1}-1)}{B_{1}-B_{2}}\left[ z^{B_{1}-2}-z^{B_{2}-2}\right] \geq 0, \quad z \ge 1.
\end{equation}%
Also, the first inequality in \eqref{iiiverf} holds with equality when $y = y_b(A)$, and the second inequality holds with equality when $y = y_0(A)$.  Thus, because $h(z)$ is non-decreasing for $z \ge 1$, inequality \eqref{iiiverf} holds for $y_{b}(A)\leqslant y\leqslant y_{0}(A)$.
\end{proof}

As before, we define the convex dual of $\hat{\psi}$ via the Legendre transform for $(w,A) \in \mathcal{D}_1$ as 
\begin{equation}
\Psi (w,A)=\max_{y\geq y_b(A)} \left[\hat{\psi}(y,A)-wy \right].  \label{5.57}
\end{equation}
For $(w,A) \in \mathcal{D}_2$, we define 
\begin{equation}
\Psi (w,A)=\Psi (w-\bar{a}\Delta A,A+\Delta A),  \label{defiii}
\end{equation}%
in which $\Delta A$ solves $w-\bar{a}\Delta A=b(c-(A+\Delta A))$; that is, $\Delta A=\dfrac{w-b(c-A)}{\bar{a}-b}$. Notice that
since $(w-\bar{a}\Delta A,A+\Delta A) \in \mathcal{D}_1$, $\Psi (w-\bar{a}\Delta A,A+\Delta A)$ is given through (\ref{5.57}). 

Now we proceed to the following lemma, which demonstrates that $\Psi$ is the minimum probability of ruin by Verification Theorem \ref{lem:verf-lemma}.
\begin{lemma}
\label{lem:4.1}  $\Psi (w,A)$ defined in \eqref{5.57} and \eqref{defiii} satisfies Conditions $1$-$4$ of the Verification Theorem \ref{lem:verf-lemma}.
\end{lemma}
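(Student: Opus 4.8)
The strategy is to verify Conditions 1--4 of Theorem~\ref{lem:verf-lemma} (Condition~4 here reading $v(0,0)=1$, as in~\eqref{BCiii3}) on $\mathcal{D}_1$ and on $\mathcal{D}_2$ in turn; since $\mathcal{D}=\mathcal{D}_1\cup\mathcal{D}_2$ this covers all of $\mathcal{D}$. Along the way one also checks the standing hypotheses of the verification theorem, namely that $\Psi$ is non-negative, non-increasing and convex in $w$, twice $w$-differentiable except at $w=w_s(A)$, and differentiable in $A$.

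On $\mathcal{D}_1$ the argument copies Section~\ref{sec:dual-dual}. By Proposition~\ref{prop:psihat_concaveiii}, $\hat\psi$ is concave in $y$, so $\Psi$ is convex in $w$ with $\Psi_w=-y<0$ (here $y\ge y_b(A)>0$), hence $\Psi$ is non-increasing, and $0\le\Psi\le1$ follows from this monotonicity together with the values of $\Psi$ at $w=0$ and at the safe level. Reversing the Legendre transform turns the linear equation~\eqref{FBPiii} into $\min_\pi\L^\pi\Psi=0$; convexity in $w$ makes $\pi\mapsto\L^\pi\Psi$ a convex quadratic, so $\min_\pi\L^\pi\Psi=0$ forces $\L^\pi\Psi\ge0$ for every $\pi$, which is Condition~1. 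Conditions~2 and~3 are the inequalities~\eqref{dineqiii1}--\eqref{dineqiii2} rewritten in the primal variables through $\hat\psi_A(y,A)=\psi_A(w,A)$ and $y=-\psi_w(w,A)$, and these are exactly what Proposition~\ref{prop:psihat_concaveiii} provides. For Condition~4, \eqref{sellbdryiii} gives $\hat\psi_y(y_0(0),0)=0$, so $w=0$ corresponds to $y=y_0(0)$ and $\Psi(0,0)=\hat\psi(y_0(0),0)=1$ by~\eqref{ruinbdryiii}.

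The substantive part is $\mathcal{D}_2$, where $\Psi$ is defined by the purchase construction~\eqref{defiii} and not by a differential equation. The key observation is that the purchase boundary $w_b(A)=b(c-A)$ is \emph{linear} (by~\eqref{wb}): solving $w-\bar a\Delta A=b\bigl(c-(A+\Delta A)\bigr)$ gives $A+\Delta A=\dfrac{w+\bar aA-bc}{\bar a-b}$, which depends on $(w,A)$ only through $u:=w+\bar aA$, so that $\Psi(w,A)=G(u)$ for a one-variable function $G$ on $\mathcal{D}_2$. Consequently $\Psi_w=G'(u)$, $\Psi_A=\bar aG'(u)$, $\Psi_{ww}=G''(u)$, so Condition~2 holds with equality, and Condition~3 reads $-p\bar aG'(u)\ge0$, i.e.\ $G'(u)\le0$, which is monotonicity of $\Psi$. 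Next one identifies $G,G',G''$ at $u$ with $\Psi,\Psi_w,\Psi_{ww}$ at the projected boundary point $(\tilde w,\tilde A):=\bigl(w_b(A+\Delta A),A+\Delta A\bigr)\in\partial\mathcal{D}_1$: differentiating $G(u)=\Psi\bigl(b(c-\tilde A),\tilde A\bigr)$ and using the matching condition $\psi_A(w_b,A)=\bar a\psi_w(w_b,A)$ from~\eqref{buybdryiii} collapses the chain-rule factors to give $G'(u)=\Psi_w(\tilde w,\tilde A)$, and differentiating once more and using the smooth-fit condition $\psi_{wA}(w_b,A)=\bar a\psi_{ww}(w_b,A)$ from~\eqref{dsmft} gives $G''(u)=\Psi_{ww}(\tilde w,\tilde A)$. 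Because these same relations show the one-sided $w$-derivatives of $\Psi$ agree across $w=w_b(A)$, $\Psi$ is $C^2$ in $w$ there and $\Psi_{ww}>0$ on $\mathcal{D}_2$ by convexity on $\mathcal{D}_1$.

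Finally, Condition~1 on $\mathcal{D}_2$: convexity in $w$ again reduces it to $\min_\pi\L^\pi\Psi\ge0$, and by the identifications above this minimum equals $(rw-c+A)\Psi_w(\tilde w,\tilde A)-m\,\Psi_w^2(\tilde w,\tilde A)/\Psi_{ww}(\tilde w,\tilde A)-\lambda^S\Psi(\tilde w,\tilde A)$. Subtracting the identity $\min_\pi\L^\pi\Psi=0$ evaluated at $(\tilde w,\tilde A)$, and using $w-\tilde w=\bar a\Delta A$ and $A-\tilde A=-\Delta A$, leaves $\bigl[r(w-\tilde w)+(A-\tilde A)\bigr]\Psi_w(\tilde w,\tilde A)=-\Delta A\,(1-r\bar a)\,\Psi_w(\tilde w,\tilde A)$. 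Since $\Delta A>0$ on $\mathcal{D}_2$, $1-r\bar a=\lambda^O/(r+\lambda^O)>0$, and $\Psi_w(\tilde w,\tilde A)=-y_b(\tilde A)<0$, this quantity is strictly positive, so Condition~1 holds on $\mathcal{D}_2$; Conditions~2, 3 were checked above and Condition~4 does not involve $\mathcal{D}_2$. I expect the $\mathcal{D}_2$ analysis to be the main obstacle: one must spot the collapse $\Psi(w,A)=G(w+\bar aA)$ coming from the linear barrier, and then push the chain rule through carefully enough that the value-matching~\eqref{buybdryiii} and smooth-fit~\eqref{dsmft} conditions make $G'$ and $G''$ coincide with the boundary traces of $\Psi_w$ and $\Psi_{ww}$; once that is in place, Condition~1 on $\mathcal{D}_2$ is the interior equation on $\partial\mathcal{D}_1$ plus a one-line sign count, and the concavity of $\hat\psi$ underlying the convexity of $\Psi$ is borrowed wholesale from Proposition~\ref{prop:psihat_concaveiii} and its Appendix proof.
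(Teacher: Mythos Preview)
Your proof is correct and follows essentially the same route as the paper's own proof. On $\mathcal{D}_1$ both arguments invoke the Legendre duality and Proposition~\ref{prop:psihat_concaveiii}; on $\mathcal{D}_2$ both use the chain rule together with the value-matching condition~\eqref{BCiii1} and the smooth-fit condition~\eqref{smfi} to identify $\Psi_w(w,A)$ and $\Psi_{ww}(w,A)$ with the boundary traces $\Psi_w(\tilde w,\tilde A)$ and $\Psi_{ww}(\tilde w,\tilde A)$, after which Condition~1 reduces to the interior HJB equation at $(\tilde w,\tilde A)$ plus the sign of $[r(w-\tilde w)+(A-\tilde A)]\Psi_w$. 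Your observation that the linear barrier forces $\Psi(w,A)=G(w+\bar a A)$ on $\mathcal{D}_2$ is a clean way to package this chain-rule computation---it makes Condition~2 immediate and collapses the derivative identifications into one line---but the underlying calculation is the same as the paper's direct chain-rule expansion in \eqref{1}--\eqref{eq:4.102}.
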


\begin{proof}
First, consider $(w,A) \in \mathcal{D}_1$. In terms of $\Psi(w,A)$, we rewrite (\ref{FBPiii}) as follows:
\begin{equation}
\lambda ^{S}\Psi (w,A)=(rw-c)\Psi _{w}(w,A)-m\frac{\Psi _{w}^{2}(w,A)}{\Psi_{ww}(w,A)},  \label{5.58}
\end{equation}
as well as  (\ref{dineqiii1}) and (\ref{dineqiii2}) 
\begin{eqnarray}
\bar{a}\Psi _{w}(w,A)&\leq &\Psi _{A}(w,A) ,  \label{condition2} \\
(1-p)\bar{a}\Psi _{w}(w,A)&\geq &\Psi _{A}(w,A)  \label{condition3}.
\end{eqnarray} 
Expressions (\ref{5.58})-(\ref{condition3}) show that $\Psi(w,A)$ satisfies Conditions $1$-$3$ of the Verification Theorem \ref{lem:verf-lemma}  on $\mathcal{D}_1$.  It is clear by construction that $\Psi$ satisfies Condition 4, namely, $\Psi(0, 0) = 1$.

Now, consider $(w,A) \in \mathcal{D}_2$. By definition, 
\begin{equation}
\Psi (w,A)=\Psi (w^{\prime },A^{\prime }),
\label{eq:496}
\end{equation}
with $w'=w-\dfrac{w-b(c-A)}{1-b/\bar{a}}$ and $A^{\prime }=A+\dfrac{w-b(c-A)}{\bar{a}-b}$.  From \eqref{eq:496}, we get the following relations
\begin{equation}
\Psi _{w}(w,A) = - \, \frac{b}{\bar{a}-b} \, \Psi _{w}(w^{\prime },A^{\prime })  + \frac{1}{\bar{a}-b} \, \Psi _{A}(w^{\prime },A^{\prime }) ,
\label{1}
\end{equation}
\begin{equation}
\Psi _{ww}(w,A) = \left( \frac{b}{\bar{a}-b}\right)^{2} \, \Psi _{ww}(w^{\prime },A^{\prime }) - \frac{2b}{(\bar{a}-b)^{2}} \, \Psi _{wA}(w,A) + \left( \frac{1}{\bar{a}-b} \right)^{2} \, \Psi _{AA}(w^{\prime },A^{\prime }),
\label{2}
\end{equation}
and
\begin{equation}
\Psi _{A}(w,A) = - \, \frac{b \bar a}{\bar a - b} \, \Psi _{w}(w^{\prime },A^{\prime }) + \frac{\bar a}{\bar a - b}  \, \Psi _{A}(w^{\prime },A^{\prime }).
\label{3}
\end{equation}

Since $(w^{\prime },A^{\prime })$ is on the boundary $w=w_{b}(A)$, we have 
\begin{equation}
\bar{a}\Psi _{w}(w^{\prime },A^{\prime })=\Psi _{A}(w^{\prime },A^{\prime }).
\label{4} 
\end{equation}
This along with (\ref{1}) leads to 
\begin{equation}
\Psi _{w}(w,A)=\Psi _{w}(w^{\prime },A^{\prime }).
\label{44}
\end{equation}
Differentiate (\ref{44}) with respect to $w$ to get 
\begin{equation}
\Psi _{ww}(w,A) = - \, \frac{b}{\bar{a}-b} \, \Psi _{ww}(w^{\prime },A^{\prime })  + \frac{1}{\bar{a}-b} \, \Psi _{wA}(w^{\prime },A^{\prime }).
\label{eq:4.102}
\end{equation}
From \eqref{eq:4.102} and from the smooth fit condition on the purchase boundary, namely 
\begin{equation}
\bar{a}\Psi _{ww(}w^{\prime },A^{\prime })=\Psi _{wA}(w^{\prime },A^{\prime}),
\end{equation}
we  obtain 
\begin{equation}
\Psi _{ww}(w,A)=\Psi_{ww} (w^{\prime },A^{\prime }).
\end{equation}

We know that for $(w',A') \in \mathcal{D}_1$ and for $\pi \in \mathbb{R}$,
\begin{equation}
{\cal L}^{\pi }\Psi(w',A') =[rw'+(\mu -r)\pi -c+A']\Psi_{w}(w',A')+\frac{1}{2}\sigma ^{2}\pi ^{2}\Psi
_{ww}(w',A')-\lambda ^{S}\Psi(w',A') \geq 0.
\end{equation}
It follows that for $(w,A) \in \mathcal{D}_2$, 
\begin{equation}
\begin{split}
\mathcal{L}^{\pi} \Psi (w,A)=& \left[ r\left( w^{\prime }+\frac{w-b(c-A)}{1-b/\bar{a}} \right) +(\mu -r)\pi -c+\left( A^{\prime }-\frac{w-b(c-A)}{\bar{a}-b}\right) \right] \Psi_{w}(w^{\prime },A^{\prime }) \\
&  + \frac{1}{2} \sigma^{2} \pi^{2}\Psi _{ww}(w^{\prime },A^{\prime}) - \ls \, \Psi (w^{\prime },A^{\prime }) \\
=& \mathcal{L}^{\pi }\Psi (w^{\prime },A^{\prime })+\left[ r\frac{w-b(c-A)}{1-b/\bar{a}
}-\frac{w-b(c-A)}{\bar{a}-b}\right] \Psi _{w}(w^{\prime },A^{\prime })\geq 0,
\end{split}
\end{equation}
because $\mathcal{L}^{\pi }\Psi (w^{\prime },A^{\prime })\geq 0$, $\Psi _{w}(w^{\prime
},A^{\prime })\leq 0$, and $\dd{r \, \frac{w-b(c-A)}{1-b/\bar{a}}-\frac{w-b(c-A)}{
\bar{a}-b}\leq 0}$.  Thus, $\Psi(w,A)$ satisfies Condition $1$ of the Verification Theorem for $(w,A) \in \mathcal{D}_2$. 

Next, consider Conditions 2 and 3 for $(w,A) \in \mathcal{D}_2$.  From \eqref{1} and \eqref{3}, Condition 2 holds for $(w, A)$ if and only if
\begin{equation}
\bar a \left[ - \, \frac{b}{\bar{a}-b} \, \Psi _{w}(w^{\prime },A^{\prime })  + \frac{1}{\bar{a}-b} \, \Psi _{A}(w^{\prime },A^{\prime }) \right] \le - \, \frac{b \bar a}{\bar a - b} \, \Psi _{w}(w^{\prime },A^{\prime }) + \frac{\bar a}{\bar a - b}  \, \Psi _{A}(w^{\prime },A^{\prime }),
\end{equation}
which is true with equality.  Thus, we conclude that Condition 2 holds with equality for $(w, A) \in \mathcal{D}_2$.  Finally, because $0 < p < p^* \le 1$, $\Psi_w \le 0$, and $\Psi_A \le 0$, it follows that Condition 3 also holds on $\mathcal{D}_2$.
\end{proof}

Therefore, $\Psi (w,A)$ is the minimum probability of ruin by the Verification Theorem \ref{lem:verf-lemma}, and we present the following theorem that summarizes the work of this section.

\begin{thm}
When $p<p^{\ast }$ and the borrowing restriction is enforced, the minimum probability of ruin for $(w,A)\in \mathcal{D}=\mathcal{D}_{1}\cup \mathcal{D}_{2}$, with $\mathcal{D}_{1}=\{(w,A):0\leq w\leq w_{b}(A),0\leq A<c\}$ and $\mathcal{D}%
_{2}=\{(w,A):w_{b}(A)<w<w_{s}(A),0\leq A<c\}$, is given by $\Psi (w,A)$
defined above. The associated optimal strategy is:

\begin{enumerate}
\item to purchase additional annuity income so that wealth and annuity income lie on the boundary $w = b \cdot (c - A)$ of the region $\mathcal{D}_{1}$ when $(w,A)\in \mathcal{D}_{2}$;

\item to purchase additional annuity income instantaneously to keep $(w,A)$ in the region $\mathcal{D}_{1}$ when $w=w_{b}(A)$;

\item to surrender exisiting annuity income instantaneously to keep $w$ non-negative when needed;

\item to invest in the risky asset with amount 
\begin{equation*}
\pi ^*(w,A)=-\frac{\mu -r}{\sigma ^{2}}\frac{\Psi _{w}(w,A)}{\Psi _{ww}(w,A)%
},
\end{equation*}%
when $(w,A)\in \mathcal{D}_{1}$.
\end{enumerate}
\end{thm}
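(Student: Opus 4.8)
The plan is to combine Lemma~\ref{lem:4.1} with Corollary~\ref{cor:verf-lemma}. Lemma~\ref{lem:4.1} already shows that $\Psi$, defined by \eqref{5.57} on $\mathcal{D}_1$ and by \eqref{defiii} on $\mathcal{D}_2$, satisfies Conditions~1--4 of the Verification Theorem~\ref{lem:verf-lemma}; together with the monotonicity, non-negativity, and convexity of $\Psi$ in $w$ (which follow from Proposition~\ref{prop:psihat_concaveiii} and $y \ge y_b(A) > 0$), this gives $\Psi(w,A) \le \psi(w,A)$ on $\mathcal{D}$. It therefore remains to exhibit an \emph{admissible} investment and annuitization strategy whose associated probability of lifetime ruin is exactly $\Psi$; Corollary~\ref{cor:verf-lemma} then forces $\Psi = \psi$ and certifies that strategy as optimal, yielding the feedback form \eqref{iipi} and the stated purchase/surrender rules. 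The candidate is the one named in the theorem: from $(w,A) \in \mathcal{D}_2$, instantaneously purchase $\Delta A = (w - b(c-A))/(\bar{a}-b)$ of annuity income, landing on the purchase boundary $w = w_b(A) = b(c-A)$ inside $\overline{\mathcal{D}_1}$; once in $\overline{\mathcal{D}_1}$, invest $\pi^*(w,A) = -\frac{\mu-r}{\sigma^2}\,\Psi_w/\Psi_{ww}$ in the risky asset, reflect off the upper boundary $w = w_b(A)$ by purchasing annuity income (a continuous singular control $dA^+$ supported on $\{W_t = w_b(A_t)\}$, which pushes $(W,A)$ inward along the direction $(-\bar a,1)$), and reflect off the axis $w = 0$ by surrendering annuity income (a continuous singular control $dA^-$ supported on $\{W_t = 0\}$, pushing $(W,A)$ inward along $((1-p)\bar a,-1)$). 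Ruin is the hitting time $\tau_0$ of $(0,0)$.

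First I would record that this strategy is well defined and admissible. Since $\hat{\psi}$ is strictly concave in $y$ (the Appendix), $\Psi$ is $C^{2}$ and strictly convex in $w$ on $\mathcal{D}_1$, so $\pi^*$ is locally Lipschitz in the interior, and the Skorokhod-type reflected SDE for $(W_t,A_t)$ in $\overline{\mathcal{D}_1}$ with these two oblique reflections has a strong solution keeping $W_t \ge 0$ and $A_t \ge 0$; one checks $\int_0^t \pi_s^2\,ds < \infty$ a.s., so that $\tau_n \triangleq \{t: \int_0^t \pi_s^2\,ds \ge n\} \nearrow \infty$. For $(w,A) \in \mathcal{D}_2$, the initial lump purchase is harmless because the post-jump point $(w',A') = (w_b(A'),A')$ has $w' \ge 0$, so no ruin occurs at $t=0$, and the ruin probability from $(w,A)$ under this strategy coincides with the ruin probability from $(w',A')$ under the $\mathcal{D}_1$ strategy, which matches $\Psi(w,A) = \Psi(w',A')$ by \eqref{defiii}. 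Hence it suffices to treat $(w,A) \in \mathcal{D}_1$, and no further jumps occur after $t = 0$ since all subsequent control is continuous.

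For $(w,A) \in \mathcal{D}_1$, I would apply It\^o's formula for semimartingales to $e^{-\lambda^S t}\Psi(W_t,A_t)$ up to $\tau \triangleq \tau_0 \wedge \tau_n$, exactly as in the proof of Theorem~\ref{lem:verf-lemma}. The stochastic-integral term has zero expectation by the convexity and boundedness of $\Psi$, via the estimate behind \eqref{eq:sint}. The $dt$-term vanishes because $\mathcal{L}^{\pi^*_t}\Psi = \min_\pi \mathcal{L}^\pi \Psi = 0$ on the interior of $\mathcal{D}_1$ by \eqref{5.58}. The $d(A^+)^{(c)}$-term vanishes because $(A^+)^{(c)}$ increases only on $\{W_t = w_b(A_t)\}$, where $\bar{a}\Psi_w = \Psi_A$ by \eqref{BCiii1}; the $d(A^-)^{(c)}$-term vanishes because $(A^-)^{(c)}$ increases only on $\{W_t = 0\}$, where $(1-p)\bar{a}\Psi_w = \Psi_A$ by the Neumann condition \eqref{BCiii2}; and the jump sum is empty after $t = 0$. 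Hence $\E[e^{-\lambda^S \tau}\Psi(W_\tau,A_\tau)] = \Psi(w,A)$. Letting $n \to \infty$ and using $0 \le \Psi \le 1$ with dominated convergence gives $\E[e^{-\lambda^S \tau_0}\Psi(W_{\tau_0},A_{\tau_0})] = \Psi(w,A)$, where the left side is read as $0$ on $\{\tau_0 = \infty\}$. On $\{\tau_0 < \infty\}$ one has $(W_{\tau_0},A_{\tau_0}) = (0,0)$ and $\Psi(0,0) = 1$, so $\Psi(w,A) = \E[e^{-\lambda^S \tau_0}]$, which by \eqref{exp:ruin1} is the probability of lifetime ruin under this admissible strategy. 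By Corollary~\ref{cor:verf-lemma}, $\Psi = \psi$ on $\mathcal{D}$ and the strategy is optimal, which is the assertion of the theorem.

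I expect the main obstacle to be the rigorous construction and admissibility of the candidate strategy: one must solve the reflected controlled diffusion in the curvilinear region $\overline{\mathcal{D}_1}$ with oblique reflection along the slanted purchase boundary $w = b(c-A)$ and along the axis $w = 0$, accommodate the possible initial lump transition out of $\mathcal{D}_2$, and confirm $\tau_n \to \infty$ and $W_t, A_t \ge 0$. This rests on $\Psi$ being $C^{2,1}$ with $\Psi_{ww} > 0$ on $\mathcal{D}_1$ (hence on the Appendix proof that $\hat{\psi}$ is strictly concave), and on some care near the degenerate corner $A \to c^-$, where the region pinches to a point and ruin becomes impossible. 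Once this is in hand, the Dynkin computation above is routine, precisely because Lemma~\ref{lem:4.1} has arranged every boundary term to vanish with equality along the proposed controls.
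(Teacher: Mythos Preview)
Your proposal is correct and follows essentially the same approach as the paper: invoke Lemma~\ref{lem:4.1} to obtain $\Psi\le\psi$ via Theorem~\ref{lem:verf-lemma}, then appeal to Corollary~\ref{cor:verf-lemma} by exhibiting an admissible strategy whose ruin probability equals $\Psi$. The paper itself is terse here---it simply states ``Therefore, $\Psi(w,A)$ is the minimum probability of ruin by the Verification Theorem~\ref{lem:verf-lemma}'' immediately before the theorem, without spelling out the Dynkin computation or the admissibility of the reflected controlled diffusion; your write-up supplies precisely those details the paper leaves implicit.
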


\subsection{Numerical Examples}

In this section, we present numerical examples to demonstrate the results of Section \ref{sec:p-big} and \ref{sec:p-small}. The basic scenario is the same as in Section \ref{sec:num1}, and we focus on the role of the surrender penalty $p$.

\subsubsection{$p \geq p^*$}

Figures \ref{fig:2.1}-\ref{fig:2.4} show the ruin probabilities and associated optimal investment strategies when $p \geq p^*$.  We fix all the parameters except for annuity income $A$ and surrender penalty $p$.  Note that the smallest $p$ value of the selected is $0.258$, which is the value of $p^*$ for the scenario we chose.  The boundary $w=0$ and $w_s$ does not depend on $p$. Therefore, for each figure, all four curves have the same domain.  By showing the ruin probabilities and investment strategies for different $A$ and $p$, we see some common patterns as well as the effect of $p$.  Within each figure, the probabilities of ruin are decreasing and convex.  On the surrender boundary $w=0$, the curves of the ruin probabilities begin with different values, not necessarily $0$.  This occurs because if the individual has some annuity income, she surrenders some of it to avoid ruin when reaching that boundary.  We also observe that bigger $p$ results in higher probability of ruin. This is consistent with the financial intuition that an individual receives less wealth from surrendering annuity income when the penalty $p$ is bigger, as we also show in Proposition \ref{prop:iipi}.   

If $A$ is not $0$, reversibility makes difference in the ruin probability on the boundary $w=0$, and consequently on the whole ruin probability curve.  Reversibility of the annuity offers an extra chance to avoid bankrupcty.  This is demonstrated by the difference of ruin probabilities between $p=0.258$ and $p=1$ for given values of $(w, A)$.  Note that at $w=0$, the difference increases dramatically as $A$ increases.  When $A=0$, both ruin probabilities are $1$.  On the other hand, when $A=0.75$, the individual with the reversible annuity ($p = 0.258$) has only about a $25\%$ chance to ruin when her wealth is $0$ if she follows the optimal strategy.  By contrast, if the annuity is irreversible ($p = 1$), she ruins immediately when wealth is $0$ because the annuity is effectively worthless.  This gap in the ruin probabilities shrinks as wealth $w$ increases, and  the ruin probabilities associated with different $p$'s  converge to $0$ at $w=w_s(A)$, the safe level. 

The interesting phenomenon that the investment in the risky asset does not depend on $p$ is demonstrated in all figures, as we also show in Proposition \ref{prop:iipi}. This indicates that the individual invests in the risky asset as if the annuity is irreversible when $p \ge p^*$.  That is, we see a type of separation result:  optimal investment in the risky asset is independent of the optimal annuitization strategy when $p \ge p^*$.

\subsubsection{$p<p^*$}

Figures  \ref{fig:3.1}-\ref{fig:3.4} show the ruin probabilities and associated optimal investment strategies when $p<p^*$.  Recall from Section \ref{sec:p-small} that it is optimal to purchase immediate life annuities before wealth reaches the safe level.  Note that the largest value of $p$ we can choose is $0.258$.  It is natural to believe that one's behavior changes smoothly as penalty $p$ changes. This belief is confirmed in these figures. By observing the curves associated with  $p=0.258$ in Figures \ref{fig:2.1}-\ref{fig:2.4} and in Figures \ref{fig:3.1}-\ref{fig:3.4}, we conclude that the optimal investment strategies and ruin probabilities from the two different sets of equations are the same.  (We can also demonstrate this fact algebraically, but in the interest of space, we omit that computation.)

We see that the ruin probabilities in Figures \ref{fig:3.1}-\ref{fig:3.4} are all decreasing and convex.  The wealth domain for a given function in these figures is $[0, b \cdot (c - A)]$, and note that $b$ decreases as $p$ decreases because for a smaller surrender charge, the individual has more incentive to annuitize at a lower wealth level.   It remains true that, with all else equal, a smaller surrender charge $p$ results in a smaller probability of ruin.  Also, note that investment in risky asset increases as wealth increases, as in the case for which $p \ge p^*$.  However, different from what we see for $p \geq p^*$ case, the investment strategy is no longer independent of $p$.   More cash is invested in the risky asset if one can get a larger portion of her annuity value back by surrendering.

Figure \ref{fig:bp} demonstrates the relation between $b$ and the proportional surrender penalty $p$. The sign $*$ in the figure indicates the $b$ value of $\dfrac{1}{r+\lo}$. We see that $b$ increases monotonically and continuously from $0$ to $\dfrac{1}{r+\lo}$ as $p$ increases from $0$ to $p^*$, as we expect. 

\section{Conclusion}

The annuity puzzle has been widely noted both in practice and in theoretical work; see \cite{Milevsky_Young} and \cite{Milevsky_Moore_Young} and the references therein.  In this paper, we considered a financial innovation that might encourage more retirees to purchase immediate life annuities, namely the option to surrender one's annuity for cash.  We explained the relation between the irreversibility of annuitization and the retirees' reluctance to purchase.  We investigated how reversibility would affect the decision of a retiree seeking to minimize her lifetime probability of ruin.  We analyzed the optimal investment and annuitization strategies for such a retiree when borrowing against the surrender value of the annuity is prohibited.  We found that the individualÕs annuity purchasing strategy depends on the size of the proportional surrender charge.  When the charge is large enough, the individual will not buy a life annuity unless she can cover all her consumption, the so-called safe level.  When the charge is small enough, the individual will buy a life annuity at a wealth lower than this safe level.  In both cases, the individual only surrenders annuity income in order to keep her wealth non-negative.

These results confirm the point of view in \cite{Gardner_Wadsworth} that the lack of flexibility discourages retirees from purchasing immediate life annuities.  In our model, if annuities are irreversible, then retirees will buy annuities only when their wealth reaches the safe level.  Moreover, we showed that if annuities are reversible, then a retiree will partially annuitize if the surrender charge is low enough.  In numerical examples, we noticed that the threshold value of surrender charge for an individual to consider partial annuitization might be too low for annuity providers.  This perhaps explains why  reversible immediate life annuities are not offered in the annuity market.

The model in this paper offers a mathematical framework to understand the annuity puzzle.  Even though we assumed constant hazard rates and interest rate in our analysis, we believe that the main qualitative insight will be true in general and will be useful to develop better structured annuity products for retirees.  Our analysis also implies that  a well developed secondary market of annuities would benefit both potential annuity buyers and providers.

\section{Appendix}

In this appendix, we prove that the $\hat{\psi}$ given in Proposition \ref{prop:psihatiii} is concave thereby completing the proof of Proposition \ref{prop:psihat_concaveiii}. 

Take the second derivative of (\ref{dFBPiii}) with respect to $y$ to get
\begin{equation}\label{app1}
\hat{\psi}_{yy}(y,A)=D_1(A)B_1(B_1-1)y^{B_1-2}+D_2(A)B_2(B_2-1)y^{B_2-2}. 
\end{equation}
We want to show that $\hat{\psi}_{yy}(y,A)\leq 0$ for $y_b(A)\leq  y \leq y_0(A)$. Substitute (\ref{iiid1}) and (\ref{iiid2}) into (\ref{app1}), and define ${z} \triangleq y/y_b(A) \in [1,x] $, with $x$ defined by (\ref{iiix1}). Then, we get 
\begin{equation}\label{app2}
\begin{split}
\hat{\psi}_{yy}(y,A) \leq 0 \iff &(B_1-1)\left[ \frac{y}{y_b(A)} \right]^{B_1-B_2} \dfrac{1}{x^{B_1-B_2}-1} \left[ -b + \frac{1}{r} \left( 1 - x^{1 - B_2} \right) \right] \\
& + (1-B_2)\dfrac{x^{B_1-B_2}}{x^{B_1-B_2}-1}\left[ -b + \frac{1}{r} \left( 1 - x^{1 - B_1} \right) \right] \leq 0\\
\iff & (B_1-1)\left[ -b+\frac{1}{r}\left(1-x^{1-B_2}\right) \right] \left(\dfrac{z}{x}\right)^{B_1-B_2} \\
& + (1-B_2) \left[ -b+\frac{1}{r}\left(1-x^{1-B_1} \right ) \right] \leq0. \\
\end{split}
\end{equation}\label{app3}
Note that $B_1-1>0$ and $x^{B_2-1} < 1$. It follows that
\begin{equation}
 (B_1-1)\left[ -b+\frac{1}{r}\left(1-x^{1-B_2}\right) \right]<0.
\end{equation}
Hence the left-hand side of the last inequality in (\ref{app2}) reaches its maximum value when $z = 1$. So, to prove that $\hat \psi$ is concave with respect to $y$, it is sufficient to show that
\begin{equation}\label{app4}
(B_1-1)\left[ -b+\frac{1}{r}\left(1-x^{1-B_2}\right) \right] x^{B_2 - 1} \\
+(1-B_2) \left[ -b+\frac{1}{r}\left(1-x^{1-B_1} \right ) \right] x^{B_1 - 1} \leq0. \\
\end{equation}
Solve for $x^{B_2-1}$ from (\ref{iiix}); then, substitute into (\ref{app4}), which becomes 
\begin{equation}\label{app5}
\begin{split}
& (B_1-1)\left[ \left(\frac{1}{r}-b\right)\frac{B_1-B_2}{B_1-1} \frac{\lo+pr}{\lo} - \left(\frac{1}{r}-b \right) \frac{1-B_2}{B_1-1} x^{B_1-1} -\frac{1}{r}\right]\\
&+(1-B_2) \left[\left(\frac{1}{r}-b\right) x^{B_1-1}-\frac{1}{r}\right]\leq0\\
\iff& \left(\frac{1}{r}-b\right)(B_1-B_2)\frac{\lo+pr}{\lo} -\frac{1}{r} (B_1-B_2)\leq0  \iff b\geq \frac{p}{\lo+pr}.
\end{split}
\end{equation}

Therefore, if we show that $b \ge p/(\lo + pr)$, then we are done.  To this end, note that
\begin{equation}
b\geq \frac{p}{\lo+pr} \iff \frac{2 \alpha_3}{r} \left(1 - \frac{\lo}{\lo + pr} \right) - (\alpha_2 - \alpha_1) \le \sqrt{ (\alpha_2 - \alpha_1)^2 + 4 \alpha_3 \alpha_4 },
\label{app9}
\end{equation}
in which the $\alpha_i$ are given in \eqref{eq:alpha} for $i = 1, \dots, 4$.  The second inequality above holds automatically if its left-hand side is less than or equal to $0$.  Thus, suppose that the left-hand side is positive, and square both sides to get that $b \ge p/(\lo + pr)$ holds if
\begin{equation}\label{app10}
\alpha_4 + {1 \over r} \left( 1 - {\lo \over \lo + pr} \right) (\alpha_2 - \alpha_1) - {\alpha_3 \over r} \left(1 -  {\lo \over \lo + pr}  \right)^2 \ge 0.
\end{equation}
By substituting for the $\alpha_i$, $i = 1, \dots, 4$, by substituting for $\lo/(\lo + pr)$ via the following expression from \eqref{iiix} 
\begin{equation}\label{app11}
\frac{\lo}{\lo+pr}=\dfrac{(B_1-B_2) x^{1 - B_1} x^{1 - B_2}}{(B_1 - 1)x^{1 - B_1} + (1 - B_2)x^{1 - B_2}},
\end{equation}
and by simplifying carefully, we learn that \eqref{app10} is equivalent to
\begin{equation}\label{app13}
0 \leq \frac{\lo}{\lo + pr} - \frac{\lo}{\lo + r},
\end{equation}
which is true because $0<p  \le 1$.  We have proved that $b \ge p/(\lo+pr)$ and, thereby, that $\hat{\psi}$ is concave with respect to $y$.

\newpage

\begin{figure}
	\centering
		\includegraphics[angle=90, width=1.00\textwidth, height=0.9\textheight]{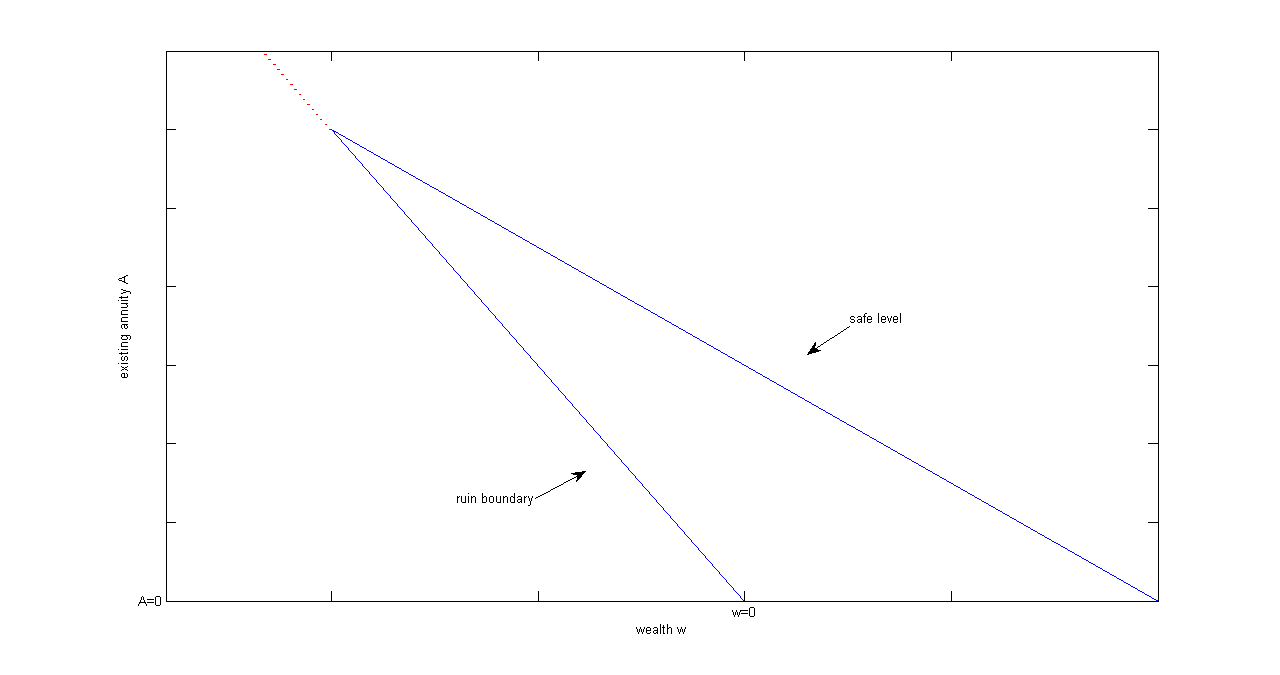}
	\caption{The region for solving minimum probability of ruin when borrowing against annuity is allowed}
	\label{fig:region}
\end{figure}

\begin{figure}
	\centering
		\includegraphics[angle=90, width=1.00\textwidth, height=0.9\textheight]{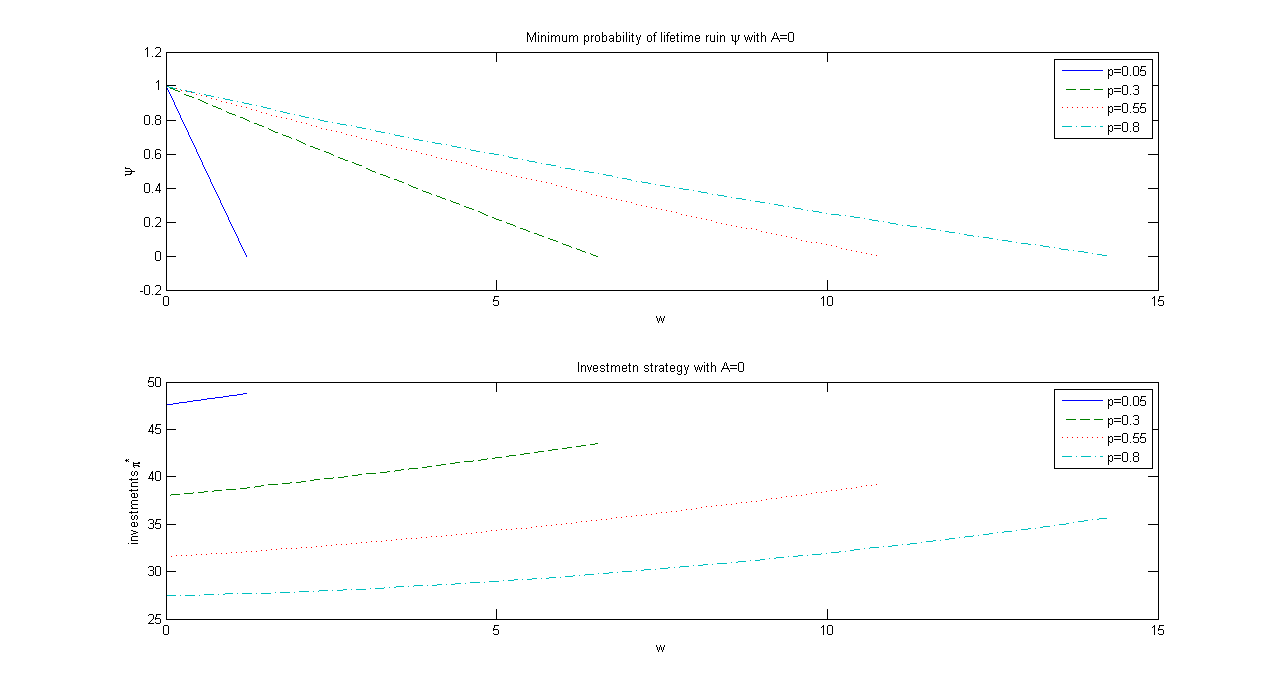}
	\caption{Ruin probabilities and optimal investment strategies for different $p$ when $A$ is $0$}
	\label{fig:1.1}
\end{figure}

\begin{figure}
	\centering
		\includegraphics[angle=90, width=1.00\textwidth, height=0.9\textheight]{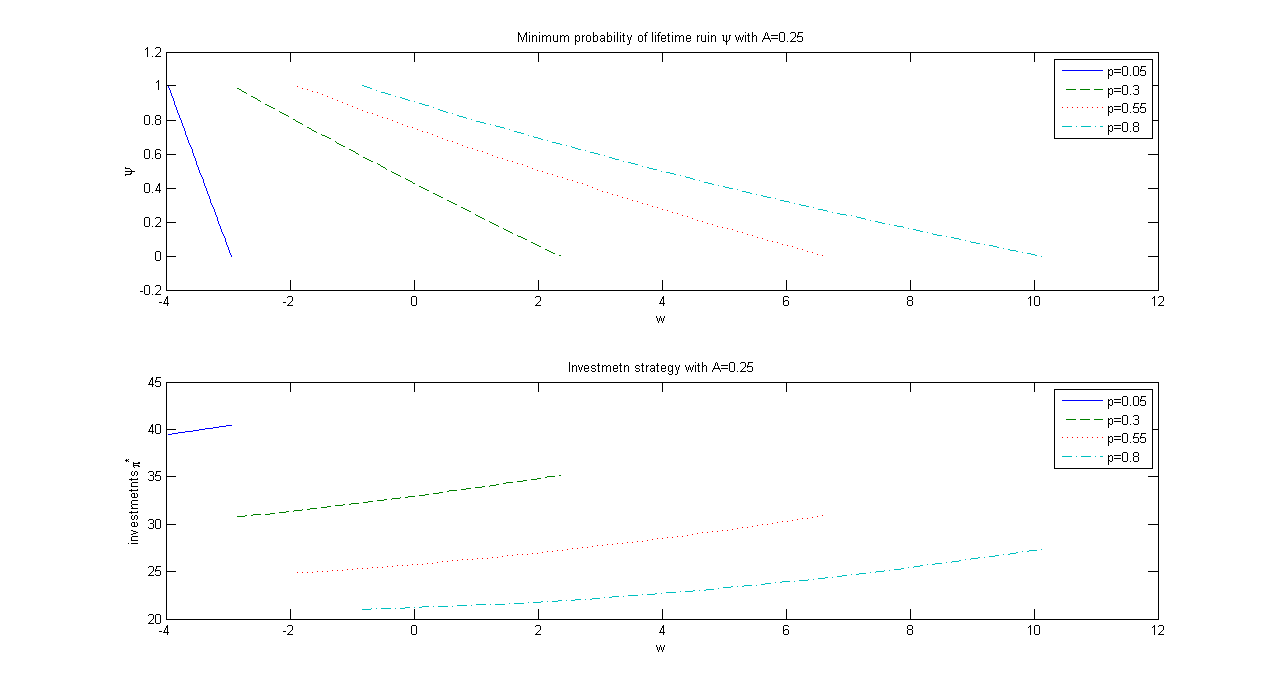}
	\caption{Ruin probabilities and optimal investment strategies for different $p$ when $A$ is $0.25$}
\label{fig:1.2}
\end{figure}

\begin{figure}
	\centering
		\includegraphics[angle=90, width=1.00\textwidth, height=0.9\textheight]{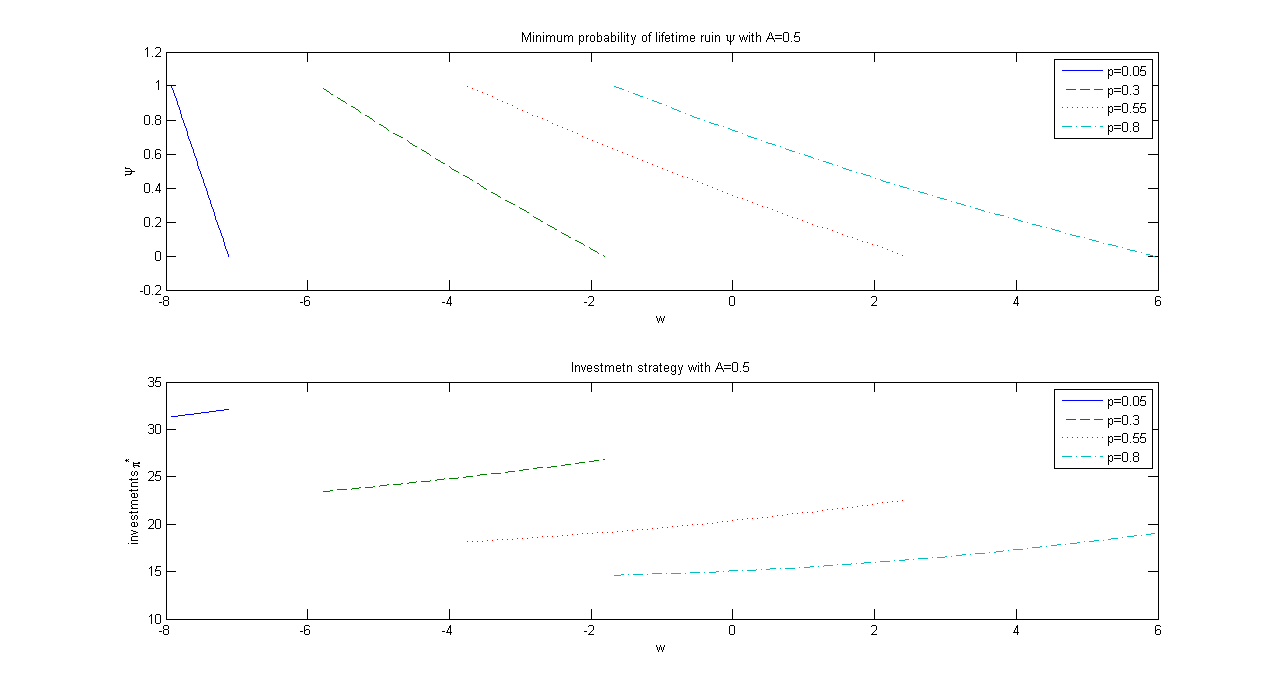}
	\caption{Ruin probabilities and optimal investment strategies for different $p$ when $A$ is $0.5$}
	\label{fig:1.3}
\end{figure}

\begin{figure}
	\centering
		\includegraphics[angle=90, width=1.00\textwidth, height=0.9\textheight]{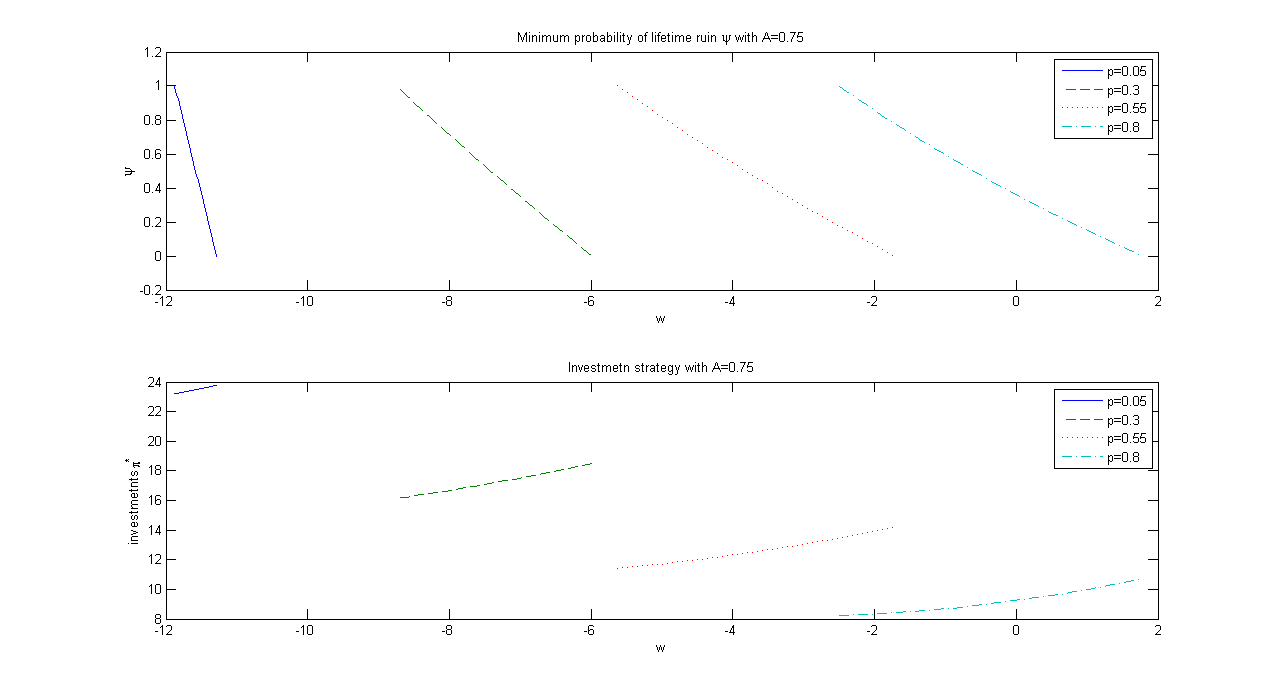}
	\caption{Ruin probabilities and optimal investment strategies for different $p$ when $A$ is $0.75$}
	\label{fig:1.4}
\end{figure}

\begin{figure}
	\centering
		\includegraphics[angle=90, width=1.00\textwidth, height=0.9\textheight]{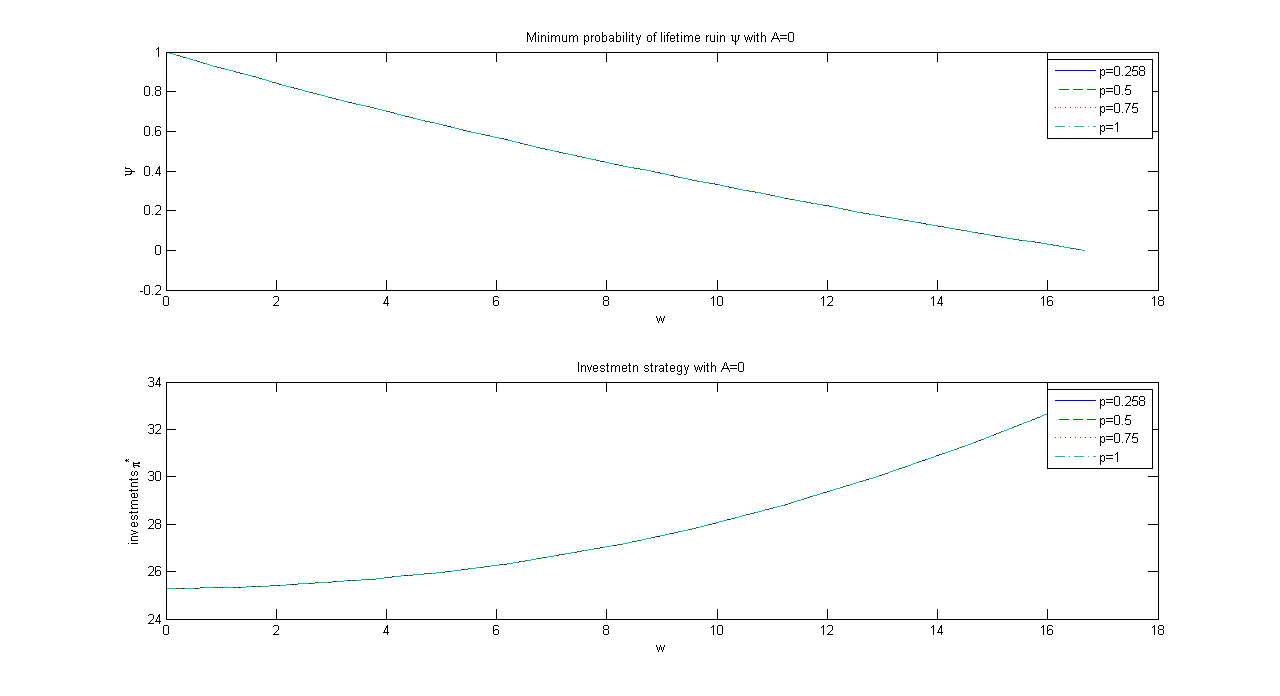}
	\caption{Ruin probabilities and optimal investment strategies for different $p$ when $A$ is $0$}
	\label{fig:2.1}
\end{figure}

\begin{figure}
	\centering
		\includegraphics[angle=90, width=1.00\textwidth, height=0.9\textheight]{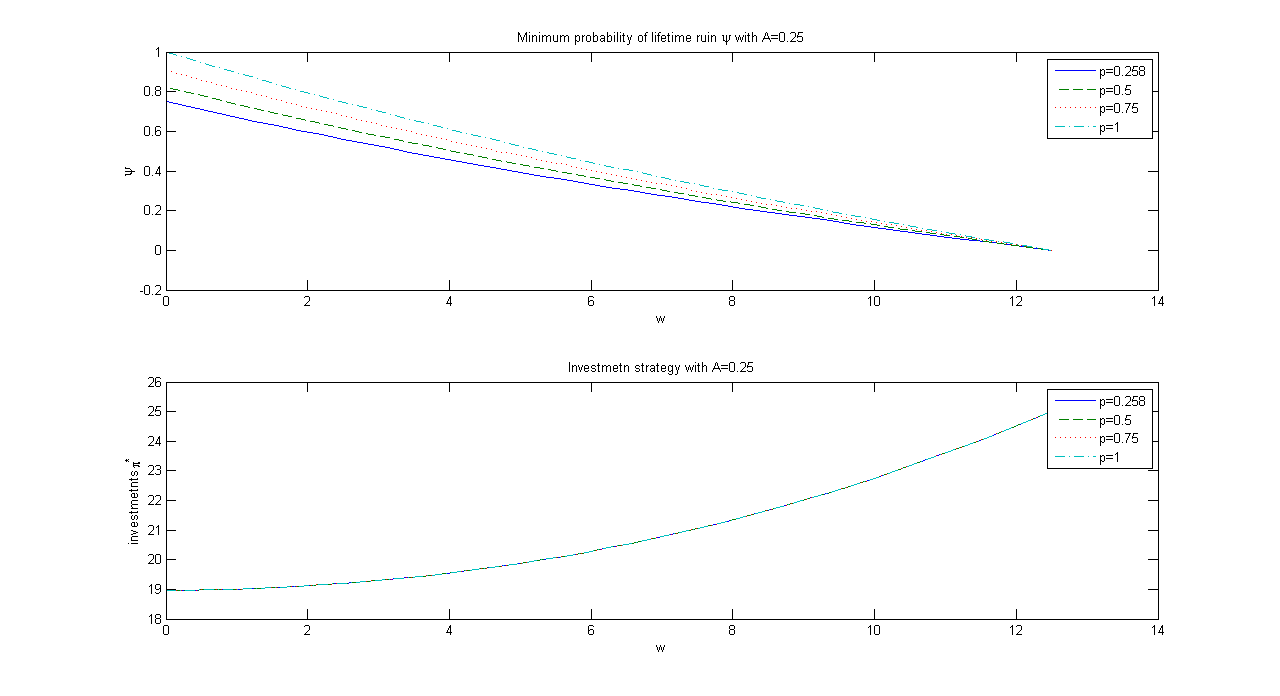}
	\caption{Ruin probabilities and optimal investment strategies for different $p$ when $A$ is $0.25$}
	\label{fig:2.2}
\end{figure}

\begin{figure}
	\centering
		\includegraphics[angle=90, width=1.00\textwidth, height=0.9\textheight]{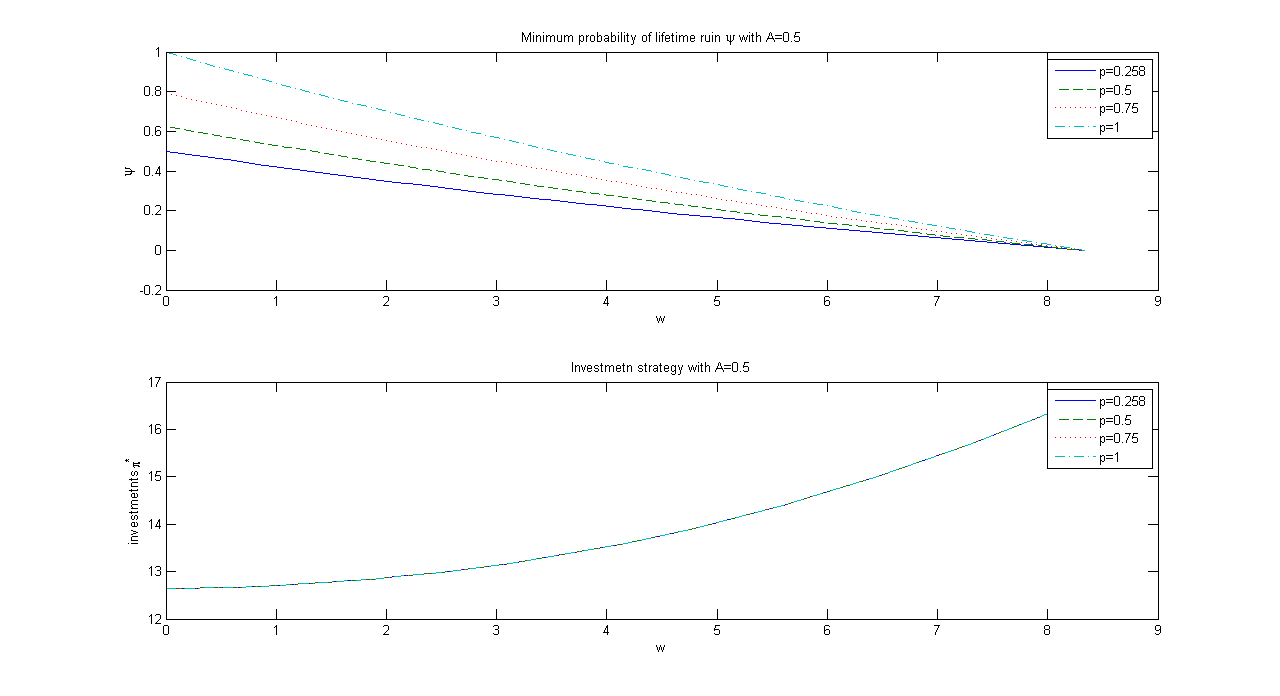}
	\caption{Ruin probabilities and optimal investment strategies for different $p$ when $A$ is $0.5$}
	\label{fig:2.3}
\end{figure}

\begin{figure}
	\centering
		\includegraphics[angle=90, width=1.00\textwidth, height=0.9\textheight]{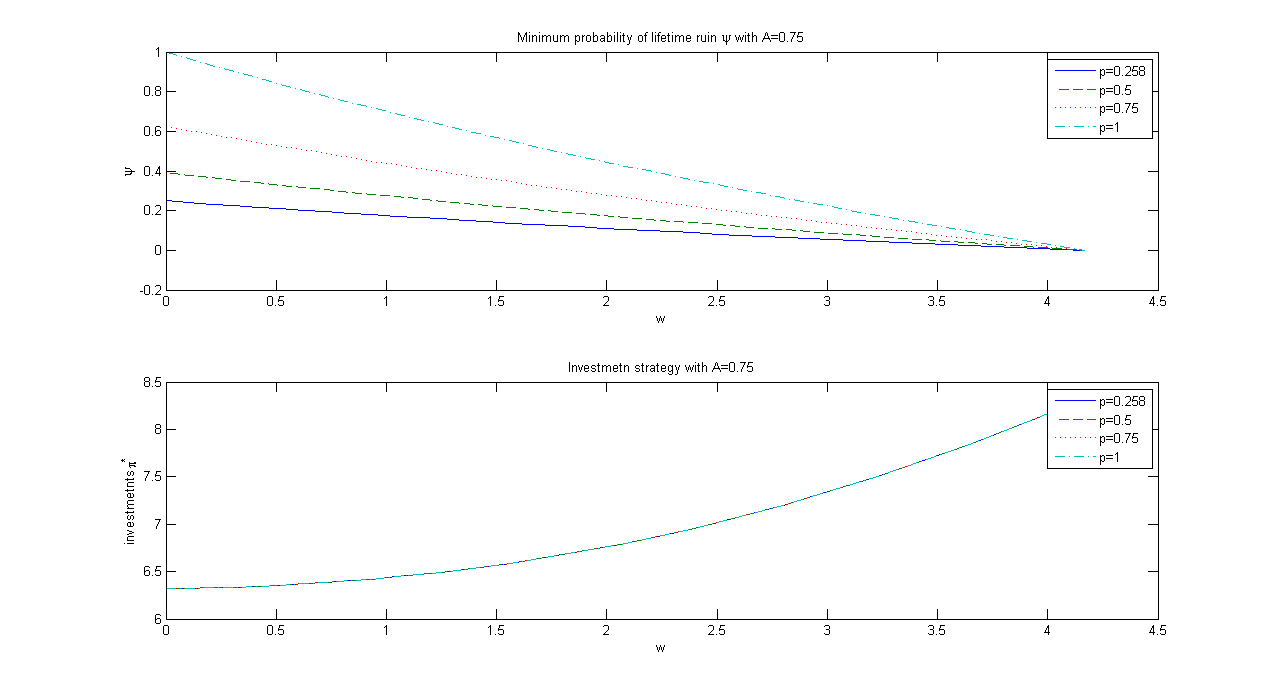}
	\caption{Ruin probabilities and optimal investment strategies for different $p$ when $A$ is $0.75$}
	\label{fig:2.4}
\end{figure}

\begin{figure}
	\centering
		\includegraphics[angle=90, width=1.00\textwidth, height=0.9\textheight]{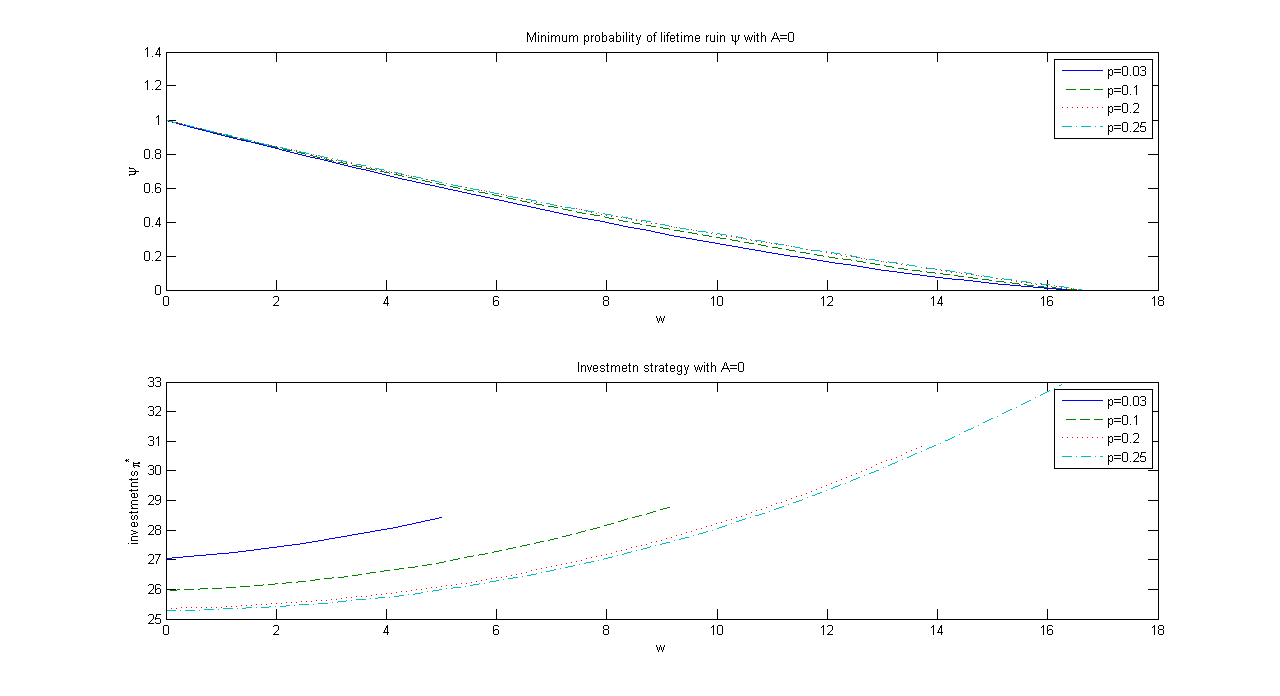}
	\caption{Ruin probabilities and optimal investment strategies for different $p$ when $A$ is $0$}
	\label{fig:3.1}
\end{figure}

\begin{figure}
	\centering
		\includegraphics[angle=90, width=1.00\textwidth, height=0.9\textheight]{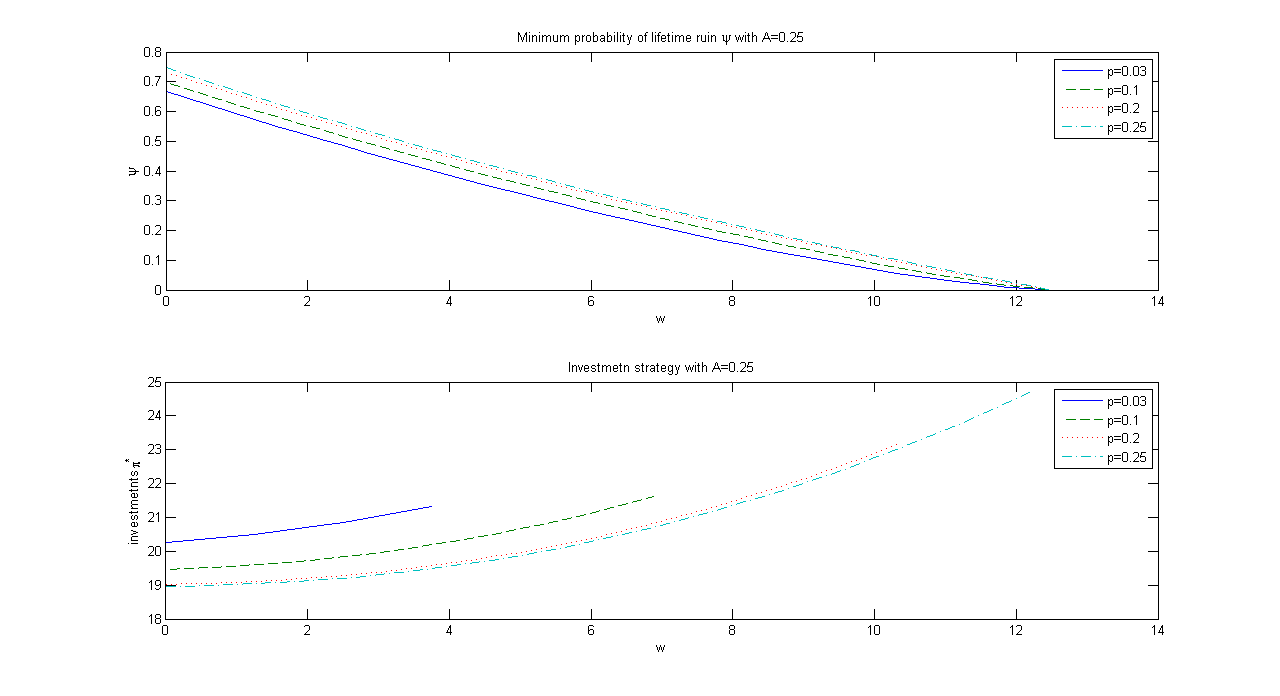}
	\caption{Ruin probabilities and optimal investment strategies for different $p$ when $A$ is $0.25$}
	\label{fig:3.2}
\end{figure}

\begin{figure}
	\centering
		\includegraphics[angle=90, width=1.00\textwidth, height=0.9\textheight]{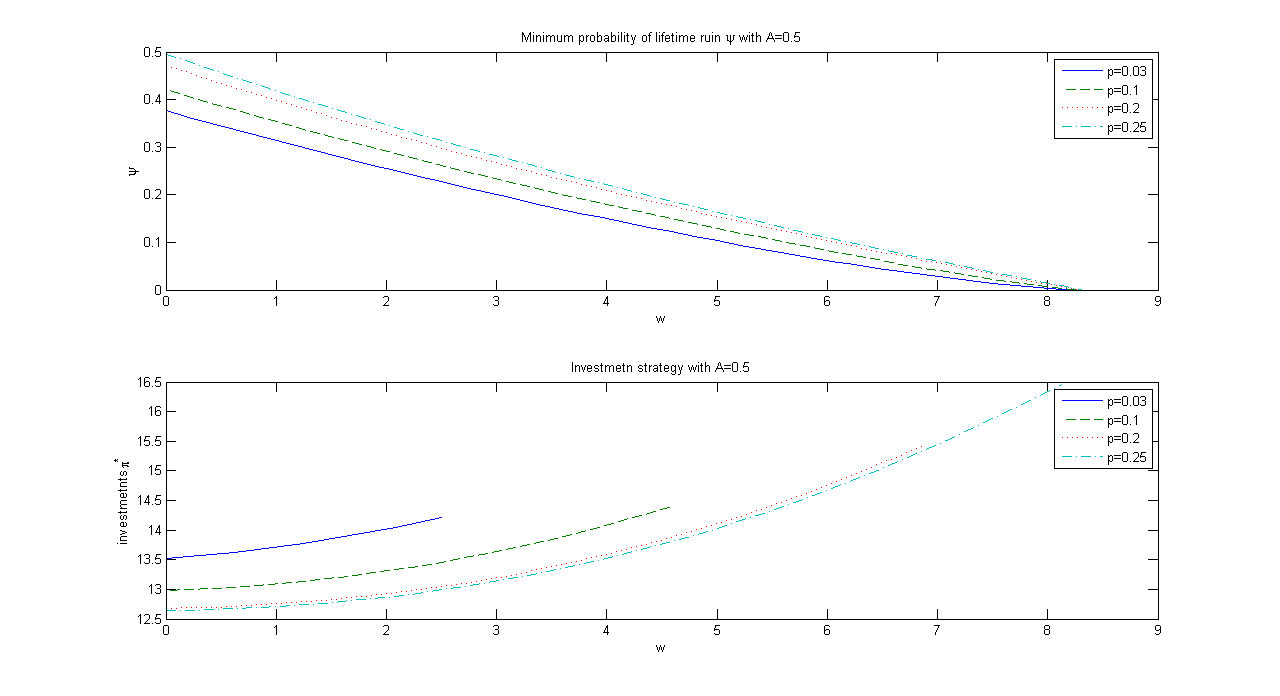}
	\caption{Ruin probabilities and optimal investment strategies for different $p$ when $A$ is $0.5$}
	\label{fig:3.3}
\end{figure}

\begin{figure}
	\centering
		\includegraphics[angle=90, width=1.00\textwidth, height=0.9\textheight]{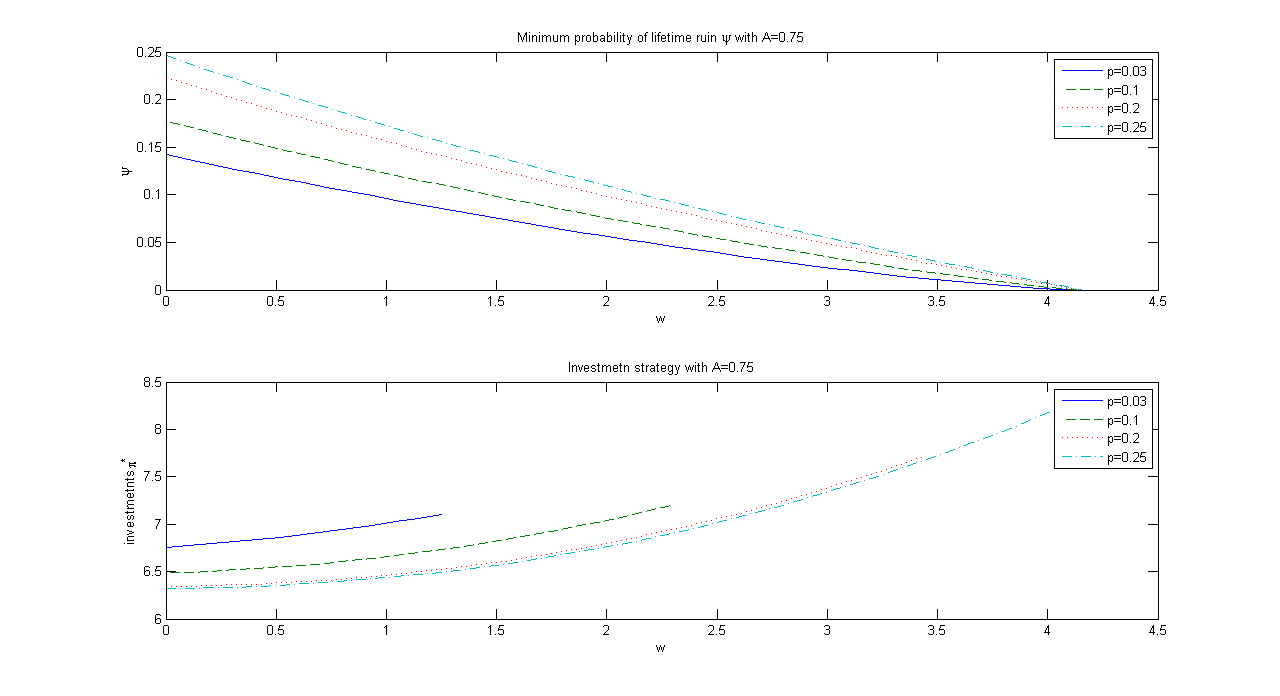}
	\caption{Ruin probabilities and optimal investment strategies for different $p$ when $A$ is $0.75$}
	\label{fig:3.4}
\end{figure}

\begin{figure}
	\centering
		\includegraphics[width=0.70\textwidth, height=0.7\textheight]{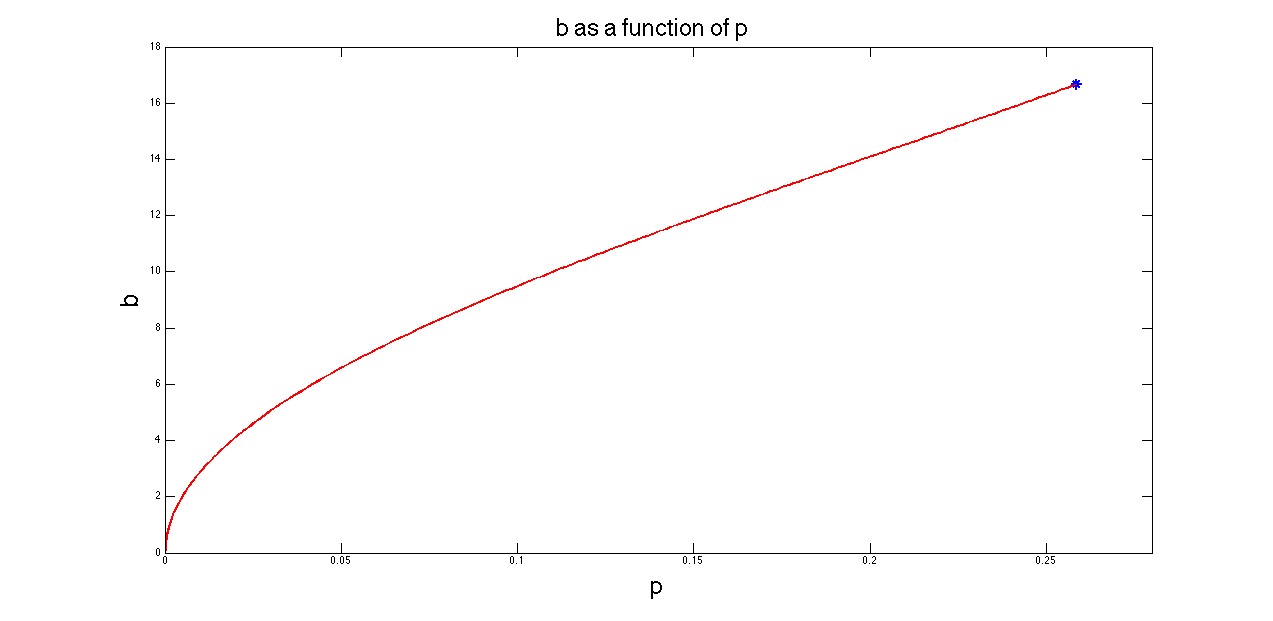}
	\caption{value of $b$ with different $p$}
	\label{fig:bp}
\end{figure}

\bibliography{references}

\begin{thebibliography}{12}
\providecommand{\natexlab}[1]{#1}
\providecommand{\url}[1]{\texttt{#1}}
\expandafter\ifx\csname urlstyle\endcsname\relax
  \providecommand{\doi}[1]{doi: #1}\else
  \providecommand{\doi}{doi: \begingroup \urlstyle{rm}\Url}\fi

\bibitem[Bayraktar and Young(2007)]{Bayraktar_Young3}
E.~Bayraktar and V.~R. Young.
\newblock Minimizing the probability of lifetime ruin under borrowing
  constraints.
\newblock \emph{Insurance: Mathematics and Economics}, 40\penalty0
  (3):\penalty0 435--444, 2007.

\bibitem[Bayraktar and Young(2008)]{Bayraktar_Young2}
E.~Bayraktar and V.~R. Young.
\newblock Minimizing the probability of ruin when consumption is ratcheted.
\newblock \emph{North American Actuarial Journal}, 12\penalty0 (4):\penalty0
  428--442, 2008.

\bibitem[Bayraktar and Young(2009)]{Bayraktar_Young}
E.~Bayraktar and V.~R. Young.
\newblock Minimizing the probability of lifetime ruin with deferred life
  annuities.
\newblock \emph{North American Actuarial Journal}, 13\penalty0 (1):\penalty0
  141--154, 2009.

\bibitem[Bayraktar et~al.(2008)Bayraktar, Moore, and
  Young]{Bayraktar_Moore_Young}
E.~Bayraktar, K.~S. Moore, and V.~R. Young.
\newblock Minimizing the probability of lifetime ruin under random consumption.
\newblock \emph{North American Actuarial Journal}, 12\penalty0 (4):\penalty0
  384--400, 2008.

\bibitem[Davidoff et~al.(2005)Davidoff, Brown, and
  Diamond]{Davidoff_Brown_Diamond}
T.~Davidoff, J.~Brown, and P.~Diamond.
\newblock Annuities and individual welfare.
\newblock \emph{The American Economic Review}, 95\penalty0 (5):\penalty0
  1573--1590, 2005.

\bibitem[Gardner and Wadsworth(2004)]{Gardner_Wadsworth}
J.~Gardner and M.~Wadsworth.
\newblock Who would buy an annuity? an empirical investigation.
\newblock \emph{Watson Wyatt Technical Paper No. 2004-4}, 2004.

\bibitem[Milevsky and Robinson(2000)]{Milevsky_Robinson}
M.~A. Milevsky and C.~Robinson.
\newblock Self-annuitization and ruin in retirement.
\newblock \emph{North American Actuarial Journal}, 4\penalty0 (4):\penalty0
  112--129, 2000.

\bibitem[Milevsky and Young(2007)]{Milevsky_Young}
M.~A. Milevsky and V.~R. Young.
\newblock Annuitization and asset allocation.
\newblock \emph{Journal of Economic Dynamics and Control}, 31\penalty0
  (9):\penalty0 3138--3177, 2007.

\bibitem[Milevsky et~al.(2006)Milevsky, Moore, and Young]{Milevsky_Moore_Young}
M.~A. Milevsky, K.~S. Moore, and V.~R. Young.
\newblock Asset allocation and annuity-purchase strategies to minimize the
  probability of financial ruin.
\newblock \emph{Mathematical Finance}, 16\penalty0 (4):\penalty0 647--671,
  2006.

\bibitem[Richard(1975)]{Richard}
S.~Richard.
\newblock Optimal consumption, portfolio and life insurance rules for an
  uncertain lived individual in a continuous time model.
\newblock \emph{Journal of Financial Economics}, 2:\penalty0 187--203, 1975.

\bibitem[Yaari(1965)]{Yaari}
M.~E. Yaari.
\newblock Uncertain lifetime, life insurance and the theory of the consumer.
\newblock \emph{Review of Economic Study}, 32:\penalty0 137--150, 1965.

\bibitem[Young(2004)]{Young}
V.~R. Young.
\newblock Optimal investment strategy to minimize the probability of lifetime
  ruin.
\newblock \emph{North American Actuarial Journal}, 8\penalty0 (4):\penalty0
  105--126, 2004.

\end{thebibliography}
\bibliographystyle{abbrvnat}

\end{document}